\newtheorem{theorem}{Theorem}
\newtheorem{lemma}{Lemma}
\newtheorem{proposition}{Proposition}
\numberwithin{equation}{section}
\numberwithin{theorem}{section}
\numberwithin{lemma}{section}
\numberwithin{proposition}{section}
\begin{document}
\title{SCATTERING OF PARTICLES BOUNDED TO AN INFINITE PLANAR CURVE}
\author{J. Dittrich\\
Nuclear Physics Institute, Czech Academy of Sciences\\
CZ-250 68 \v{R}e\v{z}, Czech Republic\\
dittrich@ujf.cas.cz}
\date{}
\maketitle
\abstract{
Non-relativistic quantum particles bounded to a curve in ${\mathbb R}^2$ by attractive contact $\delta$-interaction are considered. The interval between the energy of the transversal
bound state and zero is shown to belong to the absolutely continuous spectrum,
with possible embedded eigenvalues. The existence of the wave operators is proved for the  mentioned energy interval using the Hamiltonians with the interaction supported by the straight lines as the free ones.
Their completeness is not proved.
The curve is assumed $C^3$-smooth, non-intersecting, unbounded, asymptotically approaching two different half-lines (non-parallel or parallel but excluding the "U-case"). Physically, the system can be considered as a model of long nanostructural channel.
}
\vspace{0.5cm}

\noindent
{\em Keywords:} curve supporting delta interaction; absolutely continuous spectrum; scattering; wave operator.
\\
{\em Mathematics Subject Classification 2020:} 81Q10, 81U99, 47B10.

\section{Introduction}

The contact interactions supported by curves or other low-dimensional structures are of interest for more than 20 years, with some renewed interest during the recent years.  Let us give only a sample of references
\cite{BEKS}-\cite{BHMP_arxiv_2019}.
Mostly bound states and compact curves or surfaces were studied.

We consider here the case of a planar unlimited curve supporting transversal attractive $\delta$-interaction of the strength $\alpha$. Except of the trivial case of the straight line, there exist isolated bound states of energy below $-\alpha^2/4$
while the interval $[-\alpha^2/4,\infty)$ belongs to the essential spectrum \cite{EI2001}. For the curves very closed to the straight line there exists just one discrete eigenvalue \cite{EK2015}. We are interested in the scattering of particles with energies $(-\alpha^2/4,0)$ along the curve.
This problem was studied in \cite{EK2005} for the case of curve different from a line in a compact set only. Here we consider $C^3$-smooth curve $\Gamma$ which is asymptotically approaching two half-lines at the infinity,
\begin{eqnarray*}
\Gamma(s)\thickapprox a_{\pm} + v_{\pm}s \quad {\rm for} \quad s \to \pm\infty \quad ,
\\
a_{\pm}, v_{\pm} \in {\mathbb R}^2 \quad , \quad |v_{\pm}|=1 \quad , \quad v_+ \not= -v_- \quad.
\end{eqnarray*}
The curve is assumed non-intersecting, parameterized by its length $s\in {\mathbb R}$ measured from a conventionally chosen point. The asymptotic conditions at $s\to \pm \infty$ more precisely specified below
are rather severe and probably not the optimal.

We show that the interval $(-\alpha^2/4,0)$ belongs to the absolutely continuous spectrum,
the existence of the embedded eigenvalues is not excluded. Further, we show the existence of the wave operators corresponding to the particles
in this energy range incoming or outgoing along the asymptotic half-lines. The completeness of the wave operators is not proved.

The proof of the absolute continuity of the spectrum follows the pattern of that for the Agmon-Kato-Kuroda theorem \cite{Agmon1975}. Essentially, it is based on the explicit verification of the
limiting absorption principle for the considered model. Recently, the limiting absorption principle for the Schr\"{o}dinger operators with contact interactions on compact surfaces has been considered in \cite{MPS2017}
and further generalized to the Dirac operators in \cite{BHMP_arxiv_2019}. As these papers, we use Krein-type resolvent formula and the weighted Lebesgue and Sobolev spaces. However, for the non-compact interaction support we cannot use theorems on compact embedding
but the assumptions on the curve asymptotics and some explicit estimates are sufficient for our results.

The existence of the wave operators is proved with the help of Kuroda-Birman theorem, i.e. essentially by proving that the full and free resolvent difference is a trace-class operator. Motivated by physics, we are interested
in the scattering of particles roughly speaking bounded near the curve, i.e. a spectral projector to the negative energies and also projectors to the direction of motion (momentum spectral projectors)
enter the definition of the wave operators and the assumptions of the Kuroda-Birman theorem. The presence of the projectors is substantially used together with the curve asymptotics at the technical level at the least.
This is the reason why our proofs do not give the completeness of the wave operators which has been proved for similar models with the compact interaction support (e.g., Section 2.4 of \cite{EK2005}, Corollary 4.13 of \cite{MPS2016}, Theorem 5.6 of \cite{BMN2017} and Theorem 4.1 of \cite{BHMP_arxiv_2019}).

The model is defined and the assumptions are formulated in the next section. Some properties of the model are reviewed in Sections \ref{curvesection}-\ref{essspectsection}. The relations
$(-\alpha^2/4,0) \subset \sigma_{ac}$ and $(-\alpha^2/4,0) \cap \sigma_{sc} = \emptyset$ are proved in Section \ref{acsection} with the result formulated in Theorem~\ref{spectralac}.
The existence of the wave operators is proved in Section \ref{wosection} with the result in Theorem \ref{woexistence}. Slight generalizations of the Pearson and Kuroda-Birman theorems used in the
proof are given in Appendix B. Section \ref{conclusionsection} contains brief concluding remarks.

\section{The model definition}
\label{modeldefsection}
We consider non-relativistic quantum mechanical particles moving in a plane with a contact interaction supported by a curve. Hilbert space of states of our system is the space $L^2({\mathbb R}^2)$.
The Hamiltonian is a self-adjoint operator $H$ associated with the sesquilinear form
\begin{equation}
\label{H_form}
q(f,g)=\int_{{\mathbb R}^2} \overline{\nabla f} \cdot \nabla g \, d^2 x - \alpha \int_{\mathbb R} \overline{f(\Gamma(s))} g(\Gamma(s)) \, ds \quad.
\end{equation}
The domain of the form is ${\mathcal D}(q) = H^{1,2}({\mathbb R}^2)$ and in the integral over curve we identify the value and the trace of functions on the (graph of) the curve for notational simplicity. The form $q$ is closed and below bounded \cite{BEKS} so $H$ is a well defined self-adjoint operator. We consider only the attractive case $\alpha > 0$.

We assume the following on the curve $\Gamma$.
\vspace{0.5cm}

\noindent
{\bf Assumption 1.} The curve $\Gamma: {\mathbb R} \to {\mathbb R}^2$ satisfies $\Gamma \in C^2$, $|\Gamma'(s)| = 1$ for $s\in{\mathbb R}$.
Let us write
\begin{eqnarray}
\label{nudef}
\Gamma(s) = a_\pm + v_\pm s + \nu_\pm(s) \quad,
\\
\label{phidef}
\varphi_-(s)=\sup_{t\leq s} |\nu_-'(t)| \quad , \quad \varphi_+(s)=\sup_{t\geq s} |\nu_+'(t)| \quad .
\end{eqnarray}
with some $a_-,a_+,v_-,v_+ \in {\mathbb R}^2$,
\begin{equation}
\label{vnorm}
|v_-| = |v_+| = 1 \quad
.
\end{equation}
There exist $R>0$ and $\delta>3$ such that
\begin{equation}
\label{phiLdelta}
\varphi_-\in L^2_\delta ((-\infty,-R)) \quad , \quad \varphi_+\in L^2_\delta ((R,\infty)) \quad .
\end{equation}
\vspace{0.5cm}

\noindent
Here $L^2_\delta$ are standard weighted Lebesgue spaces, e.g.
$$
L^2_\delta((R,\infty)) = \{ f: (R,\infty)\to {\mathbb C} \, | \, f \,{\rm measurable,} \, \int_R^\infty (1+s^2)^\delta \, |f(s)|^2  \, ds < \infty \}
$$
identifying the functions which are equal almost everywhere.

\vspace{0.5cm}
\noindent
{\bf Assumption 2.}
There exists $\rho \in (0,1]$ such that
\begin{equation}
\label{rhodef}
\rho |s-t| \leq |\Gamma(s)-\Gamma(t)|
\end{equation}
for every $s,t\in {\mathbb R}$.
\vspace{0.5cm}

\noindent
We shall see that under the other assumptions, the existence of $\rho$ is essentially equivalent to the non-intersecting of the curve.

\vspace{0.5cm}
\noindent
{\bf Assumption 3.} The curve $\Gamma$ has a uniformly bounded second derivative, i.e., $|\Gamma''| \in L^\infty({\mathbb R})$.
\vspace{0.5cm}

\noindent
The Assumption 3 means that the curve $\Gamma$ has a bounded extrinsic curvature
\begin{equation}
\label{curvature_def}
{\mathfrak K} = \Gamma_1' \Gamma_2'' - \Gamma_2' \Gamma_1''   \quad .
\end{equation}
It simplifies considerably mainly the study of the Hamiltonian domain but its necessity remains an open question. The following assumption we shall use for
a simple location of the essential spectrum but its necessity is also under the question.

\vspace{0.5 cm}
\noindent
{\bf Assumption 4.} The curve $\Gamma \in C^3(\mathbb R)$ and the curvature  together with its derivative vanish at the infinity in the sense that
\begin{equation}
\label{curv_der_decay}
\lim_{s\to \pm \infty} {\mathfrak K}(s)=0 \quad,\quad \lim_{s\to \pm \infty} {\mathfrak K}'(s)=0 \quad .
\end{equation}
\vspace{0.2cm}

\noindent
Notice that by the well known relations, (\ref{curv_der_decay}) is equivalent to
$$
\lim_{s\to\pm\infty} \nu_\pm''(s)=0 \quad,\quad \lim_{s\to\pm\infty} \nu_\pm'''(s)=0 \quad .
$$
\section{On the curve properties}
\label{curvesection}
We give some remarks on the assumed properties of the curve.
The all used assumptions are always mentioned in the text of propositions in this section as the whole set would be unnecessarily strong and as an alternative to the Assumption 2 is also given.

Evidently,
\begin{equation}
\rho |s-t| \leq |\Gamma(s)-\Gamma(t)| \leq |s-t|
\end{equation}
the first inequality being assumed and the second one following from the assumption $|\Gamma'(s)| = 1$ and saying only that a line is a geodesics in the Euclidean plane.
Equations (\ref{nudef}-\ref{phidef}) are just definition of the functions $\nu_\pm, \varphi_\pm$ and only assumption (\ref{phiLdelta}) gives them some contents.
\begin{proposition}
\label{curvelimits}
Under the Assumption 1, such $a_\pm$ can be chosen that
\begin{equation}
\label{nulim}
\lim_{s\to \pm \infty} \nu_\pm(s) = 0
\end{equation}
and then for any $1<\kappa<\delta + \frac{1}{2}$
(in particular for any $1<\kappa\leq 7/2$)
there exists $R_1>0$ such that
\begin{equation}
\label{nufall}
0\leq \varphi_\pm(\pm s)\leq s^{-\kappa} \quad, \quad |\nu'_\pm(\pm s)| \leq s^{-\kappa} \quad,\quad |\nu_\pm(\pm s)| \leq \frac{s^{1-\kappa}}{\kappa - 1}
\end{equation}
for $s\geq R_1$.
\end{proposition}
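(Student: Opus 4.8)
The plan is to reduce everything to a one-dimensional decay estimate for the monotone majorant functions $\varphi_\pm$ and then integrate; I describe the behaviour at $s\to+\infty$, the behaviour at $s\to-\infty$ being identical after the reflection $s\mapsto -s$.

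\emph{Step 1 — pointwise decay of $\varphi_+$ from (\ref{phiLdelta}).} I would first note that $\varphi_+$ is nonnegative, finite (since $|\nu_+'|=|\Gamma'-v_+|\le 2$ by (\ref{vnorm})) and, crucially, nonincreasing, because the supremum in (\ref{phidef}) runs over the shrinking rays $[s,\infty)$. A nonincreasing function in $L^2_\delta((R,\infty))$ cannot decay slower than $s^{-\delta-1/2}$: for $s>2R$ one has, by monotonicity,
\[
\varphi_+(s)^2\int_{s/2}^{s}(1+t^2)^\delta\,dt\;\le\;\int_{s/2}^{\infty}(1+t^2)^\delta\varphi_+(t)^2\,dt=:M(s),
\]
with $M(s)\to 0$ as $s\to\infty$ since it is the tail of a convergent integral. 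Using $\int_{s/2}^{s}(1+t^2)^\delta\,dt\ge(s/2)^{2\delta+1}$ gives $\varphi_+(s)\le 2^{\delta+1/2}M(s)^{1/2}s^{-\delta-1/2}$, and since $1<\kappa<\delta+\tfrac12$ this yields $s^{\kappa}\varphi_+(s)\to 0$; hence there is $R_1>0$ with $\varphi_+(s)\le s^{-\kappa}$ for $s\ge R_1$. Enlarging $R_1$ so that the analogous bound holds at the $-$-end as well, the first two inequalities of (\ref{nufall}) follow, the second one because $|\nu_\pm'(\pm s)|\le\varphi_\pm(\pm s)$ directly from (\ref{phidef}).

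\emph{Step 2 — choice of $a_\pm$ and decay of $\nu_\pm$.} Because $\nu_+'$ is continuous and $|\nu_+'(t)|\le t^{-\kappa}$ with $\kappa>1$ for $t\ge R_1$, it is absolutely integrable near $+\infty$, so $L_+:=\lim_{s\to\infty}\nu_+(s)$ exists. Replacing $a_+$ by $a_+ + L_+$ (and $\nu_+$ by $\nu_+ - L_+$) preserves the decomposition (\ref{nudef}) and leaves $\nu_+'$, hence $\varphi_+$, unchanged, while now $\nu_+(s)\to 0$, which is (\ref{nulim}). With this normalization, for $s\ge R_1$,
\[
\nu_+(s)=-\int_s^\infty\nu_+'(t)\,dt,\qquad |\nu_+(s)|\le\int_s^\infty t^{-\kappa}\,dt=\frac{s^{1-\kappa}}{\kappa-1},
\]
which is the last inequality of (\ref{nufall}). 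Doing the same at $s\to-\infty$ (using $|\nu_-'(-t)|\le t^{-\kappa}$ and the change of variable $t\mapsto -t$) fixes $a_-$ and finishes both ends. Finally, $\delta>3$ gives $\delta+\tfrac12>\tfrac72$, so every $\kappa\in(1,7/2]$ is admissible.

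The only non-routine ingredient is Step 1, i.e.\ passing from the weighted-$L^2$ (integral) information on $\varphi_\pm$ to a pointwise power bound; this works precisely because the definition (\ref{phidef}) builds monotonicity into $\varphi_\pm$, and without that no pointwise conclusion could be drawn. Everything else is elementary integration together with a harmless shift of the constants $a_\pm$, which does not affect $\nu_\pm'$ nor the hypothesis (\ref{phiLdelta}).
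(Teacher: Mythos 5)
Your proof is correct and follows essentially the same route as the paper: both arguments hinge on the monotonicity of $\varphi_\pm$ combined with the weighted $L^2$ bound (\ref{phiLdelta}) to force the pointwise decay $\varphi_\pm(\pm s)\le s^{-\kappa}$, and then obtain (\ref{nulim}) and the bound on $\nu_\pm$ by integrating $\nu_\pm'$ and absorbing the limit into $a_\pm$. The only difference is cosmetic — the paper runs the first step as a proof by contradiction over $\int_R^{s_1}$, while you argue directly via the tail $M(s)$ over $\int_{s/2}^{s}$ — so no further comment is needed.
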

\begin{proof} Let us give the proof for $s>0$, the case $s<0$ being analogical. By (\ref{phidef}), $\varphi_+$ is non-increasing. Let us fix $\kappa\in (1, \delta+\frac{1}{2})$ and assume that
the first statement (\ref{nufall}) does not hold, i.e.
$$
(\forall R_1>0)(\exists s_1\geq R_1)(\varphi_+(s_1)>s_1^{-\kappa}) \quad.
$$
Now
\begin{eqnarray*}
\infty > \int_R^\infty (1+s^2)^\delta \, |\varphi_+(s)|^2 \, ds \geq |\varphi(s_1)|^2 \int_R^{s_1} (1+s^2)^\delta \, ds
\\
\geq \frac{s_1^{-2\kappa}}{2\delta+1} (s_1^{2\delta+1}-R^{2\delta+1})\quad .
\end{eqnarray*}
Here the limit $s_1\to\infty$ can be taken over a suitable sequence of $s_1$ values and a contradiction is obtained. So the first statement (\ref{nufall}) is proved and the second follows from
definition (\ref{phidef}).

For $s_2>s_1>R_1$ we now obtain
$$
|\nu_+(s_2)-\nu_+(s_1)| = \left|\int_{s_1}^{s_2} \nu'(s) \, ds\right| \leq \int_{s_1}^{s_2} s^{-\kappa} \, ds = \frac{1}{\kappa-1}(s_1^{1-\kappa}-s_2^{1-\kappa})
$$
and the existence of finite limit $\lim_{s\to \infty} \nu_+(s)$ follows. Absorbing its value into $a_+$ we may have (\ref{nulim}) without loss of generality.
The last inequality (\ref{nufall}) then follows by taking the limit $s_2\to \infty$ in the above estimate.
\end{proof}

The strong assumptions (\ref{phiLdelta}) were used in the proof. However, the true motivation for them is the proof of compactness of some operator below. We shall always assume (\ref{nulim})
in the following.
\begin{proposition}
\label{assump_equivalence}
Let $\Gamma \in C^2$, $|\Gamma'|=1$, (\ref{nudef}), (\ref{vnorm}), (\ref{nulim}) and
\begin{equation}
\label{nu_prime_lim}
\lim_{s\to\pm \infty} \nu_\pm '(s)=0
\end{equation}
hold. Then the following statements are equivalent.\\
i) There exists $\rho>0$ such that $|\Gamma(s) - \Gamma(t)| \geq \rho |s-t|$ for every $s,t\in {\mathbb R}$.\\
ii) The curve $\Gamma$ is non-intersecting (i.e. $\Gamma(s)\not=\Gamma(t)$ for $s\not= t$) and $v_+\not=-v_-$.
\end{proposition}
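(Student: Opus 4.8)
The plan is to prove the equivalence by showing that ii) implies i), since the converse is essentially immediate: if i) holds then $\Gamma(s)\neq\Gamma(t)$ for $s\neq t$, and passing to the limit $\frac{1}{|s-t|}|\Gamma(s)-\Gamma(t)| \geq \rho$ along the diagonal direction, using $\Gamma(s) = a_- + v_- s + \nu_-(s)$ and $\Gamma(-s) = a_+ - v_+ s + \nu_+(-s)$ with $\nu_\pm \to 0$, forces $|v_+ + v_-| \geq 2\rho > 0$, hence $v_+ \neq -v_-$. So the real content is the implication ii) $\Rightarrow$ i).

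Assume ii). First I would establish a \emph{local} version of i): there exist $\varepsilon > 0$ and $\rho_0 > 0$ such that $|\Gamma(s) - \Gamma(t)| \geq \rho_0 |s-t|$ whenever $|s-t| \leq \varepsilon$. This follows from $|\Gamma'| = 1$ together with $\nu'_\pm \to 0$, which gives uniform continuity control: writing $\Gamma(s) - \Gamma(t) = \int_t^s \Gamma'(u)\,du$ and comparing with $(s-t)\Gamma'(t)$, the error is $o(|s-t|)$ uniformly in $t$ for $|t|$ large, while on any compact parameter set one uses $C^2$-smoothness and the non-intersecting hypothesis directly (the chord-to-arc ratio is continuous and positive). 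The second step is the \emph{asymptotic} regime: for $|s|, |t|$ both large (say $s, t \geq R_2$ or $s, t \leq -R_2$) and on the same asymptotic end, $\Gamma(s) - \Gamma(t) = v_\pm(s-t) + (\nu_\pm(s) - \nu_\pm(t))$, and since $|\nu_\pm(s) - \nu_\pm(t)| \leq \sup_{[t,s]}|\nu'_\pm| \cdot |s-t|$ can be made $\leq \tfrac12|s-t|$ for $R_2$ large, we get $|\Gamma(s)-\Gamma(t)| \geq \tfrac12|s-t|$ there. The third step handles $s$ large positive and $t$ large negative (opposite ends): here $\Gamma(s) - \Gamma(t) = (a_- - a_+) + v_- s - v_+ t + \nu_-(s) - \nu_+(t)$; since $v_+ \neq -v_-$, writing $s - t = L$ and parametrizing, the dominant term $v_- s - v_+ t$ has norm comparable to $L$ — concretely $|v_- s - v_+ t|^2 = s^2 + t^2 - 2 s t\, (v_-\cdot v_+) \geq (1 - |v_-\cdot v_+|)(s^2+t^2) \geq c\,L^2$ when $v_-\cdot v_+ \neq 1$, and the case $v_- \cdot v_+ = 1$, i.e. $v_+ = v_-$ (parallel, non-"U"), needs the separate observation that then $a_+ \neq a_-$ in the transverse direction or the curve would self-intersect — this is where the excluded "U-case" enters. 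The bounded perturbations $a_\pm, \nu_\pm$ are then absorbed for $L$ large.

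The remaining step is a compactness/gluing argument: the "dangerous" region is where one of $s,t$ lies in a bounded set and the other is arbitrary, or both lie in a bounded set with $|s-t|$ bounded below. I would argue by contradiction — if no uniform $\rho$ exists, take sequences $s_n, t_n$ with $|\Gamma(s_n) - \Gamma(t_n)|/|s_n - t_n| \to 0$; the three regimes above exclude the cases where both go to $\pm\infty$, and the local step excludes $|s_n - t_n| \to 0$, so after passing to a subsequence either both converge to finite limits $s_*, t_*$ with $s_* \neq t_*$, contradicting non-intersecting and continuity of the chord-to-length ratio, or one stays bounded while the other escapes to $\pm\infty$, in which case $|\Gamma(s_n) - \Gamma(t_n)| \to \infty$ like $|t_n|$ (as $|\Gamma(t_n)| \to \infty$) while $|s_n - t_n| \sim |t_n|$, so the ratio tends to $1$, again a contradiction. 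Finally one sets $\rho = \min$ of the finitely many constants produced and notes $\rho \leq 1$ automatically.

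The main obstacle I expect is the parallel sub-case $v_+ = v_-$ (with the "U-case" excluded): here the leading-order term $v_-s - v_+t = v_-(s-t)$ only controls the component of $\Gamma(s)-\Gamma(t)$ \emph{along} $v_-$ and is small compared to $|s-t|$ precisely when $s-t$ is small — but $s$ large positive and $t$ large negative forces $|s-t|$ large, so this is fine — the genuine subtlety is instead ensuring the \emph{transverse} separation of the two half-lines is bounded below, which is exactly the non-"U" hypothesis, and threading that hypothesis (which is not among Assumptions 1–4 verbatim but is part of the standing curve description) cleanly into the estimate. Care is also needed that all the "large $R$" thresholds in the three regimes are chosen consistently so the compact leftover region is genuinely compact in $(s,t)$ modulo the $|s-t|$ bound.
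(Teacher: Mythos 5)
Your overall architecture — same-end asymptotic regime, opposite-ends regime, one-bounded-one-escaping, and a compact leftover handled by continuity of the chord-to-arc ratio plus non-intersection — is exactly the paper's decomposition, and your converse direction (i)\,$\Rightarrow$\,(ii) via the boundedness of $\Gamma(s)-\Gamma(-s)$ when $v_+=-v_-$ is also the paper's argument. However, there is a genuine error in your treatment of the opposite-ends case. For $s>0>t$ the relevant quantity is $v_+s-v_-t$ (you have the subscripts swapped, but that is cosmetic), and your bound $|v_+s-v_-t|^2=s^2+t^2-2st\,(v_+\!\cdot v_-)\geq (1-|v_+\!\cdot v_-|)(s^2+t^2)$ is simply a lossy estimate that degenerates at $v_+\cdot v_-=1$; the case $v_+=v_-$ is not a genuine obstruction, since then $v_+s-v_-t=v_+(s-t)$ has norm exactly $|s-t|$ (note $-2st=2s|t|>0$, so the cross term \emph{helps} when $v_+\cdot v_->0$). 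Your proposed patch for this case — that the non-``U'' hypothesis forces a transverse separation $a_+\neq a_-$ which then rescues the estimate — is both unnecessary and unjustified: the ``U-case'' is $v_+=-v_-$, not a condition on $a_\pm$; a non-intersecting curve with $v_+=v_-$ and $a_+=a_-$ is perfectly possible (approach the same line from opposite sides); and no transverse separation enters the bound at all. The only degenerate direction is $v_+=-v_-$, which is excluded by hypothesis ii).

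The paper avoids this trap by a cleaner uniform argument: writing $\xi=(s+t)/(s-t)\in[-1,1]$, one has $|v_+s-v_-t|/(s-t)=\bigl|\tfrac12(v_++v_-)+\tfrac12(v_+-v_-)\xi\bigr|$, whose minimum over $\xi\in[-1,1]$ is shown (using $|v_+|=|v_-|$) to vanish only when $v_+=-v_-$; hence a single positive constant $c(v_-,v_+)$ works for every admissible pair of directions, including the parallel one. If you replace your dot-product estimate by this (or by the direct observation that the parallel case gives equality $|v_+s-v_-t|=|s-t|$), the rest of your proof — including the contradiction-style gluing, which is a legitimate alternative to the paper's explicit $\rho=\min(\cdot)$ bookkeeping — goes through.
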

\begin{proof} Let us assume i). Then $\Gamma$ is non-intersecting. Assume that  $v_+=-v_-$. Then for large $s>0$
$$
\Gamma(s)-\Gamma(-s) = a_+ - a_- + \nu_+(s) - \nu_-(-s)
$$
remains bounded in contradiction with the relation $|\Gamma(s) - \Gamma(-s)| \geq 2\rho s$. So necessarily $v_+\not=-v_-$ and ii) holds.

Assume now ii). There exists $s_1>0$ such that $|\nu_+'(s)|<\frac{1}{2}$ for $s>s_1$. For $s,t>s_1$ now
$$
|\Gamma(s)-\Gamma(t)|=|v_+(s-t) +\nu_+(s)-\nu_+(t)|\geq \frac{1}{2} |s-t| \quad.
$$
Enlarging $s_1$ if necessary, the same holds for $s,t<-s_1$.

Let us consider case $s>0$, $t<0$ now. Then $|s-t|=s-t=s+|t|$ and
$$
\xi = \frac{s+t}{s-t} \in [-1,1] \quad.
$$
$$
F(s,t)=\left|\frac{v_+s-v_-t}{s-t}\right|=\left|\frac{1}{2}(v_++v_-)+\frac{1}{2}(v_+-v_-)\xi\right|
$$
takes its minimum $c(v_-,v_+)$ at some $\xi\in [-1,1]$. If the minimum would be zero then $v_+(1+\xi)=-v_-(1-\xi)$. Taking into account that $|v_+|=|v_-|$ this gives
$\xi=0$ and minimum value $\frac{1}{2}|v_++v_-|=0$. However, this contradicts the assumption $v_+\not=-v_-$. So
$$
|v_+s-v_-t| \geq c(v_-,v_+) |s-t|, \quad c(v_-,v_+) >0
$$
for $s>0$, $t<0$.

There exist  $s_2 \geq c(v_-,v_+)^{-1}(|a_+ - a_-| + 2)$ such that for $s>s_2$, $t<-s_2$ is $|\nu_+(s)|<1$, $|\nu_-(t)|<1$.
For $s>s_2$, $t<-s_2$ now
\begin{eqnarray*}
|\Gamma(s)-\Gamma(t)| \geq |v_+s-v_-t| - |a_+ - a_-| - |\nu_+(s)| - |\nu_-(t)|
\\
\geq c(v_-,v_+) |s-t| - (|a_+ - a_-| + 2)
\\
 \geq \frac{1}{2} c(v_-,v_+) |s-t| + c(v_-,v_+) s_2 - (|a_+ - a_-| + 2)
\\
\geq \frac{1}{2} c(v_-,v_+) |s-t| \quad.
\end{eqnarray*}
By the symmetry, the same holds for $s<-s_2$, $t>s_2$.

There exists $s_3\geq \max(s_1, s_2)$, $s_3>2(|a_+| +1 + |\Gamma(0)|$) such that $|\nu_+(s)|<1$ for $s>s_3$. For $s>2 s_3$ and $|t|<s_3$ now
\begin{eqnarray*}
|\Gamma(s)-\Gamma(t)|\geq |\Gamma(s)-\Gamma(0)| - |\Gamma(t)-\Gamma(0)|
\\
\geq |a_+ + v_+ s +\nu_+(s) - \Gamma(0)| -s_3
\\
\geq s - |a_+| - 1 - |\Gamma(0)| - s_3 \geq \frac{1}{4} s > \frac{1}{6} |s-t| \,.
\end{eqnarray*}
Enlarging $s_3$ if necessary, we can obtain such estimate for every $|s|>2 s_3 \, , |t|<s_3$ and every $|s|<s_3 \, , |t|> 2 s_3$.

As the last case, consider function
$$
G(s,t) = \left|\frac{\Gamma(s) - \Gamma(t)}{s-t}\right|
$$
on the compact $[-2s_3,2s_3]\times [-2s_3,2s_3]$. As $\Gamma\in C^2$, G can be continuously extended to the values of $s=t$,
$$
G(s,s)=|\Gamma'(s)|=1 \quad.
$$
G takes its minimum $\rho_1$ which is nonzero as $\Gamma$ is non-intersecting by the assumption. As a result,
\begin{equation}
\label{rho1}
|\Gamma(s)-\Gamma(t)|\geq \rho_1 |s-t| \quad , \quad \rho_1>0
\end{equation}
for $|s|\leq 2 s_3$, $|t| \leq 2 s_3$.

Putting $\rho=\min(\frac{1}{2},\frac{1}{2}c(v_-,v_+),\frac{1}{6},\rho_1)$, the statement i) is proved.
\end{proof}
By Proposition \ref{curvelimits}, the assumption (\ref{nu_prime_lim}) is weaker then the corresponding part of Assumption 1. Let us repeat some properties
of the curve $\Gamma$ without distinguishing assumptions and their consequences for the summary.
\begin{eqnarray}
\Gamma(s) = a_\pm + v_\pm s +\nu_\pm(s) \quad , \quad |v_\pm|=1 \quad ,\quad v_ + \not= -v_- \quad ,
\\
\lim_{s\to\pm\infty} \nu_\pm(s) = 0 \quad , \quad \lim_{s\to\pm\infty} \nu_\pm'(s) = 0 \quad ,
\\
\rho |s-t| \leq |\Gamma(s) - \Gamma(t)| \leq |s-t| \quad , \quad 0<\rho\leq 1 \quad , \quad s,t\in{\mathbb R} \quad .
\end{eqnarray}
\vspace{1mm}
\begin{proposition}
Under the Assumptions 1 and 2,
the curve $\Gamma$ intersects the plain ${\mathbb R}^2$ into two disjoint regions $\Omega_+$, $\Omega_-$, i.e.,
$$
{\mathbb R}^2=\Omega_+ \cup \Gamma({\mathbb R}) \cup \Omega_-
$$
where the union is disjoint.
\end{proposition}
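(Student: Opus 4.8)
The plan is to show that a curve which is a proper $C^2$-embedding of $\mathbb R$ into $\mathbb R^2$, and which is "straight at infinity" in the sense made precise by Assumptions 1 and 2, separates the plane into exactly two connected components, one on each side of $\Gamma$. The natural tool is the Jordan curve theorem, applied after compactifying the line. First I would use the asymptotic data: by Proposition~\ref{assump_equivalence} and the summary relations just proved, $v_+\ne -v_-$, so the two asymptotic half-lines, when extended, either meet in a single point or are parallel; either way the curve $\Gamma$ together with its two asymptotic rays bounds a region, and by adding the point at infinity (working in $S^2$) one obtains a closed Jordan curve $\widetilde\Gamma$. Concretely, I would fix $s_0$ large so that the estimates of Proposition~\ref{curvelimits} hold for $|s|\ge s_0$, replace the tails $\Gamma((-\infty,-s_0])$ and $\Gamma([s_0,\infty))$ by the exact half-lines $a_\pm+v_\pm s$, smoothly interpolating near $\pm s_0$, and close up through infinity; the resulting curve is a topological circle in $S^2$. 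Since $|\nu_\pm(s)|\le s^{1-\kappa}/(\kappa-1)\to 0$, this modification only moves points a bounded amount and does not create self-intersections (here Assumption~2 is essential: $\rho|s-t|\le|\Gamma(s)-\Gamma(t)|$ keeps the original curve non-self-intersecting, and the replacement half-lines are non-self-intersecting and, being the true asymptotes, stay disjoint from the unmodified middle part once $s_0$ is large enough).

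By the Jordan–Schoenflies theorem, $S^2\setminus\widetilde\Gamma$ has exactly two connected components. Removing the point at infinity (which lies on $\widetilde\Gamma$, not in either component) leaves $\mathbb R^2\setminus\widetilde\Gamma$ with exactly two components. The final step is to transfer this from the auxiliary curve $\widetilde\Gamma$ back to $\Gamma$: the symmetric difference of $\Gamma(\mathbb R)$ and $\widetilde\Gamma\setminus\{\infty\}$ is contained in a bounded region (the interpolation zone near $\pm s_0$), and both curves are properly embedded arcs, so $\mathbb R^2\setminus\Gamma(\mathbb R)$ has the same number of unbounded "ends" and the same component structure; I would make this precise by an Alexander-duality / Mayer–Vietoris argument, or more elementarily by noting that $\Gamma$ and $\widetilde\Gamma$ agree outside a compact set and a connectedness argument inside a large disk shows the two complements have the same components. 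Define $\Omega_\pm$ to be the two resulting open sets; disjointness and $\mathbb R^2=\Omega_+\cup\Gamma(\mathbb R)\cup\Omega_-$ are then immediate, with $\Gamma(\mathbb R)$ closed in $\mathbb R^2$ because $\Gamma$ is proper (again a consequence of $\rho|s-t|\le|\Gamma(s)-\Gamma(t)|$, which forces $|\Gamma(s)|\to\infty$).

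The main obstacle I anticipate is not any single deep theorem but the bookkeeping that makes "close up through infinity" rigorous: one must verify that after straightening the tails and adding $\{\infty\}$ the curve is genuinely a $C^0$ Jordan curve with no self-intersections, and that $\Gamma$ really does approach its asymptotes transversally enough that the truncated middle piece stays clear of the straightened tails. All of this is controlled by the quantitative estimates $|\nu_\pm(\pm s)|\le s^{1-\kappa}/(\kappa-1)$ and $|\nu_\pm'(\pm s)|\le s^{-\kappa}$ from Proposition~\ref{curvelimits} together with the global lower bound $\rho|s-t|\le|\Gamma(s)-\Gamma(t)|$ from Assumption~2; the geometric input that the two asymptotic rays do not coincide in direction ($v_+\ne -v_-$, and the excluded "U-case") is what guarantees the closed-up curve bounds a genuine region rather than degenerating. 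Once these points are checked, the topological conclusion follows from the classical Jordan–Schoenflies theorem with no further work.
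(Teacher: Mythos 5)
Your approach is essentially the paper's: the published proof is a single sentence observing that $\Gamma$ is a Jordan curve in the one-point compactification ${\mathbb R}^{2*}\approx S^2$ and invoking the Jordan curve theorem. The only remark worth making is that your detour through a tail-straightened auxiliary curve $\widetilde\Gamma$ (together with the subsequent transfer-back argument, which is the least rigorous step in your sketch) is unnecessary: Assumption 2 already gives injectivity of $\Gamma$ and $|\Gamma(s)|\geq \rho|s|-|\Gamma(0)|\to\infty$, so the extension of $\Gamma$ to the circle $[-\infty,+\infty]$ with $\pm\infty$ identified and sent to the point at infinity is a continuous injection of a compact space into a Hausdorff space, hence a homeomorphism onto its image; thus $\Gamma({\mathbb R})\cup\{\infty\}$ is itself a Jordan curve in $S^2$ and the theorem applies directly, with no modification of the curve and no asymptotic estimates beyond properness.
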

\begin{proof} Under our assumptions, $\Gamma$ is a Jordan curve in the completed plane ${\mathbb R}^{2*}\approx S^2$ and so intersects it into two disjoint
regions by the Jordan theorem (e.g. Theorem 108 in \cite{Cerny}).
\end{proof}
\section{Operator domain}
\label{opdomainsection}
Although the Hamiltonian is completely defined by the form (\ref{H_form}) it is also useful to know the domain of the corresponding operator $H$
itself and its action,
\begin{eqnarray}
\nonumber
{\mathcal D}(H)=
\\
\nonumber
\{\psi\in H^{1,2}({\mathbb R}^2) \cap  H^{2,2}_{loc}({\mathbb R}^2\setminus \Gamma({\mathbb R})) \,|\,\partial_n\psi_{|\Gamma_+}-\partial_n\psi_{|\Gamma_-}=\alpha \psi_{|\Gamma},\\
\nonumber
\Delta\psi\in L^2({\mathbb R}^2) \}
\quad ,
\\
\label{Laplacian}
H\psi = -\Delta\psi \quad {\rm for} \quad \psi \in {\mathcal D}(H) \quad
\end{eqnarray}
where $\partial_n\psi_{|\Gamma_\pm}$ are the traces of the normal derivative of $\psi$ at the graph of $\Gamma$ from $\Omega_\pm$ respectively.
The unit vector $n$ normal to $\Gamma$ is oriented from $\Omega_+$ into $\Omega_-$ and $\psi_{|\Gamma}$ is the trace of $\psi$ (the same from the both sides for
$\psi \in {\mathcal D}(q) \supset {\mathcal D}(H)$). The Laplacian in (\ref{Laplacian}) is considered in the distributional sense on
$\Omega_+ \cup \Omega_-$. The space $H^{2,2}_{loc}({\mathbb R}^2\setminus \Gamma({\mathbb R}))$ should be understood here as the space of all functions which are
in
$$
H^{2,2}(U\setminus \Gamma({\mathbb R}))=H^{2,2}(U\cap\Omega_+) \oplus H^{2,2}(U\cap\Omega_-)
$$
for every bounded open $U\subset {\mathbb R}^2$.
Evidently, the interchange of conventionally chosen $\Omega_+$ and $\Omega_-$ has no effect as it leads to the change
of orientation of $n$.

Assuming further that $\Gamma '' \in L^\infty({\mathbb R})$ (Assumption 3),
\begin{eqnarray}
\nonumber
{\mathcal D}(H)=
\\
\label{H-domain}
\{\psi\in H^{1,2}({\mathbb R}^2) \cap  H^{2,2}({\mathbb R}^2\setminus \Gamma({\mathbb R})) \,|\,\partial_n\psi_{|\Gamma_+}-\partial_n\psi_{|\Gamma_-}=\alpha \psi_{|\Gamma} \}
\quad .
\end{eqnarray}

The form of ${\mathcal D}(H)$ in (\ref{H-domain}) is given in Theorem 8.4 of \cite{BLLR2017}.
A relatively short proof in Sect. 2.1 of \cite{Galk-Smith} for the compact curve extends to the proof of (\ref{Laplacian}) and (\ref{H-domain}) in our case of infinite asymptotically flat curve. The only points needing special attention
are the trace formula
$$
\|u\|_{L^2(\Gamma(R))} \leq C \|u\|_{L^2({\mathbb R}^2)}^\frac{1}{2} \|u\|_{H^{1,2}({\mathbb R}^2)}^\frac{1}{2}
$$
which can be proved integrating the derivative along the intervals of uniform length orthogonal to the asymptotic of our curve, instead to the curve itself as in the standard proof.
For (\ref{H-domain}), we can cover the outgoing parts of the curve by a countable set of finitely intersecting bounded rectangular neighborhoods of the same size and so the same
constants appear in the estimates of local $H^{2,2}$-norms by the local $H^{1,2}_\Delta$-norms. Then the local estimates extend to the global one.
\section{The resolvent}
\label{resolventsection}
The resolvent for the Hamiltonian of our model is known from \cite{BEKS}. Let us define operators\footnote{Notice that in Eq. (2.2) of \cite{BEKS},
$ik$ should be changed to $-ik$ for $d\geq 2$.}
\begin{eqnarray}
\nonumber
R_{dx\,dx}(k) : L^2({\mathbb R}^2) \to L^2({\mathbb R}^2) \quad,
\\
\label{Rxx}
(R_{dx\,dx}(k)\psi)(x)=\frac{1}{2\pi}\int_{{\mathbb R}^2} K_0(-ik|x-y|)\psi(y)\, dy \quad,
\\
\nonumber
R_{dx\,m}(k) : L^2({\mathbb R}) \to L^2({\mathbb R}^2) \quad,
\\
\label{Rxm}
(R_{dx\,m}(k)f)(x)=\frac{1}{2\pi}\int_{{\mathbb R}} K_0(-ik|x-\Gamma(s)|)f(s)\, ds \quad,
\\
\nonumber
R_{m\,dx}(k) : L^2({\mathbb R}^2) \to L^2({\mathbb R}) \quad,
\\
\label{Rmx}
(R_{m\,dx}(k)\psi)(s)=\frac{1}{2\pi}\int_{{\mathbb R}^2} K_0(-ik|\Gamma(s)-y|)\psi(y)\, dy \quad,
\\
\nonumber
R_{m\,m}(k) : L^2({\mathbb R}) \to L^2({\mathbb R}) \quad,
\\
\label{Rmm}
(R_{m\,m}(k)f)(s)=\frac{1}{2\pi}\int_{{\mathbb R}} K_0(-ik|\Gamma(s)-\Gamma(t)|)f(t)\, dt \quad.
\end{eqnarray}
Here $k\in{\mathbb C}$, $\Im k >0$ (with possible analytic prolongations), $K_0$ is the Macdonald function, the space $L^2({\mathbb R})\approx L^2(\Gamma({\mathbb R}), ds)$ is understood as the space of functions defined on the curve parameterized by the length.

In this notation, the resolvent (see Corollary 2.1 in \cite{BEKS} with a slightly different notation)
\begin{equation}
\label{full_resolvent}
(H-k^2)^{-1} = R_{dx\, dx}(k) +\alpha R_{dx\, m}(k) (I-\alpha R_{m\, m}(k))^{-1} R_{m\, dx}(k) \quad.
\end{equation}

For the further references, let us remind here some properties and estimates of the
Macdonald function $K_\nu$ (also called as modified Bessel function of the third kind; $\nu=0,1,2,\dots$ is sufficient for us) which we shall extensively use.
It holds
\begin{equation}
\label{Knureal}
|K_\nu(z)|\leq K_\nu(\Re z)
\end{equation}
for $\nu\geq 0$, $\Re z >0$
and $K_\nu$ is decreasing in $z>0$. Further
\begin{eqnarray}
\label{K0bound}
|K_0(z)|\leq c_0 (1+|\log|z||) e^{-\Re z} \quad , \quad |K_n(z)| \leq c_n |z|^{-n} e^{-\frac{3}{4}\Re z} \quad,
\\
\label{K0derivative}
|K_0^{(k)}(z)| \leq c_k' |z|^{-k} e^{-\frac{1}{2}\Re z}
\end{eqnarray}
for $|z|>0$, $|\arg z| \leq \frac{\pi}{2} -\varepsilon$ with constants $c_0,\, \, c_n, \, c_k'>0$ depending on $\varepsilon\in (0,\frac{\pi}{2})$ ($k, n=1,2,\dots$).
These properties follow from the well known integral representation (equation (16) in par. 7.3 of \cite{Bat-Erd}) giving estimates for large $|z|$ and behavior near $z=0$.
\section{The essential and discrete spectrum}
\label{essspectsection}
By Proposition 5.1 of \cite{EI2001} we would have
\begin{equation}
\label{essential-spectrum}
\sigma_{ess}(H)=[-\frac{\alpha^2}{4},\infty)
\end{equation}
if we would clarify the two following points.

The validity of the assumptions of \cite{EI2001} under our assumptions can be checked, our assumptions are much stronger in fact. Details are given in the Appendix A.

Strictly speaking, only the inclusion $\sigma_{ess}(H)\subset [-\frac{\alpha^2}{4},\infty)$ is proved in \cite{EI2001} while the proof of the opposite inclusion is incomplete there (see the Remark below).
One would conjecture, that the proof could be completed but instead of desirable clarification of this point, we give a straightforward proof of the relation (\ref{essential-spectrum}) under the stronger assumptions.
\begin{proposition}
\label{essspectr}
Under the Assumptions 1-4,
\begin{equation}
\label{essential-spectrum-prop}
\sigma_{ess}(H)=[-\frac{\alpha^2}{4},\infty)   \quad .
\end{equation}
\end{proposition}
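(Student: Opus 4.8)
The plan is to establish the two inclusions separately. For $\sigma_{ess}(H)\subset[-\alpha^2/4,\infty)$ I would invoke the result of \cite{EI2001}, with the verification that our Assumptions 1--4 imply the hypotheses needed there carried out in Appendix A; this direction is the "easy" one and does not require a genuinely new argument. The substance of the proof is the reverse inclusion $[-\alpha^2/4,\infty)\subset\sigma_{ess}(H)$, and here I would construct explicit singular Weyl sequences. The key observation is that the transversal problem on a single straight line $L_\pm$ (the asymptote at $s\to\pm\infty$) has exactly one transversal bound state of energy $-\alpha^2/4$, with normalized transversal profile $\chi(u)=\sqrt{\alpha/2}\,e^{-\alpha|u|/2}$ in the coordinate $u$ normal to the line, so that for any longitudinal momentum $p\in\mathbb{R}$ the function $e^{ips}\chi(u)$ is a generalized eigenfunction of the straight-line Hamiltonian with energy $p^2-\alpha^2/4$. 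As $p$ ranges over $\mathbb R$ this sweeps out $[-\alpha^2/4,\infty)$.

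The construction then proceeds as follows. Fix $\lambda=p^2-\alpha^2/4\in[-\alpha^2/4,\infty)$. Using the curvilinear (Fermi) coordinates $(s,u)$ adapted to $\Gamma$, which are well defined in a tubular neighborhood of the curve of some fixed width thanks to Assumptions 2 and 3 (the bound $|\Gamma''|\in L^\infty$ and $\rho|s-t|\le|\Gamma(s)-\Gamma(t)|$ prevent the normal segments from crossing and keep the Jacobian $1-u\,\mathfrak K(s)$ bounded away from $0$), I would set
$$
\psi_n(s,u)=\eta_n(s)\,\chi(u)\,e^{ips},
$$
where $\eta_n$ is a longitudinal cutoff supported in $s\in[n,2n]$ (pushing the sequence out to $+\infty$, using the asymptote $L_+$; the $-\infty$ case is symmetric), normalized so that $\|\eta_n\|_{L^2}=1$, with $\|\eta_n'\|_\infty=O(1/n)$, $\|\eta_n''\|_\infty=O(1/n^2)$. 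One checks $\|\psi_n\|_{L^2(\mathbb R^2)}\to 1$ and $\psi_n\rightharpoonup 0$. To estimate $(H-\lambda)\psi_n$ I would write the Laplacian and the $\delta$-interaction in the coordinates $(s,u)$: the leading term reproduces exactly $(-\partial_u^2-\alpha\delta(u)-\alpha^2/4)\chi=0$ together with $(-\partial_s^2-p^2)e^{ips}=0$, so the only surviving contributions come from (a) the curvature terms in the transformed Laplacian, which carry a factor $\mathfrak K(s)$ or $\mathfrak K'(s)$ and hence, by Assumption 4 (curvature and its derivative vanish at infinity), are $o(1)$ uniformly on $\operatorname{supp}\eta_n$; (b) the commutator terms $\eta_n'$, $\eta_n''$, which are $O(1/n)$; and (c) the mismatch between the true $\delta$-interaction along $\Gamma$ and the model $\delta(u)$-interaction — but in Fermi coordinates the curve is exactly $\{u=0\}$, so this mismatch appears only through the Jacobian and is again controlled by $\mathfrak K$. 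Hence $\|(H-\lambda)\psi_n\|\to 0$, and since $\psi_n\rightharpoonup 0$ with $\liminf\|\psi_n\|>0$, $\lambda\in\sigma_{ess}(H)$. As $\lambda$ was arbitrary in the interval and $\sigma_{ess}$ is closed, the inclusion follows.

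The main obstacle I anticipate is purely technical: making the change of variables to Fermi coordinates rigorous on the supports of the $\psi_n$ and bookkeeping the curvature-dependent error terms so that they are genuinely $o(1)$ as $n\to\infty$. One must be careful that the tubular neighborhood has a fixed positive width independent of where along the curve one is (this is exactly what Assumptions 2--3 guarantee), that $\psi_n\in\mathcal D(H)$ in the sense of the characterization (\ref{H-domain}) — the jump condition $\partial_n\psi_{|\Gamma_+}-\partial_n\psi_{|\Gamma_-}=\alpha\psi_{|\Gamma}$ is met exactly by $\chi$ because $\chi'(0^+)-\chi'(0^-)=-\alpha\chi(0)$ — and that the $H^{2,2}(\mathbb R^2\setminus\Gamma)$ regularity holds. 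Once these points are handled the energy estimate itself is a short computation. An alternative, if one prefers to avoid Fermi coordinates near the origin of the curve, is to exploit that on $\operatorname{supp}\eta_n\subset[n,2n]$ the curve is uniformly close to the affine line $a_++v_+s$ (by Proposition \ref{curvelimits}, $|\nu_+|,|\nu_+'|\to 0$), use the flat coordinates of that line, and absorb the discrepancy $\nu_+$ into the error terms; this trades curvature estimates for $\nu_+$-estimates but is otherwise the same argument.
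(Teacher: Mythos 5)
Your proposal follows essentially the same route as the paper: the inclusion $\sigma_{ess}(H)\subset[-\tfrac{\alpha^2}{4},\infty)$ is taken from \cite{EI2001} (with the hypotheses checked in Appendix A), and the reverse inclusion is obtained from a Weyl sequence built in Fermi coordinates far out along the curve, of the form (transversal bound-state profile)$\times$(longitudinal plane wave)$\times$(cutoffs), with the errors split into curvature terms killed by Assumption 4 and commutator terms killed by the spreading of the longitudinal cutoff.

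One detail of your bookkeeping would fail as written. You keep the full profile $\chi(u)=\sqrt{\alpha/2}\,e^{-\alpha|u|/2}$ on a tubular neighborhood of \emph{fixed} width; but the Fermi coordinates are only admissible for $|u|$ below the local radius of curvature, so $\chi$ must be truncated transversally, and a truncation at a fixed width $A$ leaves a relative error of order $e^{-\alpha A/2}$ that does not tend to zero along your sequence --- it only yields $\mathrm{dist}(\lambda,\sigma(H))\leq c\,e^{-\alpha A/2}$, not membership in the spectrum. The paper therefore uses two growing parameters: a longitudinal length $L_1$ and a transversal width $L_2$, both sent to infinity, which is possible precisely because Assumption 4 allows one to push the support to $s\geq s_0$ with $|{\mathfrak K}(s)|\leq (2(L_2+1))^{-1}$ there, so that the tube of width $L_2+1$ is admissible and the transversal-cutoff error $\sim e^{-\alpha L_2}$ also vanishes. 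With that modification (equivalently, a diagonal choice of widths $A_n\to\infty$ along your sequence, coupled to the location $s_0(n)$ of the support), your argument coincides with the paper's.
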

\begin{proof} By the corresponding part of the proof of Proposition 5.1 in \cite{EI2001}, $\sigma_{ess}(H) \cap (-\infty, -\frac{\alpha^2}{4}) = \emptyset$. We shall construct a Weyl sequence for the points
$-\frac{\alpha^2}{4} + k^2$, $k\in {\mathbb R}$ now.

Let us first introduce local coordinates $s$ (curve length) and $t$ (roughly distance from the curve) in a neighborhood of the curve (cf. \cite{DE1995}). Let $n(s)=(-\Gamma_2'(s),\Gamma_1'(s))$
be the unit normal vector to the curve and let us write
$$
(x_1,x_2)=\Gamma(s)+t n(s).
$$
The local coordinates $(s,t)$ may be used in a region where
$s\in J$ (an open interval in ${\mathbb R}$), $t\in (-A,A)$, $A>0$ such that $A {\mathfrak K}_0<1$,  ${\mathfrak K}_0= \|{\mathfrak K}\|_{L^\infty(J)}$, the length of $J$ does not exceed
${\mathfrak K}_0^{-1}-A$ and $\Gamma({\mathbb R}\setminus J)$ does not enter the region.
In particular, the Jacobian $1-t {\mathfrak K}(s) \not= 0$ there.
Given any $s_0>0$ sufficiently large, $L_1>0$, $L_2>0$,  the curvilinear rectangle $s_0-1\leq s \leq s_0+L_1+1$,
$-L_2-1\leq t\leq L_2+1$ can be placed inside this region (see Assumption 4 and  Propositions \ref{curvelimits} and \ref{assump_equivalence}).  The Laplacian in the coordinates $s$, $t$ reads
$$
-\Delta=-(1-t{\mathfrak K})^{-2}\frac{\partial^2}{\partial s^2} -t {\mathfrak K'}(1 -t{\mathfrak K})^{-3}\frac{\partial}{\partial s} -\frac{\partial^2}{\partial t^2}
+ {\mathfrak K} (1-t{\mathfrak K})^{-1} \frac{\partial}{\partial t}  \quad.
$$

Let $\vartheta \in C^\infty({\mathbb R})$ be such that $0 \leq \vartheta \leq 1$, $\vartheta(x)=0$ for $x\leq 0$, and $\vartheta(x)=1$ for $x\geq 1$.
Let us define $\vartheta_1(s)= \vartheta(s-s_0+1)$ for $s\leq s_0 + L_1$, $\vartheta_1(s)=\vartheta(s_0+L_1+1-s)$ for $s \geq s_0 +L_1$.
Similarly,  $\vartheta_2(t)=\vartheta(t+L_2+1)$ for $t\leq L_2$, $\vartheta_2(t)=\vartheta(L_2+1-t)$ for $t\geq L_2$.
Let
\begin{equation}
\label{testfunction}
\psi(s,t)=e^{-\frac{\alpha}{2} |t|}e^{i k s} \vartheta_1(s) \vartheta_2(t)   \quad.
\end{equation}
Then $\psi \in {\mathcal D}(H)$ (\ref{H-domain})  and
\begin{equation}
\label{normpsibound}
\|\psi\|^2=\int |\psi(s,t)|^2 |1-t{\mathfrak K}(s)| \, ds \, dt  \geq \alpha^{-1}L_1(1-e^{-\alpha L_2})
\end{equation}
if
\begin{equation}
\label{curvbound}
|{\mathfrak K}(s)| \leq \frac{1}{2(L_2+1)} \quad {\rm for} \quad s\geq s_0 -1  \quad.
\end{equation}
Direct calculations give
\begin{equation}
(-\Delta + \frac{\alpha^2}{4}-k^2)\psi=\psi_1+\psi_2+\psi_3+\psi_4
\end{equation}
where denoting $d(s,t)=(1-t{\mathfrak K}(s))^{-1}$
\begin{eqnarray*}
\psi_1=k^2(d^2-1)\psi    \quad,
\\
\psi_2=-(i k t {\mathfrak K}' d^3 + \frac{1}{2} \alpha {\mathfrak K} d {\rm\, sgn}(t))\psi        \quad,
\\
\psi_3=\left( (\alpha {\rm\, sgn}(t) +{\mathfrak K} d)\vartheta_2'(t) - \vartheta_2''(t) \right) e^{-\frac{\alpha}{2} |t|}e^{i k s} \vartheta_1(s)    \quad,
\\
\psi_4=-d^2\left((2 i k + t {\mathfrak K}' d) \vartheta_1'(s) + \vartheta_1''(s) \right) e^{-\frac{\alpha}{2} |t|}e^{i k s} \vartheta_2(t)    \quad.
\end{eqnarray*}
We denote
$$
{\mathfrak K}_0=\sup_{s\geq s_0-1}|{\mathfrak K}(s)| \quad , \quad {\mathfrak K}_1=\sup_{s\geq s_0-1}|{\mathfrak K}'(s)| \quad. $$
Assuming (\ref{curvbound}) and denoting c a suitable constant independent of $s_0$, $L_1$ and $L_2$ (but dependent on $k$, $\alpha > 0$ and
$\|{\mathfrak K}\|_{L^\infty({\mathbb R})}$) we can estimate
\begin{eqnarray*}
\|\psi_1\|^2 \leq c {\mathfrak K}_0^2 (L_1+2) \quad, \\
\|\psi_2\|^2\leq c ({\mathfrak K}_1^2 + {\mathfrak K}_0^2)(L_1+2) \quad,
\\
\|\psi_3\|^2\leq c e^{-\alpha L_2} (L_1+2) \quad,
\\
\|\psi_4\|^2 \leq c (1+{\mathfrak K}_1^2)
\end{eqnarray*}
which together with (\ref{normpsibound}) gives (c is another suitable constant)
$$
\frac{\|(H+\frac{\alpha^2}{4}-k^2)\psi\|}{\|\psi\|} \leq
c  \frac{({\mathfrak K}_0 + {\mathfrak K}_1 +e^{-\frac{\alpha}{2}L_2})\sqrt{L_1+2} +(1+{\mathfrak K}_1)}{\sqrt{L_1 (1-e^{-\alpha L_2})}}
\quad .
$$
This can be arbitrarily small by the choice of $L_1$, $L_2$ and $s_0$, keeping the validity of (\ref{curvbound}). So the Weyl sequence
can be constructed and $-\frac{\alpha^2}{4} +k^2 \in \sigma_{ess}(H)$.
\end{proof}

\noindent
{\bf Remark.}
The proof of Proposition 5.1 in \cite{EI2001} is incomplete as it uses implication $1\in \sigma(\alpha R_{m\, m}(ik)) \Rightarrow -k^2 \in \sigma(H)$
for $k>0$ referring to part (ii) of Proposition 2.1 which is proved as Corollary 2.1 of \cite{BEKS} and which contains opposite implication for resolvent sets,
$1\not\in \sigma(\alpha R_{m\, m}(ik)) \Rightarrow -k^2 \not\in \sigma(H)$, only.

The equivalence like $1\in \sigma(\alpha R_{m\, m}(ik)) \Leftrightarrow -k^2 \in \sigma(H)$ would follow from Theorem 1.29 in \cite{BGP2008}.
However, \cite{BGP2008} consider the case
of boundary triples which are insufficient for the partial differential operators where quasi-boundary triples should be used (cf. discussions in the Introductions
of \cite{BLL2013, BL2007}).  The proof of a similar Theorem A in \cite{BLL2012} on the essential spectrum rely on compact embedding of $H^1(\Gamma({\mathbb R}))$ into $L^2(\Gamma({\mathbb R}))$ which does not hold
for the infinite curve.
\\

Below the essential spectra, eigenvalues discrete in $(-\infty,-\alpha^2/4)$ may appear. If the curve is not just a straight line, there is at most one such eigenvalue \cite{EI2001}. The whole spectrum is bounded from below as the form (\ref{H_form}) is. The accumulation of the eigenvalues at $-\alpha^2/4$ cannot be probably excluded in general but for the our case of asymptotically flat curve one would conjecture finite number of eigenvalues only, although the proof remains an open question. For the curve in a sense close to the straight line, there exists just one simple discrete eigenvalue. This is a very slight generalization of Theorem 2.2 from \cite{EK2015} which we give after the introduction of some notations.

Let $\Gamma$ be the curve satisfying Assumptions 1-4, put the origin of coordinates into $\Gamma(0)=0$ for the simplicity. We define a relative angle of the tangent $\phi$,
$$
\Gamma'(s)=(\cos \phi(s,0), \sin \phi(s,0)) \quad,\quad \phi(s,t)=\phi(s,0)-\phi(t,0) \quad.
$$
The angle $\phi(\cdot,0)$ can be defined as a bounded function from $C^2({\mathbb R})$ under our assumptions. Let us define a deformed curve
$$
\Gamma^{(\beta)}(s)= \int_0^s (\cos \beta \phi(t,0), \sin \beta \phi(t,0)) \, dt \quad,\quad \beta \in [0,1] \quad,
$$
in other words a homotopy of the curve $\Gamma$ to the straight line. For $\beta$ small enough, $\Gamma^{(\beta)}$ satisfies Assumptions 1-4 and let us denote as $H^{(\beta)}$ the corresponding Hamiltonian.
Let us denote
$$
{\mathcal A}(s,t) = -\frac{\alpha^4}{32\pi}K_1(\tfrac{\alpha}{2} |s-t|) \left( |s-t|^{-1}\left(\int_t^s \phi(u,t) \, du\right)^2 - \left|\int_t^s \phi(u,t)^2 \, du \right| \right) \quad.
$$
Under our assumptions, the convergence of integral
$$
\int_{{\mathbb R}^2} {\mathcal A}(s,t) \, ds \, dt < \infty
$$
can be seen.
\begin{proposition}
\label{eigenvasympt}
Let a curve $\Gamma$ satisfies Assumptions 1-4 and is not a straight line. Then there exists $\beta_0>0$ such that for $0<\beta<\beta_0$ the Hamiltonian $H^{(\beta)}$ has a unique simple
eigenvalue $\lambda(H^{(\beta)})$ in $(-\infty, -\frac{\alpha^2}{4})$ which admits the asymptotic expansion
$$
\lambda(H^{(\beta)}) = -\frac{\alpha^2}{4} - \left(\int_{{\mathbb R}^2} {\mathcal A}(s,t) \, ds \, dt \right)^2 \beta^4 + o(\beta^4)
$$
for $\beta \to 0^+$.
\end{proposition}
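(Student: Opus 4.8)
The plan is to reproduce, in the present non-compact geometry, the argument behind Theorem~2.2 of \cite{EK2015}, replacing each use of compactness of the interaction support by the decay estimates of Assumption~1 and Proposition~\ref{curvelimits} (applied also to the deformed curves $\Gamma^{(\beta)}$, which satisfy Assumptions~1--4 for $\beta$ small). First I would pass to the Birman--Schwinger picture: for $\kappa>\tfrac{\alpha}{2}$ the point $-\kappa^2$ lies strictly below $\sigma_{ess}(H^{(\beta)})=[-\tfrac{\alpha^2}{4},\infty)$ (Proposition~\ref{essspectr} for $\Gamma^{(\beta)}$), so by the Krein-type resolvent formula (\ref{full_resolvent}) the number $-\kappa^2$ is an eigenvalue of $H^{(\beta)}$, of a given multiplicity, precisely when $1$ is an eigenvalue of $K(\beta,\kappa):=\alpha R^{(\beta)}_{m\,m}(i\kappa)$ of the same multiplicity, where $K(\beta,\kappa)$ is the self-adjoint operator on $L^2({\mathbb R})$ with kernel $\tfrac{\alpha}{2\pi}K_0\bigl(\kappa\,|\Gamma^{(\beta)}(s)-\Gamma^{(\beta)}(t)|\bigr)$ and the eigenfunctions correspond via $f\mapsto\alpha R^{(\beta)}_{dx\,m}(i\kappa)f$. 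For $\beta=0$ the curve is a straight line, so $K(0,\kappa)$ is the convolution by $\tfrac{\alpha}{2\pi}K_0(\kappa|\cdot|)$ and hence unitarily equivalent, via the Fourier transform, to multiplication by $a_\kappa(p)=\tfrac{\alpha}{2\sqrt{p^2+\kappa^2}}$, so that $\sigma(K(0,\kappa))=[0,\tfrac{\alpha}{2\kappa}]$ is purely absolutely continuous. Using $K_0'=-K_1$, the bounds (\ref{Knureal})--(\ref{K0derivative}), Assumption~2 for $\Gamma^{(\beta)}$, and the decay rates of $\varphi_\pm,\nu_\pm$ from Proposition~\ref{curvelimits} (inherited by $\Gamma^{(\beta)}$ since it is built from $\beta\phi(\cdot,0)$), one checks that $K(\beta,\kappa)-K(0,\kappa)$ is Hilbert--Schmidt with $\|K(\beta,\kappa)-K(0,\kappa)\|_{\mathrm{HS}}=O(\beta^2)$ \emph{uniformly} for $\kappa\ge\tfrac{\alpha}{2}$. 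By Weyl's theorem $\sigma_{ess}(K(\beta,\kappa))=[0,\tfrac{\alpha}{2\kappa}]$, so for $\kappa>\tfrac{\alpha}{2}$ the value $1$, when it belongs to the spectrum, is an isolated eigenvalue of finite multiplicity and the Birman--Schwinger correspondence is unambiguous.

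Uniqueness and simplicity I expect to get cheaply. For $\beta>0$ the deformed curve $\Gamma^{(\beta)}$ is not a straight line, so by \cite{EI2001} the operator $H^{(\beta)}$ has at most one eigenvalue in $(-\infty,-\tfrac{\alpha^2}{4})$, hence at most one admissible $\kappa$; and since the kernel of $K(\beta,\kappa)$ is everywhere strictly positive, $K(\beta,\kappa)$ is positivity improving, so its largest spectral point --- whenever it exceeds $\tfrac{\alpha}{2\kappa}=\sup\sigma_{ess}$ --- is a simple eigenvalue with a positive eigenfunction, which transported back makes the below-threshold eigenvalue of $H^{(\beta)}$ simple. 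It then remains to show existence for small $\beta>0$ and to locate the eigenvalue.

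For the asymptotics, put $\eta:=\kappa-\tfrac{\alpha}{2}>0$ and expand, with a remainder of controlled growth in $s,t$,
\[
|\Gamma^{(\beta)}(s)-\Gamma^{(\beta)}(t)|=|s-t|+\beta^2 D(s,t)+O(\beta^4),\qquad
D(s,t)=\tfrac{1}{2}\Bigl(|s-t|^{-1}\bigl({\textstyle\int_t^s}\phi(u,t)\,du\bigr)^2-\bigl|{\textstyle\int_t^s}\phi(u,t)^2\,du\bigr|\Bigr)\le0,
\]
which follows from $(\Gamma^{(\beta)})'=(\cos\beta\phi(\cdot,0),\sin\beta\phi(\cdot,0))$ and the rotation invariance of the bracket (so $\phi(u,0)$ may be replaced by $\phi(u,t)$). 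Hence $K(\beta,\kappa)=K(0,\kappa)+\beta^2 W(\kappa)+O(\beta^4)$, where $W(\kappa)$ has the non-negative kernel $-\tfrac{\alpha\kappa}{2\pi}D(s,t)K_1(\kappa|s-t|)$, equal at $\kappa=\tfrac{\alpha}{2}$ to $\tfrac{4}{\alpha^2}{\mathcal A}(s,t)$. Writing the Birman--Schwinger equation as $(I-K(0,\kappa))f=\bigl(\beta^2 W(\kappa)+O(\beta^4)\bigr)f$ and noting that $(I-K(0,\kappa))^{-1}$ is, in Fourier, multiplication by $(1-a_\kappa(p))^{-1}$ with $1-a_\kappa(p)=\tfrac{2}{\alpha^2}(\alpha\eta+p^2)+O(p^4,\eta p^2,\eta^2)$ near $p=0$, one obtains the key near-threshold identity: for sufficiently regular $g,h$,
\[
\bigl\langle g,(I-K(0,\kappa))^{-1}h\bigr\rangle=\frac{\pi\alpha^{3/2}}{2\sqrt{\eta}}\,\overline{\hat g(0)}\,\hat h(0)+O(1)\qquad(\eta\to0^+).
\]
Thus, modulo a uniformly bounded remainder, $(I-K(0,\kappa))^{-1}$ behaves like $\tfrac{\pi\alpha^{3/2}}{2\sqrt{\eta}}$ times the (non-$L^2$) projection onto the zero Fourier mode; the solution $f$ is then, to leading order, proportional to the regularised zero mode, and projecting the equation onto it --- using that the first-variable integral of the kernel of $W(\tfrac{\alpha}{2})$ has total mass $\tfrac{4}{\alpha^2}\int_{{\mathbb R}^2}{\mathcal A}$ and that $\int_{{\mathbb R}^2}{\mathcal A}<\infty$ --- collapses everything to the scalar relation
\[
1=\frac{\beta^2}{\sqrt{\alpha\eta}}\int_{{\mathbb R}^2}{\mathcal A}(s,t)\,ds\,dt+o(1)\qquad(\beta\to0^+),
\]
all remainders being controlled as $\eta\to0$ by the uniform Hilbert--Schmidt bounds above. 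Since $\int_{{\mathbb R}^2}{\mathcal A}>0$ --- the bracket in ${\mathcal A}$ is $\le0$ by Cauchy--Schwarz and strictly negative on a set of positive measure because $\Gamma$ is not a line, so $\phi(\cdot,0)$ is non-constant --- this relation has, for each small $\beta>0$, a unique root $\eta=\eta(\beta)>0$ with $\sqrt{\alpha\,\eta(\beta)}=\beta^2\int_{{\mathbb R}^2}{\mathcal A}+o(\beta^2)$, i.e.\ $\eta(\beta)=\alpha^{-1}\bigl(\int_{{\mathbb R}^2}{\mathcal A}\bigr)^2\beta^4+o(\beta^4)$. This proves existence; uniqueness and simplicity come from the previous paragraph; and
\[
\lambda(H^{(\beta)})=-\kappa^2=-\bigl(\tfrac{\alpha}{2}+\eta(\beta)\bigr)^2=-\tfrac{\alpha^2}{4}-\alpha\,\eta(\beta)+O(\eta(\beta)^2)=-\tfrac{\alpha^2}{4}-\Bigl(\int_{{\mathbb R}^2}{\mathcal A}\Bigr)^2\beta^4+o(\beta^4).
\]

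The hard part will be this last step: converting the heuristic ``rank-one plus bounded'' description of $(I-K(0,\kappa))^{-1}$ near the threshold $\kappa=\tfrac{\alpha}{2}$ into honest estimates, and in particular showing that the bounded part together with the $O(\beta^4)$ operator remainder perturbs the scalar relation only by $o(1)$ after multiplication by the diverging factor $\eta^{-1/2}\asymp\beta^{-2}$. This forces all the auxiliary bounds --- the Hilbert--Schmidt norm of $K(\beta,\kappa)-K(0,\kappa)-\beta^2 W(\kappa)$, the pointwise size and decay of $D(s,t)$ and of $\int_{{\mathbb R}}{\mathcal A}(s,t)\,ds$, and the error in replacing $W(\kappa)$ by $W(\tfrac{\alpha}{2})$ --- to be \emph{uniform} as $\kappa\downarrow\tfrac{\alpha}{2}$; where \cite{EK2015} draws this uniformity from compactness of the support, here it has to be extracted from the decay of $\varphi_\pm,\nu_\pm$ and of the analogous quantities for $\Gamma^{(\beta)}$ (for which one uses that $\phi(s,0)$ tends to its limits at $\pm\infty$ at the rate of $\varphi_\pm$). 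Everything else --- the Birman--Schwinger correspondence in the isolated-eigenvalue regime, uniqueness via \cite{EI2001}, and simplicity via positivity improving --- is routine once these uniform estimates are available.
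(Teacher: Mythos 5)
Your proposal is correct and follows essentially the same route as the paper: the paper's proof simply defers to Theorem~2.2 of \cite{EK2015} (the Birman--Schwinger reduction, the near-threshold expansion of $(I-\alpha R_0(i\kappa))^{-1}$ as a diverging rank-one part plus a bounded remainder, and the resulting scalar secular equation in $\eta=\kappa-\tfrac{\alpha}{2}$), noting only the two adaptations needed for the non-compact curve. Those are exactly the points you flag as the hard part --- the decay $|\phi(s,0)-\phi_{\pm\infty}|\leq\tfrac{\pi}{2}|s|^{-7/2}$ inherited from Proposition~\ref{curvelimits}, and the replacement of the exponential regularizer of \cite{EK2015} by $(1+s^2)^{-1}$ for the non-normalizable zero mode --- so your reconstruction matches the intended argument, and your bookkeeping of the constants in the secular relation is consistent with the stated coefficient $\bigl(\int_{{\mathbb R}^2}{\mathcal A}\bigr)^2$.
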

\begin{proof}
The proof copies step by step the proof of Theorem 2.2 in \cite{EK2015} so we do not repeat details here. We need an asymptotic estimate of the angle $\phi$. Under our assumptions,
finite
$$
\lim_{s\to \pm \infty} \phi(s,0) = \phi_{\pm\infty} \quad,\quad {\rm where} \quad v_{\pm} = (\cos \phi_{\pm \infty}, \sin \phi_{\pm \infty}) \quad,
$$
exists. For $|s|$ large enough we then have
$$
|\phi(s,0) - \phi_{\pm \infty}| \leq \pi | \sin \tfrac{1}{2}(\phi(s,0) - \phi_{\pm \infty}) | = \frac{\pi}{2} |\nu_\pm' (s)| \leq \frac{\pi}{2} |s|^{-7/2}
$$
by Proposition \ref{curvelimits}. For the auxiliary regularization, \cite{EK2015} uses an exponential function $V_{-\alpha}(s) = \exp(-\alpha |s|/4)$ which decreases too quickly for our case
but we use $W(s)=(1+s^2)^{-1}$ instead. With these changes, the proof from \cite{EK2015} can be repeated, including estimates of the Hilbert-Schmidt norms for some operators.
\end{proof}
\section{Spectral absolute continuity}
\label{acsection}
We shall prove that $(-\frac{\alpha^2}{4},0) \subset \sigma_{ac}(H)$ and that the singular continuous spectrum does not intersect this interval,
our first main result, in this section. The existence of the countable number of eigenvalues in the interval $(-\frac{\alpha^2}{4},0)$
is not excluded. The proof is based on the well known
criterion given by Theorem XIII.20 of \cite{RSIV} related to the so called limiting absorbtion principle, using formula (\ref{full_resolvent}), comparison with the operators
related to the case of straight line, and the ideas used in the proof of Agmon-Kato-Kuroda theorem (Theorem XIII.33 of \cite{RSIV}, see also \cite{Agmon1975}).
Only Assumptions 1 and 2 are needed for the proofs in this section except of the inclusion of interval $(-\frac{\alpha^2}{4},0)$ into the essential spectrum.
\emph{Assumptions 1 and 2 are always supposed in this section without repeating that in the statements.}

We consider the argument of the resolvent
\begin{eqnarray*}
k^2=\lambda + i \varepsilon \quad , \quad \lambda \in [\lambda_1,\lambda_2] \subset (-\frac{\alpha^2}{4},0) \quad , \quad 0<\varepsilon<\varepsilon_0 \quad ,
\\
k=k_1 +i k_2  \quad , \quad k_1\in {\mathbb R} \quad,\quad k_2 \geq 0
\end{eqnarray*}
with fixed bounds $\lambda_1 < \lambda_2 < 0$, $\varepsilon_0 > 0$.  These relations imply
\begin{equation}
\label{kbounds}
k_1\in (0,b] \quad,\quad k_2 \in [k_{20},k_{21}]
\end{equation}
with finite $b$, $0<k_{20}<k_{21}$, more precisely
\begin{eqnarray*}
b=\left( \frac{1}{2} (\lambda_2 +\sqrt{\lambda_2^2 + \varepsilon_0^2}) \right)^\frac{1}{2}  \leq \frac{\varepsilon_0}{2\sqrt{-\lambda_2}} \quad,
\\
\quad k_{20}=\sqrt{-\lambda_2} \quad, \quad
k_{21}=\left( \frac{1}{2} (-\lambda_1 +\sqrt{\lambda_1^2 + \varepsilon_0^2}) \right)^\frac{1}{2} \quad.
\end{eqnarray*}
\emph{The parameter $k$ is always supposed to satisfy these relations if nothing else is said in this section, without repeating that in the statements. }

Notice that we may allow also negative values of $\varepsilon$, i.e. $\varepsilon \in (-\varepsilon_0,\varepsilon_0)$ and (\ref{kbounds}) then changes to
\begin{equation}
\label{kboundsb}
k_1\in [-b,b] \quad,\quad k_2 \in [k_{20},k_{21}] \quad.
\end{equation}
The important positive lower bound on $k_2$ remains which allows an analytic prolongation of some functions below the interval $[\lambda_1,\lambda_2]$
of the real axis in the following.

The notation of operators introduced in Section \ref{resolventsection} is used, sometimes using the same symbol for the operators restricted or extended to other
spaces than just $L^2$. We also denote
$$
w(s)=(1+s^2)^\frac{1}{2}  \quad.
$$
\begin{lemma}
\label{Rmm_compact}
Let $k$ satisfies (\ref{kboundsb}). Then
for every $\delta \geq 0$ and $\eta<-\frac{1}{2}$, $R_{mm}(k)$ is a compact operator $L^2_\delta({\mathbb R}) \to L^2_\eta({\mathbb R})$.
\end{lemma}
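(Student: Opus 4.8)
The plan is to prove the stronger statement that $R_{mm}(k)$ is a \emph{Hilbert--Schmidt} operator from $L^2_\delta(\mathbb R)$ to $L^2_\eta(\mathbb R)$; compactness then follows automatically. First I would transfer the problem to $L^2(\mathbb R)$ by the isometric isomorphisms $U_\delta : L^2_\delta(\mathbb R)\to L^2(\mathbb R)$, $U_\delta f = w^{\delta}f$, and $U_\eta : L^2_\eta(\mathbb R)\to L^2(\mathbb R)$, $U_\eta g = w^{\eta}g$. Under these, $R_{mm}(k):L^2_\delta\to L^2_\eta$ is unitarily equivalent to the integral operator on $L^2(\mathbb R)$ with kernel
$$
\widetilde{\mathcal K}(s,t)=\frac{1}{2\pi}\,w(s)^{\eta}\,K_0\!\big(-ik\,|\Gamma(s)-\Gamma(t)|\big)\,w(t)^{-\delta}\quad,
$$
cf.\ (\ref{Rmm}), so it is enough to show $\widetilde{\mathcal K}\in L^2(\mathbb R^2)$ (this simultaneously establishes that $R_{mm}(k)$ is a well-defined bounded map between these weighted spaces).

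Next I would bound the kernel pointwise. Relation (\ref{kboundsb}) forces $\Re(-ik)=k_2\geq k_{20}>0$, so by (\ref{Knureal}), the monotonicity of $K_0$ on $(0,\infty)$, and Assumption~2 (inequality (\ref{rhodef})),
$$
\big|K_0\!\big(-ik\,|\Gamma(s)-\Gamma(t)|\big)\big|\ \leq\ K_0\!\big(k_2\,|\Gamma(s)-\Gamma(t)|\big)\ \leq\ K_0\!\big(k_2\rho\,|s-t|\big)\quad;
$$
since $\delta\geq 0$ we also have $w(t)^{-\delta}\leq 1$, hence $|\widetilde{\mathcal K}(s,t)|\leq \frac{1}{2\pi}\,w(s)^{\eta}\,K_0(k_2\rho\,|s-t|)$. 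Then I would integrate: a change of variable gives, for every fixed $s$,
$$
\int_{\mathbb R}K_0(k_2\rho\,|s-t|)^2\,dt=\frac{1}{k_2\rho}\int_{\mathbb R}K_0(|v|)^2\,dv=:C_0<\infty\quad,
$$
the finiteness coming from the (square-integrable) logarithmic singularity of $K_0$ at the origin together with the exponential decay at infinity recorded in (\ref{K0bound}). Consequently
$$
\|\widetilde{\mathcal K}\|_{L^2(\mathbb R^2)}^2\ \leq\ \frac{C_0}{4\pi^2}\int_{\mathbb R}(1+s^2)^{\eta}\,ds\ <\ \infty
$$
because $\eta<-\tfrac12$, i.e.\ $2\eta<-1$, which is exactly the threshold in the statement. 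This is the desired Hilbert--Schmidt bound.

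There is no genuine obstacle here; the two points worth keeping in mind are (i) the strictly positive lower bound $k_2\geq k_{20}$ furnished by (\ref{kboundsb}) is precisely what converts $K_0$ into an exponentially decaying convolution kernel in $s-t$, making it square-integrable globally — this is where keeping $\lambda_2<0$ matters — and (ii) the weights enter with the favourable signs, $w(t)^{-\delta}\leq 1$ and $w(s)^{\eta}$ square-integrable exactly when $\eta<-\tfrac12$. If one preferred to avoid computing a Hilbert--Schmidt norm, the same estimate shows that the truncated kernels $\mathbf 1_{\{|s|\le N\}\times\{|t|\le N\}}\,\widetilde{\mathcal K}$ converge to $\widetilde{\mathcal K}$ in $L^2(\mathbb R^2)$, and each defines a Hilbert--Schmidt (hence compact) operator, so $R_{mm}(k)$ is a norm-limit of compact operators.
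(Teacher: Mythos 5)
Your proposal is correct and follows essentially the same route as the paper: conjugate by the weight isomorphisms to reduce to the operator $w^{\eta}R_{mm}(k)w^{-\delta}$ on $L^2(\mathbb R)$, bound the kernel via (\ref{Knureal}) and (\ref{rhodef}) by $K_0(k_{20}\rho|s-t|)$ times $w(s)^{\eta}$, and conclude that the kernel is square-integrable (hence the operator is Hilbert--Schmidt and compact) because $2\eta<-1$. The only cosmetic difference is your closing alternative via truncated kernels, which the paper does not need.
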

\begin{proof} We first verify that the operator
$ w^\eta R_{mm}(k)w^{-\delta}: L^2({\mathbb R}) \to L^2({\mathbb R})
$
defined as
$$
(w^\eta R_{mm}(k)w^{-\delta}\, f)(s) = \frac{1}{2\pi}\int_{{\mathbb R}} w(s)^\eta K_0(-ik|\Gamma(s)-\Gamma(t)|) w(t)^{-\delta} f(t)\, dt
$$
is Hilbert-Schmidt and therefore compact. Inequalities from Section \ref{resolventsection}, (\ref{rhodef}) and (\ref{kbounds}) give
$$
|K_0(-i k |\Gamma(s)-\Gamma(t)|)| \leq  K_0(k_2 |\Gamma(s)-\Gamma(t)|) \leq K_0(k_{20} \rho  |s-t|)
$$
and we estimate the integral of the kernel square
\begin{eqnarray*}
\int_{\mathbb R} \int_{\mathbb R} w(s)^{2\eta} |K_0(-ik|\Gamma(s)-\Gamma(t)|)|^2 w(t)^{-2\delta} \, ds \, dt  \leq
\\
\int_{\mathbb R} \int_{\mathbb R} w(s)^{2\eta} K_0(k_{20} \rho  |s-t|)^2 \, ds \, dt =
\\
\left(\int_{\mathbb R} w(s)^{2\eta} \, ds\right) \left(\int_{\mathbb R} K_0(k_{20} \rho  |u|)^2 \, du \right)  <\infty
\end{eqnarray*}
so the mentioned operator is really Hilbert-Schmidt. Taking into account the isomorphisms of the spaces $L^2_{\pm\delta}({\mathbb R})$ and $L^2({\mathbb R})$
realized as multiplications by suitable powers of $w$, the compactness of $R_{mm}(k): L^2_\delta \to L^2_\eta$ follows.
\end{proof}
\begin{lemma}
Let $\delta \geq 0$, $\eta< -\frac{1}{2}$, $\lambda_1<\lambda_2<0$ and $\varepsilon_0>0$. Then
the operator function  $z \mapsto R_{mm}(\sqrt{z})$, $R_{mm}(\sqrt{z}): L^2_\delta({\mathbb R}) \to L^2_\eta({\mathbb R})$, with the square root chosen so that $\Im \sqrt{z}>0$,
is holomorphic in the operator norm  in the region $(\lambda_1,\lambda_2)+i(-\varepsilon_0,\varepsilon_0)$.
\end{lemma}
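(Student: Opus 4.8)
Since $\lambda_2<0$, the rectangle $\Pi:=(\lambda_1,\lambda_2)+i(-\varepsilon_0,\varepsilon_0)$ lies in the open left half-plane $\{\Re z<0\}$, which is simply connected and omits $0$; hence the branch of the square root with $\Im\sqrt z>0$ is holomorphic on $\Pi$. Writing $k=\sqrt z=k_1+ik_2$ as before, the bounds behind (\ref{kbounds})--(\ref{kboundsb}) hold uniformly on $\Pi$: in particular $k_2\geq k_{20}=\sqrt{-\lambda_2}>0$ and $|k_1|\leq b$, so $\zeta:=-i\sqrt z=k_2-ik_1$ satisfies $\Re\zeta\geq k_{20}$, $|\zeta|=|\sqrt z|\geq k_{20}$ and $|\arg\zeta|\leq\frac{\pi}{2}-\varepsilon$ with $\varepsilon:=\frac{\pi}{2}-\arctan(b/k_{20})>0$. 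These are exactly the hypotheses under which the Macdonald estimates (\ref{Knureal}), (\ref{K0derivative}) apply, with constants depending only on $\lambda_1,\lambda_2,\varepsilon_0$.

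The plan is to exhibit the complex derivative explicitly. For $s\neq t$ set $r=r(s,t)=|\Gamma(s)-\Gamma(t)|>0$; the scalar map $z\mapsto\frac{1}{2\pi}K_0(-i\sqrt z\,r)$ is holomorphic on $\Pi$ with derivative $\frac{1}{2\pi}\cdot\frac{-ir}{2\sqrt z}K_0'(-i\sqrt z\,r)$. Using $|K_0'(\zeta)|\leq c_1'|\zeta|^{-1}e^{-\frac12\Re\zeta}$ from (\ref{K0derivative}), $\Re(-i\sqrt z\,r)=k_2r\geq k_{20}\rho|s-t|$ by (\ref{rhodef}), and $|\sqrt z|\geq k_{20}$, one gets for every $z\in\Pi$
\[
\left|\frac{1}{2\pi}\,\frac{-ir}{2\sqrt z}\,K_0'(-i\sqrt z\,r)\right|\leq\frac{c_1'}{4\pi|\sqrt z|^{2}}\,e^{-\frac12 k_{20}\rho|s-t|}\leq\frac{c_1'}{4\pi k_{20}^{2}}\,e^{-\frac12 k_{20}\rho|s-t|}=:M(s,t).
\]
Since $\eta<-\frac12$ and $\delta\geq0$, the weighted square integral $\int\!\int w(s)^{2\eta}M(s,t)^2 w(t)^{-2\delta}\,ds\,dt$ is finite (bounded by a constant times $\bigl(\int_{\mathbb R} w(s)^{2\eta}ds\bigr)\bigl(\int_{\mathbb R} e^{-k_{20}\rho|u|}du\bigr)$, just as in the proof of Lemma~\ref{Rmm_compact}), so the integral operator $T(z)\colon L^2_\delta({\mathbb R})\to L^2_\eta({\mathbb R})$ with kernel $\frac{1}{2\pi}\cdot\frac{-ir}{2\sqrt z}K_0'(-i\sqrt z\,r)$ is Hilbert--Schmidt, with ${\rm HS}$-norm bounded uniformly on $\Pi$.

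It then remains to identify $T(z_0)$ with $\frac{d}{dz}R_{mm}(\sqrt z)\big|_{z_0}$ in the Hilbert--Schmidt (hence operator) norm. For $z_0\in\Pi$ and $z$ in a disk $D(z_0,r_0)\subset\Pi$, the kernel of $\frac{1}{z-z_0}\bigl(R_{mm}(\sqrt z)-R_{mm}(\sqrt{z_0})\bigr)-T(z_0)$ equals $\frac{1}{z-z_0}\int_{[z_0,z]}\bigl(g_w(s,t)-g_{z_0}(s,t)\bigr)\,dw$, where $g_w(s,t):=\frac{1}{2\pi}\cdot\frac{-ir}{2\sqrt w}K_0'(-i\sqrt w\,r)$ and $[z_0,z]\subset D(z_0,r_0)$; each integrand is $\leq 2M(s,t)$ and $g_w(s,t)\to g_{z_0}(s,t)$ pointwise as $w\to z_0$, so the bracket tends to $0$ pointwise while staying dominated by $2M(s,t)$. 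Dominated convergence applied to the ${\rm HS}$-norm integral then gives $\bigl\|w^\eta\bigl(\tfrac{1}{z-z_0}(R_{mm}(\sqrt z)-R_{mm}(\sqrt{z_0}))-T(z_0)\bigr)w^{-\delta}\bigr\|_{\rm HS}\to0$ as $z\to z_0$, so $z\mapsto R_{mm}(\sqrt z)\in\mathcal B(L^2_\delta,L^2_\eta)$ is complex-differentiable throughout $\Pi$, i.e.\ holomorphic in the operator norm.

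The only step I expect to require genuine care is this last interchange of the difference-quotient limit with the integration defining the Hilbert--Schmidt norm; it is dispatched by the $z$-uniform integrable majorant $M$ built from (\ref{Knureal}), (\ref{K0derivative}) and (\ref{rhodef}). The one other point worth flagging is that $\Pi$ must stay clear of the branch point $0$ and the cut $[0,\infty)$ of the square root — which holds precisely because $\lambda_2<0$ — so that $\sqrt z$, and with it the resolvent kernel, is honestly holomorphic there.
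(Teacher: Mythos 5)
Your proof is correct and follows essentially the same route as the paper's: both exhibit the explicit derivative kernel coming from $K_0'=-K_1$, bound it by a $z$-uniform, exponentially decaying majorant using (\ref{Knureal}), (\ref{K0derivative}), (\ref{rhodef}) and the lower bound $k_2\geq k_{20}$, and conclude differentiability in the Hilbert--Schmidt norm (hence in operator norm) as in Lemma~\ref{Rmm_compact}. The only cosmetic difference is that the paper controls the Taylor remainder of the kernel quantitatively by $c\,|k'-k|\,e^{-\frac12 k_{20}\rho|s-t|}$ (differentiating in $k$ rather than $z$), whereas you reach the same conclusion by dominated convergence against the majorant $M(s,t)$.
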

\begin{proof} Function $k=\sqrt{z}$ is holomorphic with the values inside (\ref{kboundsb}). So it is sufficient to prove that $R_{mm}(k)$ is holomorphic
in $k$ inside (\ref{kboundsb}).
Using relation $K_0'=-K_1$
and estimates (\ref{Knureal}), (\ref{K0derivative}), (\ref{rhodef}) we obtain
\begin{eqnarray*} \bigg| \frac{K_0(-i k'|\Gamma(s)-\Gamma(t)|) - K_0(-i k|\Gamma(s)-\Gamma(t)|)}{k'-k} -
\\
i |\Gamma(s)-\Gamma(t)| K_1(-i k|\Gamma(s)-\Gamma(t)|)\bigg|  \leq c |k'-k| e^{-\frac{1}{2} k_{20}\rho |s-t|}
\end{eqnarray*}
with a suitable constant c. Similarly as in the proof of Lemma \ref{Rmm_compact} we show that the derivative of $w^\eta R_{mm}(k)w^{-\delta}$ exists in the
Hilbert-Schmidt norm, and therefore also in $L^2$ norm and then the derivative of $R_{mm}(k)$ exists in the norm of $L^2_\delta({\mathbb R}) \to L^2_\eta({\mathbb R})$
operators.
\end{proof}

Let us define an auxiliary operator $R_0(k): L^2({\mathbb R}) \to L^2({\mathbb R})$ as
\begin{equation}
\label{R0def}
(R_0(k) f)(s)=\frac{1}{2\pi}\int_{\mathbb R} K_0(-ik|s-t|) f(t)\, dt
\end{equation}
using again the same symbol for its restrictions or extensions and assuming $\Im k>0$. This would be the operator $R_{mm}(k)$ in the case of straight line $\Gamma$. However, the
comparison of the two operators must not be confused with comparison of the say scattering along $\Gamma$ and along the line. Our space $L^2({\mathbb R},ds)$
still represents a space of functions defined on $\Gamma({\mathbb R})$.

The Fourier transformation ${\mathcal F}$ maps the integral operator $R_0(k)$ to the multiplication operator in $L^2({\mathbb R}, dp)$,
$$
{\mathcal F} R_0(k) {\mathcal F}^{-1} = \frac{1}{2\sqrt{p^2-k^2}}    \quad.
$$
This can be easily verified with the help of integral representation (7.3.15) from \cite{Bat-Erd} for $K_0$.
An operator similar to the one appearing in (\ref{full_resolvent}) can be now written as
\begin{equation}
\label{A_def}
{\mathcal F}(I-\alpha R_0(k))^{-1}{\mathcal F}^{-1} = A(k,p) = \frac{2\sqrt{p^2-k^2}}{2\sqrt{p^2-k^2} -\alpha}
\end{equation}
for $\Re k^2<0$ and $\Im k^2>0$. The last two formulae were already used in (2.7) and (2.8) of \cite{EK2005}.
\begin{lemma}
For $\Re k^2<0$ and $\Im k^2>0$, $A(k,p)$ defined in (\ref{A_def}) is a bounded operator in $L^2({\mathbb R})$ as well as in $H^{1,2}({\mathbb R})$.
For given $\varepsilon_1>0$, the operator is uniformly bounded for $\Im k^2 > \varepsilon_1$. The same holds for its derivatives with respect to $k^2$.
\end{lemma}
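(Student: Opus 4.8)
The plan is to reduce the whole statement to an elementary bound on the scalar Fourier multiplier $p\mapsto A(k,p)$. By definition $A(k,p)$ is multiplication by the function $p\mapsto 2\zeta/(2\zeta-\alpha)$ in the Fourier variable, where I write $\zeta=\zeta(k,p)=\sqrt{p^2-k^2}$ (the branch with $\Re\zeta>0$); note the convenient identity $A(k,p)=1+\alpha(2\zeta-\alpha)^{-1}$. Once I have shown that $\|A(k,\cdot)\|_{L^\infty(\mathbb R)}$ is finite, and uniformly bounded for $\Im k^2>\varepsilon_1$, the $L^2$-boundedness is immediate with operator norm equal to that supremum, and the $H^{1,2}$-boundedness follows with the same constant because $\|f\|_{H^{1,2}}^2=\int_{\mathbb R}(1+p^2)\,|({\mathcal F}f)(p)|^2\,dp$ carries the weight $(1+p^2)$, which commutes with multiplication by $A(k,p)$. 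So the only real work is the pointwise estimate of $A$ and of its $k^2$-derivatives.

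For the uniform bound, fix $\varepsilon_1>0$ and let $\Re k^2<0$, $\Im k^2>\varepsilon_1$ (only the last inequality is actually used). Since $p$ is real, $\zeta^2-\tfrac{\alpha^2}{4}=p^2-\tfrac{\alpha^2}{4}-k^2$ has imaginary part $-\Im k^2$, so
\[
\bigl|\zeta^2-\tfrac{\alpha^2}{4}\bigr|\ \geq\ \Im k^2\ >\ \varepsilon_1,\qquad |\zeta|^2=|p^2-k^2|\ \geq\ \Im k^2\ >\ \varepsilon_1 .
\]
If $|\zeta|\geq\alpha$ then $|2\zeta-\alpha|\geq 2|\zeta|-\alpha\geq|\zeta|$, whence $|A(k,p)|=2|\zeta|/|2\zeta-\alpha|\leq 2$. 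If $|\zeta|<\alpha$ then $|\zeta+\tfrac{\alpha}{2}|\leq|\zeta|+\tfrac{\alpha}{2}<\tfrac{3\alpha}{2}$, so
\[
|2\zeta-\alpha|=2\,\frac{|\zeta^2-\tfrac{\alpha^2}{4}|}{|\zeta+\tfrac{\alpha}{2}|}\ >\ \frac{4\varepsilon_1}{3\alpha},\qquad\text{hence}\qquad |A(k,p)|<\frac{3\alpha^2}{2\varepsilon_1}.
\]
Thus $\|A(k,\cdot)\|_{L^\infty}\leq\max\!\bigl(2,\tfrac{3\alpha^2}{2\varepsilon_1}\bigr)$, uniformly in $p$ and in $k$ with $\Im k^2>\varepsilon_1$; applied with $\varepsilon_1=\Im k^2$ this also gives the (non-uniform) boundedness for each individual $k$ with $\Re k^2<0$, $\Im k^2>0$.

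For the derivatives with respect to $k^2$ I would argue by analyticity rather than compute through. For fixed real $p$ the map $k^2\mapsto A(k,p)$ is holomorphic on $\{\Im k^2>0\}$: there $p^2-k^2$ has strictly negative imaginary part, hence lies off $(-\infty,0]$, so $\zeta$ is holomorphic and nonvanishing, while $2\zeta=\alpha$ would force $k^2=p^2-\tfrac{\alpha^2}{4}\in\mathbb R$. Applying the bound above with $\varepsilon_1/2$ in place of $\varepsilon_1$ on the closed disc of radius $\varepsilon_1/2$ about $k^2$ (which stays inside $\{\Im k^2>\varepsilon_1/2\}$) together with Cauchy's estimate gives, uniformly in $p$,
\[
\bigl|\partial_{k^2}^{\,n}A(k,p)\bigr|\ \leq\ \frac{n!}{(\varepsilon_1/2)^n}\,\max\!\Bigl(2,\frac{3\alpha^2}{\varepsilon_1}\Bigr)\qquad (\Im k^2>\varepsilon_1),
\]
and the corresponding operator bounds on $L^2(\mathbb R)$ and $H^{1,2}(\mathbb R)$ follow exactly as in the first paragraph. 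Equivalently, for $n=1$ one has $\partial_{k^2}A=\alpha\,\zeta^{-1}(2\zeta-\alpha)^{-2}$ from $\partial_{k^2}\zeta=-\tfrac{1}{2\zeta}$, estimated directly by $|\zeta|\geq\varepsilon_1^{1/2}$ and $|2\zeta-\alpha|\geq\min(\alpha,\tfrac{4\varepsilon_1}{3\alpha})$. The only substantive ingredient is this lower bound on $|2\sqrt{p^2-k^2}-\alpha|$: it expresses that the transversal resonance $2\sqrt{p^2-k^2}=\alpha$, i.e.\ $k^2=p^2-\tfrac{\alpha^2}{4}$ (a real value), is never reached off the real axis, and the quantitative gap $\gtrsim\varepsilon_1$ is precisely what makes all the bounds uniform on $\{\Im k^2>\varepsilon_1\}$. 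I do not expect any genuine difficulty beyond this point; everything else is bookkeeping.
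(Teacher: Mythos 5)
Your quantitative bounds are correct and in fact do more work than the paper, whose proof of this lemma is essentially an assertion that $A(k,p)$ is a bounded continuous function of $p$, that the same holds for $\partial_p A$, and that the uniform bounds and the statements for the $k^2$-derivatives "are also seen". Your explicit lower bounds $|\zeta|^2\geq\Im k^2$ and $|2\zeta-\alpha|\geq\min\bigl(|\zeta|,\tfrac{4\Im k^2}{3\alpha}\bigr)$, the resulting estimate $\|A(k,\cdot)\|_{L^\infty}\leq\max\bigl(2,\tfrac{3\alpha^2}{2\varepsilon_1}\bigr)$, and the Cauchy-estimate treatment of $\partial_{k^2}^nA$ are all sound and supply exactly the content the paper leaves implicit.

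There is, however, one point where you prove a different statement than the one intended. You read "bounded in $H^{1,2}({\mathbb R})$" as boundedness of the Fourier multiplier $(I-\alpha R_0(k))^{-1}$ on the Sobolev space in the \emph{original} variable, where the weight $(1+p^2)$ commutes with the symbol and the claim is free from the $L^\infty$ bound. The paper means multiplication by the function $A(k,\cdot)$ acting on $H^{1,2}({\mathbb R})$ in the variable $p$ itself --- this is why its proof invokes the boundedness of the derivative with respect to $p$, and it is the reading consistent with Lemma~\ref{Aholomorph} ("the operator of multiplication by $A(\sqrt{z},p)$ in $H^{1,2}({\mathbb R})$"). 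Under that reading your argument for the $H^{1,2}$ part does not apply: one must additionally show $\sup_p|\partial_pA(k,p)|<\infty$ uniformly. This follows from your own lower bounds together with
$$
\partial_pA(k,p)=-\frac{2\alpha p}{\zeta\,(2\zeta-\alpha)^2}\ ,\qquad \Bigl|\frac{p}{\zeta}\Bigr|^2=\frac{p^2}{|p^2-k^2|}\leq\frac{p^2}{p^2-\Re k^2}\leq 1\ ,
$$
but note that the last inequality uses $\Re k^2<0$ --- precisely the hypothesis you set aside as "not actually used". Without it, taking $k^2=p^2+i\varepsilon_1$ with $p\to\infty$ shows the uniform bound on $\partial_pA$ fails, so the hypothesis is genuinely needed for this half of the lemma. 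With this one supplement your proof covers the full statement.
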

\begin{proof} For a fixed $k$ satisfying the assumptions, (\ref{A_def}) is a bounded continuous function of $p$ so it gives a
bounded operator in $L^2$. The same holds for its derivative with respect to $p$ and so the operator is bounded also in $H^{1,2}$. Uniform bounds for
$\Im k^2 > \varepsilon_1$ are also seen. The derivatives of $A(k,p)$ with respect to $k^2$ have also these properties.
\end{proof}
\begin{lemma}
\label{Aholomorph}
Let $-\frac{\alpha^2}{4}<\lambda_1<\lambda_2<0$, $\Re z \in (\lambda_1,\lambda_2)$, $\Im z \not= 0$ and
Z(z) be the operator of multiplication by $A(\sqrt{z},p)$ in $H^{1,2}({\mathbb R})$ or $L^2({\mathbb R})$. Then $Z$ is an operator function holomorphic in the considered range of $z$
weakly, strongly as well as with respect to the norms of ${\mathcal L}(H^{1,2}({\mathbb R}))$ and ${\mathcal L}(L^2({\mathbb R}))$.
\end{lemma}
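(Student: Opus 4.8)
\emph{Proof plan.} Since norm‑holomorphy implies the strong and the weak versions, it suffices to show that $z\mapsto Z(z)$ is holomorphic in the operator norms of $\mathcal{L}(L^2(\mathbb{R}))$ and $\mathcal{L}(H^{1,2}(\mathbb{R}))$; here $Z(z)$ is multiplication by the symbol $p\mapsto A(\sqrt{z},p)$ in the momentum variable $p$. The plan is to use the explicit formula
\begin{equation*}
A(\sqrt{z},p)=1+\frac{\alpha}{2\sqrt{p^2-z}-\alpha}
\end{equation*}
(with the principal branch of the root, which is holomorphic in $z$ whenever $\Re z<0$, because then $p^2-z$ lies in the open right half‑plane for every $p\in\mathbb{R}$) in order to reduce the operator‑norm holomorphy to pointwise‑in‑$p$ holomorphy of the symbol together with bounds that are uniform in $p\in\mathbb{R}$, and then to establish those bounds by hand.

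First I would fix $z_0$ with $\Re z_0\in(\lambda_1,\lambda_2)$ and $\Im z_0\neq 0$, and a closed disc $K\ni z_0$ contained in $\{\,z:\Re z\in(\lambda_1,\lambda_2),\ |\Im z|\ge\delta_K\,\}$ for some $\delta_K>0$. The crucial point — which I expect to be the only real obstacle — is a resonance‑avoidance estimate: there is $c_K>0$ such that $|2\sqrt{p^2-z}-\alpha|\ge c_K$ for all $p\in\mathbb{R}$ and all $z\in K$. Writing $\sqrt{p^2-z}=u+iv$ with $u>0$, one has $2uv=-\Im z$ and $u^2\ge p^2-\Re z\ge|\lambda_2|>0$; hence either $2u\ge 2\alpha$, in which case $|2\sqrt{p^2-z}-\alpha|\ge\alpha$, or $0<u<\alpha$, in which case $|v|=|\Im z|/(2u)>\delta_K/(2\alpha)$, so in both cases $|2\sqrt{p^2-z}-\alpha|^2\ge\min(\alpha^2,\delta_K^2/\alpha^2)>0$. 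This is exactly where $\Im z\neq 0$ is used, and it is the reason the real axis must be excluded: for real $z\in(-\alpha^2/4,0)$ the denominator vanishes at $p^2=z+\alpha^2/4$. Combining the estimate with $|\sqrt{p^2-z}|=|p^2-z|^{1/2}\ge\max(|p|,|\lambda_2|^{1/2})$ and the explicit derivatives
\begin{equation*}
\partial_p A(\sqrt{z},p)=-\frac{2\alpha p}{\sqrt{p^2-z}\,(2\sqrt{p^2-z}-\alpha)^2}\,,\qquad \partial_z A(\sqrt{z},p)=\frac{\alpha}{\sqrt{p^2-z}\,(2\sqrt{p^2-z}-\alpha)^2}
\end{equation*}
(and the analogous higher derivatives) shows that for each fixed $p\in\mathbb{R}$ the functions $z\mapsto A(\sqrt{z},p)$ and $z\mapsto\partial_p A(\sqrt{z},p)$ are holomorphic on the interior of $K$, and that these functions, together with all their $z$‑derivatives, are bounded on $K$ uniformly in $p$.

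With these uniform bounds the conclusion is routine. For $z,z_0$ in the interior of $K$, the operator $Z(z)-Z(z_0)-(z-z_0)Z'(z_0)$, where $Z'(z_0)$ denotes multiplication by $\partial_z A(\sqrt{z_0},\cdot)$, has $\mathcal{L}(L^2)$‑norm equal to $\sup_{p\in\mathbb{R}}\bigl|A(\sqrt{z},p)-A(\sqrt{z_0},p)-(z-z_0)\partial_z A(\sqrt{z_0},p)\bigr|$ and $\mathcal{L}(H^{1,2})$‑norm controlled by this quantity plus the analogous supremum with $\partial_p$ applied to all three terms; by Cauchy's integral formula for the first‑order Taylor remainder over $\partial K$ of $z\mapsto A(\sqrt{z},p)$ and $z\mapsto\partial_p A(\sqrt{z},p)$ — uniformly bounded on $\partial K$ in $p$ by the previous step — both suprema are $O(|z-z_0|^2)$. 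Hence $Z$ is norm‑differentiable at $z_0$ with derivative $Z'(z_0)$, and since $z_0$ was arbitrary the lemma follows. (Equivalently, for $f,g\in L^2(\mathbb{R})$ resp. $H^{1,2}(\mathbb{R})$ the scalar functions $z\mapsto\langle Z(z)f,g\rangle$ are holomorphic by Morera's theorem and Fubini using the uniform‑in‑$p$ bounds on $A$ and $\partial_p A$, so the locally bounded family $Z$ is weakly holomorphic, and the standard theorem on operator‑valued analytic functions upgrades this to norm‑holomorphy.)
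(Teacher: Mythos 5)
Your proposal is correct and follows essentially the same route as the paper: both reduce the claim to the fact that $A(\sqrt{z},p)$ and its $p$-derivative are holomorphic in $z$ and bounded uniformly in $p$ on a compact neighbourhood of each $z_0$ off the real axis, and then upgrade to norm-holomorphy (the paper via weak holomorphy and the Hille--Phillips/Reed--Simon theorem, you via a Cauchy-formula Taylor-remainder estimate, which you also note is equivalent). The one thing you add that the paper leaves implicit is the uniform lower bound $|2\sqrt{p^2-z}-\alpha|\ge c_K>0$, which is exactly where $\Im z\neq 0$ enters and is worth having spelled out.
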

\begin{proof} For every $z_0$ in the considered range, there exist its neighborhood in ${\mathbb C}$ with the closure inside this range. Let us consider $z$
in this neighborhood only. Then
$A(\sqrt{z},p)$ and its derivative with respect to $z$ and $p$  are continuous bounded functions of $z$ and $p$, so the holomorphicity of the matrix
element $(\psi, Z(z) \varphi)_{H^{1,2}}$ for every $\psi, \varphi \in H^{1,2}$ is immediately seen. So $Z$ is weakly holomorphic and then also strongly and in the norm
(Theorem 3.10.1 of \cite{Hille-Phillips} and slightly adapted Theorem VI.4 of \cite{RSIV}). The same argument apply in $L^2({\mathbb R})$.
\end{proof}

For $\delta\in {\mathbb R}$ let us denote a space
\begin{equation*}
W_\delta =\{ \varphi \in {\mathcal S}'({\mathbb R}) \, | \, {\mathcal F} \varphi \in L^2_\delta({\mathbb R}) \} = H^{\delta,2}({\mathbb R})
\end{equation*}
with the norm $\|\varphi\|_{W_\delta} = \|{\mathcal F}\varphi\|_{L^2_\delta}$.  Let us denote an operator $\hat{p}=-i\partial/\partial x$ acting
on a suitable domain in the space $L^2_\delta({\mathbb R})$.
\begin{lemma}
\label{p-z-1}
Let $K$ be a compact subset of the open upper complex half-plane $\{z\in{\mathbb C} \, | \, \Im{z} >0\}$ and $\delta\in {\mathbb R}$.
Then the operator $(\hat{p}-z)^{-1}$ is uniformly bounded in $L^2_\delta({\mathbb R})$ for $z\in K$.
\end{lemma}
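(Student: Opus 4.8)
The plan is to reduce the statement to an explicit computation in the Fourier picture. The operator $\hat p = -i\partial/\partial x$ is, after conjugation by the Fourier transform $\mathcal F$, the operator of multiplication by the variable $p$. Thus $(\hat p - z)^{-1}$ is unitarily equivalent (as an operator on $L^2(\mathbb R)$) to multiplication by $(p-z)^{-1}$, whose norm is $\mathrm{dist}(z,\mathbb R)^{-1} = (\Im z)^{-1}$; on a compact set $K$ in the open upper half-plane this is bounded by a uniform constant. The issue is that the claimed bound is in $L^2_\delta(\mathbb R)$, not $L^2(\mathbb R)$, and $\mathcal F$ does not act nicely on the weighted \emph{position}-space spaces $L^2_\delta(\mathbb R)$. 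So the first step is to transfer the weight: realize the isometry $U_\delta : L^2_\delta(\mathbb R)\to L^2(\mathbb R)$, $(U_\delta f)(x) = w(x)^\delta f(x)$ with $w(x)=(1+x^2)^{1/2}$, and observe that $(\hat p - z)^{-1}$ is bounded on $L^2_\delta$ iff $U_\delta (\hat p - z)^{-1} U_\delta^{-1} = M_{w^\delta}\,(\hat p - z)^{-1} M_{w^{-\delta}}$ is bounded on $L^2(\mathbb R)$, with the same norm.

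Next I would compute this conjugated operator. Write $\hat p = \hat p$; then $M_{w^{-\delta}}$ does not commute with $\hat p$, but the commutator is lower order. Concretely, $[\hat p, M_{w^{-\delta}}] = M_{g}$ where $g = -i (w^{-\delta})' = i\delta\, x\, (1+x^2)^{-\delta/2-1}$ is a \emph{bounded} function of $x$. Using the resolvent identity in the form
\[
M_{w^\delta} (\hat p - z)^{-1} M_{w^{-\delta}} = (\hat p - z)^{-1} + M_{w^\delta}(\hat p - z)^{-1} [\,\hat p, M_{w^{-\delta}}\,](\hat p - z)^{-1}\,,
\]
one can, after iterating, reduce everything to products of the unweighted resolvent $(\hat p - z)^{-1}$ (norm $\le (\Im z)^{-1}$ on $L^2$) with multiplication operators by bounded functions of $x$ — but one still has the outer factor $M_{w^\delta}$ to absorb. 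A cleaner route: write $M_{w^\delta}(\hat p - z)^{-1} M_{w^{-\delta}} = \big(M_{w^\delta}(\hat p-\bar z)^{-1} M_{w^{-\delta}}\big)^* $-type manipulations, or simply argue by analyticity and decay. I would in fact prefer to avoid commutator gymnastics and instead note that the kernel of $(\hat p - z)^{-1}$ in position space is explicit: for $\Im z>0$ it is $i\,\theta(x-y)e^{iz(x-y)}$ (up to normalization), an $L^1$-in-$(x-y)$ convolution kernel with exponential decay governed by $\Im z>0$. Hence $M_{w^\delta}(\hat p-z)^{-1}M_{w^{-\delta}}$ has kernel $i\,\theta(x-y)\,e^{iz(x-y)}\,w(x)^\delta w(y)^{-\delta}$, and I would bound its $L^2\to L^2$ norm by Schur's test: sup over $x$ of $\int |{\cdots}|\,dy$ and sup over $y$ of $\int |{\cdots}|\,dx$. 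Because $e^{iz(x-y)}$ decays like $e^{-(\Im z)|x-y|}$ on the support $x>y$ and $w(x)^\delta w(y)^{-\delta}\le C_\delta (1+|x-y|)^{|\delta|}$, both Schur integrals are bounded by $\int_0^\infty (1+u)^{|\delta|} e^{-(\Im z) u}\,du < \infty$, uniformly for $\Im z$ bounded below on the compact set $K$.

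The main obstacle is precisely this last estimate: controlling the weight ratio $w(x)^\delta w(y)^{-\delta}$ uniformly. The key algebraic fact is that $w$ is (quasi-)submultiplicative, $w(x)\le \sqrt2\, w(x-y)\, w(y)$, so $w(x)^\delta w(y)^{-\delta}$ is dominated by a polynomial in $|x-y|$ when $\delta\ge 0$; for $\delta<0$ one uses the symmetric inequality in the roles of $x,y$, giving again a polynomial factor $(1+|x-y|)^{|\delta|}$ times a constant. Once this is in hand, the exponential gain from $\Im z \ge \min_{z\in K}\Im z =: \kappa > 0$ (compactness of $K$ inside the open half-plane is used exactly here) beats the polynomial, and Schur's test closes the argument with the explicit uniform bound $\|(\hat p-z)^{-1}\|_{L^2_\delta\to L^2_\delta} \le C_\delta \int_0^\infty (1+u)^{|\delta|} e^{-\kappa u}\,du$. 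I would also remark that $z\mapsto (\hat p-z)^{-1}$ is in fact holomorphic in the $\mathcal L(L^2_\delta)$-norm on the open upper half-plane — this follows from the same kernel representation and dominated convergence — which is what the subsequent development will want, though the statement as given only asks for the uniform bound.
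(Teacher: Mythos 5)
Your proof is correct, but it takes a genuinely different route from the paper. The paper stays in operator language: it starts from the $L^{2}$ bound $\|(\hat p-z)^{-1}\|\le (\Im z)^{-1}$, computes the commutator $[(\hat p-z)^{-1},w^{\delta}]=(\hat p-z)^{-1}\,(i\delta x w^{\delta-2})\,(\hat p-z)^{-1}$ (bounded on $L^{2}$ for $\delta\le 1$), obtains the bound on $L^{2}_{\delta}$ first for $0\le\delta\le1$, then climbs to all $\delta\ge0$ by an induction step $\delta\mapsto\delta+1$, and finally passes to $\delta<0$ by duality. You briefly entertain that commutator route but instead use the explicit position-space kernel $i\,\theta(x-y)e^{iz(x-y)}$ of the resolvent, conjugate by the weight, dominate the ratio $w(x)^{\delta}w(y)^{-\delta}$ by $C_{\delta}(1+|x-y|)^{|\delta|}$ via the quasi-submultiplicativity $w(x)\le\sqrt2\,w(x-y)w(y)$, and close with Schur's test against $\int_{0}^{\infty}(1+u)^{|\delta|}e^{-\kappa u}\,du$, $\kappa=\min_{K}\Im z>0$. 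Your argument handles all $\delta\in\mathbb{R}$ in one stroke, produces an explicit uniform constant, sidesteps both the induction and the duality step (and the attendant question of how $(\hat p-z)^{-1}$ is defined on $L^{2}_{-\delta}$, which the paper resolves "distributionally"), and, as you note, the same kernel representation gives the norm-holomorphy that the paper proves separately in later lemmas. What the paper's commutator method buys in exchange is independence from the explicit kernel, so it would survive in settings where the resolvent kernel is not available in closed form; here, where it is, your route is the more economical one.
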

\begin{proof} The operator $\hat{p}$ is self-adjoint in $L^2$ and therefore $(\hat{p}-z)^{-1}$ is uniformly bounded there by $1/y_0$ where
$y_0=\inf\{\Im z\, | \, z\in K\}>0$.
Let us calculate
\begin{eqnarray*}
[(\hat{p}-z)^{-1},w(x)^\delta]=(\hat{p}-z)^{-1}[w(x)^\delta,(\hat{p}-z)](\hat{p}-z)^{-1} =
\\
(\hat{p}-z)^{-1} \left(i\delta x w(x)^{\delta-2}\right) (\hat{p}-z)^{-1}
\end{eqnarray*}
which is uniformly bounded in $L^2$ for $\delta\leq 1$ (applying $[w^\delta,(\hat{p}-z)]$ first in ${\mathcal S}$ and then extending to $L^2$ by the density).

For $\psi\in L^2_\delta$, $0\leq \delta \leq 1$ then
\begin{eqnarray*}
\| (\hat{p}-z)^{-1}\psi \|_{L^2_\delta} = \|w^\delta (\hat{p}-z)^{-1}\psi \|_{L^2} =
\\
\| (\hat{p}-z)^{-1}w^\delta \psi + [w^\delta,(\hat{p}-z)^{-1}]\psi \|_{L^2} \leq
\\
\left(y_0^{-1} + \|[w^\delta,(\hat{p}-z)^{-1}]\|_{L^2\to L^2}\right) \|\psi\|_{L^2_\delta}
\end{eqnarray*}
remembering that $\|\psi\|_{L^2} \leq \|\psi\|_{L^2_\delta}$. So $(\hat{p}-z)^{-1}$ is uniformly bounded in $L^2_\delta$ for $0\leq \delta \leq 1$.

Assume now that $(\hat{p}-z)^{-1}$ is uniformly bounded in $L^2_\delta$ with respect to $z\in K$ for some $\delta\geq 0$ and $\psi\in L^2_{\delta + 1}$. Then
\begin{eqnarray*}
\|(\hat{p}-z)^{-1} \psi\|_{L^2_{\delta+1}} =
\\
\|(\hat{p}-z)^{-1}w^{\delta+1} \psi - i (\delta+1)(\hat{p}-z)^{-1}xw^{\delta-1}(\hat{p}-z)^{-1} \psi\|_{L^2} \leq
\\
\|(\hat{p}-z)^{-1}\|_{L^2 \to L^2}\left( \|\psi\|_{L^2_{\delta+1}}  + |\delta+1| \|(\hat{p}-z)^{-1}\|_{L^2_\delta \to L^2_\delta} \|\psi\|_{L^2_\delta} \right) \leq
\\
\|(\hat{p}-z)^{-1}\|_{L^2 \to L^2} \left(1  +|\delta+1| \|(\hat{p}-z)^{-1}\|_{L^2_\delta \to L^2_\delta}\right) \|\psi\|_{L^2_{\delta+1}}
\end{eqnarray*}
and $(\hat{p}-z)^{-1}$ is uniformly bounded in $L^2_{\delta+1}$ with respect to $z\in K$. By induction, it is now uniformly bounded in every $L^2_{\delta}$, $\delta\geq 0$.

Now we use the distribution like definition of $(\hat{p}-z)^{-1}$ in $L^2_{-\delta}$ and its uniform boundedness in $L^2_{\delta}$ extends to every $\delta \in {\mathbb R}$
by the duality. This completes the proof of the lemma.
\end{proof}
\begin{lemma}
\label{plambda2}
Let $\delta>\frac{3}{2}$. Then there exists a finite $c$ such that for every $\lambda\in {\mathbb C}$
and $\varphi\in {\mathcal S}({\mathbb R})$,
$$
\|\varphi\|_{W_{-\delta}} \leq c \|(p-\lambda)^2\varphi\|_{W_\delta}
$$
where
$p$ is a multiplication by the variable in ${\mathcal S}({\mathbb R})$.
\end{lemma}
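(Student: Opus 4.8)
The plan is to pass to the Fourier picture, in which $W_{\pm\delta}$ become the weighted spaces $L^2_{\pm\delta}(\mathbb R)$ and multiplication by $(p-\lambda)^2$ turns into a constant-coefficient second-order ordinary differential operator; one then solves the resulting linear ODE explicitly on Schwartz data, reads off a $\lambda$-uniform pointwise bound carrying a single power of the weight $w$, and finishes by squaring and integrating.

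In detail, I would set $g=\mathcal F\varphi\in\mathcal S(\mathbb R)$, so that $\|\varphi\|_{W_{-\delta}}=\|g\|_{L^2_{-\delta}}$, and $F:=\mathcal F[(p-\lambda)^2\varphi]=(i\partial_\xi-\lambda)^2 g\in\mathcal S(\mathbb R)$, so that $\|(p-\lambda)^2\varphi\|_{W_\delta}=\|F\|_{L^2_\delta}$. Putting $u=(i\partial_\xi-\lambda)g\in\mathcal S$ and integrating the first-order equation $iu'-\lambda u=F$ with the integrating factor $e^{i\lambda\xi}$ — starting the integration at $-\infty$ when $\Im\lambda\le 0$ and at $+\infty$ when $\Im\lambda\ge 0$, so that the boundary term vanishes because $u$ is Schwartz while $|e^{i\lambda\xi}|$ decays at that end — expresses $u$ as a one-sided ``convolution'' of $F$ with $e^{i\lambda\cdot}$; one more such integration expresses $g$ through $u$, and collapsing the two nested integrals by Fubini gives, say for $\Im\lambda\le0$,
\[
g(\xi)=-\int_{-\infty}^{\xi}(\xi-s)\,e^{i\lambda(s-\xi)}\,F(s)\,ds
\]
(and the mirror formula over $(\xi,\infty)$ when $\Im\lambda\ge0$). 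Since $|e^{i\lambda(s-\xi)}|\le 1$ on the relevant half-line for every $\lambda$, this yields the $\lambda$-uniform estimate $|g(\xi)|\le\int_{\mathbb R}|\xi-s|\,|F(s)|\,ds$.

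From here it is a routine weighted estimate. Writing $|\xi-s|\le w(\xi)+w(s)$ and using $w\ge1$, the bound reduces to controlling $\int_{\mathbb R}|F(s)|\,ds$ and $\int_{\mathbb R}w(s)|F(s)|\,ds$ by $\|F\|_{L^2_\delta}$, which follows from Cauchy--Schwarz after splitting $w(s)^{k}|F(s)|=w(s)^{k-\delta}\cdot w(s)^{\delta}|F(s)|$ for $k=0,1$; the second integral is finite exactly because $\int_{\mathbb R}(1+s^2)^{1-\delta}\,ds<\infty$ for $\delta>\tfrac32$. This gives $|g(\xi)|\le c_1\,w(\xi)\,\|F\|_{L^2_\delta}$ with $c_1$ depending only on $\delta$, and then multiplying by $w(\xi)^{-2\delta}$ and integrating — invoking $\int_{\mathbb R}(1+\xi^2)^{1-\delta}\,d\xi<\infty$ once more — gives $\|g\|_{L^2_{-\delta}}\le c\,\|F\|_{L^2_\delta}$. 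Undoing the Fourier transform yields the statement.

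The one point needing care is the choice of the direction of integration and the vanishing of the boundary contributions uniformly in $\lambda$ — in particular for real $\lambda$, where $i\partial_\xi-\lambda$ fails to be invertible on $L^2$ and the homogeneous solutions $e^{i\lambda\xi}$, $\xi e^{i\lambda\xi}$ even lie in $L^2_{-\delta}$. This causes no trouble here because the integral representation of $g$ is \emph{derived} from the Schwartz decay of $g$ and $u$ rather than imposed, so it automatically represents $g$ itself and no separate uniqueness discussion is needed; the remaining steps — the Fubini interchange and the weight manipulations — are elementary, and the hypothesis $\delta>\tfrac32$ enters only through the convergence of $\int(1+s^2)^{1-\delta}\,ds$.
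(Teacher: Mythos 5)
Your proposal is correct and follows essentially the same route as the paper: reduce via the Fourier transform to a weighted $L^2$ estimate for the second-order constant-coefficient operator, invert it explicitly by a one-sided integral (with the direction of integration chosen so the exponential factor has modulus at most one, uniformly in $\lambda$), and then bound the resulting kernel $|\xi-s|$ by $w(\xi)+w(s)$ and apply Cauchy--Schwarz, with $\delta>\tfrac32$ entering exactly through $\int(1+s^2)^{1-\delta}\,ds<\infty$. The paper performs the identical argument with the operator written as $(\tfrac{d}{dx}-\lambda)^2$ and the case split on $\Re\lambda$ rather than $\Im\lambda$, which is the same computation after absorbing the factor of $i$.
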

\begin{proof} The statement is equivalent to the existence of $c$ such that
\begin{equation}
\label{phi_x_psi_ineq}
\|\varphi\|_{L^2_{-\delta}} \leq c \|(\frac{d}{dx}-\lambda)^2\varphi\|_{L^2_\delta}
\end{equation}
for every $\lambda\in {\mathbb C}$ and $\varphi\in {\mathcal S}({\mathbb R})$
which we prove following the proof of Lemma 3 in par. XIII.8 in \cite{RSIV}. Let us assume that $\lambda\in {\mathbb C}$ and $\varphi\in {\mathcal S}({\mathbb R})$ are given and put
$$
\psi = \left(\frac{d}{dx}-\lambda\right)^2\varphi \quad .
$$
Solving this equation we can express $\varphi$ through $\psi$ knowing that both are in ${\mathcal S}({\mathbb R})$.

Assume first the $\Re \lambda \leq 0$. Then
$$
\varphi(x)= \int_{-\infty}^x (x-y)\, e^{\lambda (x-y)}\, \psi(y) \, dy
$$
and
\begin{equation}
\label{phi_psi_estimate}
|\varphi(x)| \leq |x| \|\psi\|_{L^1} + \| y\psi\|_{L^1} \leq c (1+|x|) \|\psi\|_{L^2_\delta}
\end{equation}
with a constant $c$ dependent on $\delta >3/2$ but independent of $\lambda$ and $\varphi$. Here the relation
$$
\|y\psi\|_{L^1} = (w^{1-\delta},|y|w^{\delta-1}|\psi|) \leq \|w^{1-\delta}\|_{L_2} \|\psi\|_{L^2_\delta}
$$
and a similar one for $\|\psi\|_{L^1}$ were used.

For $\Re \lambda >0 $,
$$
\varphi(x) = \int^{+\infty}_x (y-x)\, e^{-\lambda (y-x)}\, \psi(y) \, dy
$$
and (\ref{phi_psi_estimate}) follows again directly giving (\ref{phi_x_psi_ineq}).
\end{proof}

The following lemma is an analog of Lemma 7 of par. XIII.8 in \cite{RSIV}.
\begin{lemma}
\label{R0lim}
Let $\delta > \frac{3}{2}$. Then there exists
\begin{equation}
\label{R00lim}
\lim_{y\to 0^+} (p-x-i y)^{-1}
\end{equation}
in the sense of ${\mathcal L}(W_\delta,W_{-\delta})$ uniformly with respect to $x\in {\mathbb R}$,
and
\begin{equation}
\label{R0inftylim} \lim_{y\to \pm\infty} (p-x-i y)^{-1} = 0
\end{equation}
in the sense of ${\mathcal L}(W_\delta,W_{-\delta})$ uniformly with respect to $x\in {\mathbb R}$.
\end{lemma}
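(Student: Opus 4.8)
The plan is to pass to the Fourier representation. Since $\mathcal F : W_\sigma\to L^2_\sigma$ is an isometric isomorphism for every $\sigma\in\mathbb R$, the asserted convergences in $\mathcal L(W_\delta,W_{-\delta})$ are equivalent to the corresponding convergences in $\mathcal L(L^2_\delta,L^2_{-\delta})$ of the conjugated operators $S_{x,y}:=\mathcal F\,(p-x-iy)^{-1}\,\mathcal F^{-1}$. For $\varphi\in\mathcal S(\mathbb R)$ and $y\neq 0$, $(p-x-iy)^{-1}\varphi$ is the Schwartz function $t\mapsto \varphi(t)/(t-x-iy)$, so $S_{x,y}$ is convolution by a fixed multiple of $\mathcal F[(t-x-iy)^{-1}]$; a residue computation gives, for $y>0$, that this Fourier transform is supported on $(-\infty,0)$ (in the paper's normalization) and equals $2\pi i\,e^{(-ix+y)\xi}$ there. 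Hence
\[
(S_{x,y}g)(\xi)=c\int_\xi^\infty e^{(-ix+y)(\xi-\zeta)}\,g(\zeta)\,d\zeta
\]
with a fixed nonzero constant $c$, and the candidate for the limit is $(S_{x,0}g)(\xi)=c\int_\xi^\infty e^{-ix(\xi-\zeta)}g(\zeta)\,d\zeta$, i.e.\ the Fourier conjugate of multiplication by the boundary value $(t-x-i0)^{-1}$.

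It then remains to establish two estimates; write $w(\xi)=(1+\xi^2)^{1/2}$. First, $S_{x,0}$ maps $L^2_\delta$ to $L^2_{-\delta}$ with a bound independent of $x$: by Cauchy--Schwarz $|(S_{x,0}g)(\xi)|\le |c|\int_\xi^\infty|g|\le |c|\big(\int_{\mathbb R}w^{-2\delta}\big)^{1/2}\|g\|_{L^2_\delta}$ ($\delta>\tfrac12$ already suffices for $\int_{\mathbb R}w^{-2\delta}<\infty$), which is uniform in $\xi$, so $\|S_{x,0}g\|_{L^2_{-\delta}}\le |c|\big(\int_{\mathbb R}w^{-2\delta}\big)\|g\|_{L^2_\delta}$. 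Second, the phase $e^{-ix(\xi-\zeta)}$ has modulus one and so drops out of the modulus estimate; using $0\le 1-e^{y(\xi-\zeta)}\le y(\zeta-\xi)$ for $\zeta\ge\xi$ and Cauchy--Schwarz,
\[
|(S_{x,y}-S_{x,0})g(\xi)|\le |c|\,y\int_\xi^\infty(\zeta-\xi)|g(\zeta)|\,d\zeta\le |c|\,y\Big(\int_{\mathbb R}(\zeta-\xi)^2 w(\zeta)^{-2\delta}\,d\zeta\Big)^{1/2}\|g\|_{L^2_\delta}.
\]
Expanding $(\zeta-\xi)^2$ and using that $\int_{\mathbb R}\zeta^2 w^{-2\delta}\,d\zeta$ is finite \emph{exactly because} $\delta>\tfrac32$ (the cross term $\int_{\mathbb R}\zeta\,w^{-2\delta}\,d\zeta$ vanishing by oddness), the inner integral is $\le c_\delta\,w(\xi)^2$; hence $\|(S_{x,y}-S_{x,0})g\|_{L^2_{-\delta}}^2\le |c|^2 c_\delta\,y^2\big(\int_{\mathbb R}w^{2-2\delta}\big)\|g\|_{L^2_\delta}^2$, and $\int_{\mathbb R}w^{2-2\delta}<\infty$ again exactly because $\delta>\tfrac32$. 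Thus $\|S_{x,y}-S_{x,0}\|_{\mathcal L(L^2_\delta,L^2_{-\delta})}\le C_\delta\,y\to 0$ as $y\to 0^+$, uniformly in $x$; transporting back by $\mathcal F$ this is (\ref{R00lim}).

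The second assertion (\ref{R0inftylim}) is cheaper: for $y\neq 0$ multiplication by $(t-x-iy)^{-1}$ has sup-norm $|y|^{-1}$ and so is bounded on $W_0\cong L^2(\mathbb R)$ with norm $\le|y|^{-1}$; composing with the norm-one embeddings $W_\delta\hookrightarrow W_0$ and $W_0\hookrightarrow W_{-\delta}$ (valid for $\delta\ge 0$ since $w^{-\delta}\le 1\le w^\delta$) gives $\|(p-x-iy)^{-1}\|_{\mathcal L(W_\delta,W_{-\delta})}\le|y|^{-1}\to 0$ as $y\to\pm\infty$, uniformly in $x$.

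The only genuinely delicate step is the weighted bookkeeping in the estimate for $S_{x,y}-S_{x,0}$: one must check that, with $\delta>\tfrac32$, both the inner integral $\int(\zeta-\xi)^2 w(\zeta)^{-2\delta}\,d\zeta$ and the outer integral against $w(\xi)^{-2\delta}$ converge and leave a clean factor $y^2$. The weight $(\zeta-\xi)^2$ produced by the elementary bound $1-e^{y(\xi-\zeta)}\le y(\zeta-\xi)$ is precisely what forces the threshold $\delta>\tfrac32$ rather than $\delta>\tfrac12$. (Alternatively one may first conjugate away $x$ by a position-space translation, which is a $W_\sigma$-isometry, so the uniformity in $x$ is automatic and only $S_{0,y}\to S_{0,0}$ needs to be estimated; this is the structural reason behind the uniformity.) The residue computation, the phase-free reduction, and the $y\to\pm\infty$ bound are all routine.
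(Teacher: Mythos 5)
Your proof is correct, but it takes a different route from the paper's for the main limit (\ref{R00lim}). The paper never identifies the limit operator: it writes the difference of two resolvents as $\int_y^{y'}\frac{d}{dt}(p-x-it)^{-1}\,dt = \int_y^{y'} i(p-x-it)^{-2}\,dt$, invokes the separately proved Lemma \ref{plambda2} (a bound $\|(p-\lambda)^{-2}\|_{W_\delta\to W_{-\delta}}\leq c$ uniform over \emph{all} $\lambda\in{\mathbb C}$) to get a Lipschitz estimate $c|y'-y|$, and concludes by completeness of ${\mathcal L}(W_\delta,W_{-\delta})$ that the family is uniformly Cauchy. You instead conjugate by the Fourier transform, compute the convolution kernel of $(p-x-iy)^{-1}$ explicitly by residues, exhibit the limit operator $S_{x,0}$ concretely, and prove the Lipschitz bound $\|S_{x,y}-S_{x,0}\|\leq C_\delta\,y$ by the elementary inequality $1-e^{y(\xi-\zeta)}\leq y(\zeta-\xi)$ plus weighted Cauchy--Schwarz. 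The two arguments are cousins at bottom --- the factor $(\zeta-\xi)$ your inequality produces is exactly the kernel weight that the second power of the resolvent carries in the paper's Lemma \ref{plambda2}, and both routes hit the threshold $\delta>\tfrac{3}{2}$ for the same reason (convergence of $\int\zeta^2w^{-2\delta}$ and $\int w^{2-2\delta}$, versus $w^{1-\delta}\in L^2$ in the paper) --- but yours is self-contained, avoids the auxiliary lemma, and has the bonus of identifying the boundary operator explicitly, whereas the paper's detour through $(p-\lambda)^{-2}$ pays off later because that uniform bound is reused in Lemmas \ref{pzderivative} and \ref{analprolong}. Your treatment of (\ref{R0inftylim}) coincides with the paper's.
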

\begin{proof} Let $0<y,y'<1$. Then
\begin{eqnarray*}
\| (p-x-i y')^{-1} -  (p-x-i y)^{-1} \|_{W_\delta\to W_{-\delta}} =
\\
\left\| \int_y^{y'} \frac{d}{dt}  (p-x-i t)^{-1} dt \right\|_{W_\delta\to W_{-\delta}}  \leq
\\
\left| \int_y^{y'} \|i  (p-x-i t)^{-2}  \|_{W_\delta\to W_{-\delta}} \, dt \right| \leq c |y' - y|
\end{eqnarray*}
according to Lemma \ref{plambda2} because the range of $(p-x-it)^2$ contains ${\mathcal S}({\mathbb R})$ for $t>0$. Then the limit (\ref{R00lim}) uniformly exists as ${\mathcal L}(W_\delta,W_{-\delta})$ is a Banach space.

For (\ref{R0inftylim}), the estimate
$$
|(p-x-iy)^{-1}|\leq |y|^{-1}
$$
is sufficient showing (\ref{R0inftylim})  in ${\mathcal L}(L^2)$, so it holds after Fourier transform in ${\mathcal L}(L^2)$ and then also in
${\mathcal L}(L^2_\delta, L^2_{-\delta})$  and by inverse Fourier transform in ${\mathcal L}(W_\delta,W_{-\delta})$.
\end{proof}
\begin{lemma}
\label{pzderivative}
Let $\delta>\frac{3}{2}$. Then the operator $(p-z)^{-1}$  is holomorphic in $z\in{\mathbb C}$, $\Im z >0$ with respect to ${\mathcal L}(W_\delta,W_{-\delta})$ norm,
its derivative being $(p-z)^{-2}$.
\end{lemma}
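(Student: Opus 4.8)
The plan is to establish holomorphy by showing directly that the difference quotient converges in $\mathcal{L}(W_\delta,W_{-\delta})$ to $(p-z)^{-2}$, locally uniformly on the open upper half-plane, using only the resolvent identity and the uniform bounds already proved. The natural first move is to pass to the Fourier side: $\mathcal F$ is an isometric isomorphism of $W_\delta$ onto $L^2_\delta$ and of $W_{-\delta}$ onto $L^2_{-\delta}$, and it intertwines multiplication by the variable ($p$) with $\pm\hat p$, so $(p-z)^{-1}$ on $W_\delta$ is unitarily equivalent to a resolvent $\pm(\hat p-w)^{-1}$ on $L^2_\delta$ with $w=\mp z$ again off the real axis. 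Hence it suffices to prove the statement with $p$ and $W_{\pm\delta}$ replaced by $\hat p$ and $L^2_{\pm\delta}$. The advantage is that Lemma~\ref{p-z-1} applies: its proof uses only that the compact set is at positive distance from $\mathbb R$, so it equally covers compact subsets $K$ of the lower half-plane, giving that $(\hat p-w)^{-1}$ is bounded on $L^2_\sigma$, for any real weight $\sigma$, uniformly for $w\in K$. Composing with itself and with the continuous inclusion $L^2_\delta\hookrightarrow L^2_{-\delta}$ (valid since $\delta>0$), one obtains that $(\hat p-w)^{-2}\colon L^2_\delta\to L^2_{-\delta}$ is bounded uniformly for $w\in K$ — equivalently, this uniform bound is exactly Lemma~\ref{plambda2} transported through $\mathcal F$.

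The core step is then the elementary identity, valid because all operators involved act as multiplications in the $\hat p$-representation: for $w\ne w'$ off $\mathbb R$,
\[
\frac{(\hat p-w')^{-1}-(\hat p-w)^{-1}}{w'-w}-(\hat p-w)^{-2}
=(w'-w)\,(\hat p-w')^{-1}(\hat p-w)^{-2},
\]
which comes from applying the resolvent identity $(\hat p-w')^{-1}-(\hat p-w)^{-1}=(w'-w)(\hat p-w')^{-1}(\hat p-w)^{-1}$ twice (first on $\mathcal S(\mathbb R)$, then extended by density using the bounds above). Bounding the right-hand side in $\mathcal{L}(L^2_\delta,L^2_{-\delta})$ by $\|(\hat p-w')^{-1}\|_{\mathcal{L}(L^2_{-\delta})}\,\|(\hat p-w)^{-2}\|_{L^2_\delta\to L^2_{-\delta}}$ and using the uniform estimates of the previous paragraph for $w,w'$ in a fixed compact $K\subset\mathbb C\setminus\mathbb R$, one gets a bound $C_K\,|w'-w|$. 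Letting $w'\to w$ shows the difference quotient converges in operator norm to $(\hat p-w)^{-2}$; as every point of $\mathbb C\setminus\mathbb R$ lies in such a $K$, the resolvent is holomorphic in $\mathcal{L}(L^2_\delta,L^2_{-\delta})$ with derivative $(\hat p-w)^{-2}$. Transporting back through $\mathcal F$, and keeping track of the sign/chain-rule coming from $w=\mp z$, yields that $z\mapsto(p-z)^{-1}$ is holomorphic in $\mathcal{L}(W_\delta,W_{-\delta})$ on $\{\Im z>0\}$ with derivative $(p-z)^{-2}$.

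I do not expect a real obstacle; the two things needing a little attention are (i) observing that Lemma~\ref{p-z-1} is genuinely a statement about compacts at positive distance from $\mathbb R$, hence insensitive to which half-plane the Fourier transform sends $z$ into, and (ii) justifying the algebraic manipulations at the level of bounded operators on $W_{\pm\delta}$ (equivalently $L^2_{\pm\delta}$) rather than merely on $\mathcal S(\mathbb R)$, which is immediate from density once the uniform bounds are in hand.
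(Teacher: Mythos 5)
Your proposal is correct and follows essentially the same route as the paper: the double application of the resolvent identity to get $(z_2-z_1)^{-1}\bigl((p-z_2)^{-1}-(p-z_1)^{-1}\bigr)-(p-z_1)^{-2}=(z_2-z_1)(p-z_2)^{-1}(p-z_1)^{-2}$, then the norm bound $\|(p-z_2)^{-1}\|_{W_{-\delta}\to W_{-\delta}}\,\|(p-z_1)^{-2}\|_{W_\delta\to W_{-\delta}}$ via Lemmas \ref{p-z-1} and \ref{plambda2}. The paper works directly in the $W_{\pm\delta}$ spaces rather than conjugating explicitly by $\mathcal F$, but that is only a notational difference, and your extra care about the sign of the Fourier conjugation and the half-plane in Lemma \ref{p-z-1} is a legitimate (if minor) point the paper leaves implicit.
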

\begin{proof} By Lemma \ref{p-z-1}, $(p-z)^{-1} \in {\mathcal L}(W_\delta,W_{\delta}) \subset {\mathcal L}(W_\delta,W_{-\delta})$.
Let $\Im z_1>0$, $K$ be its bounded neighborhood with the closure inside the upper half-plane, $z_2 \in K$. We have
\begin{eqnarray*}
\left\| (z_2-z_1)^{-1} ((p-z_2)^{-1} - (p-z_1)^{-1}) - (p-z_1)^{-2} \right\|_{W_\delta \to W_{-\delta}} =
\\
|z_2 - z_1| \left\|(p-z_2)^{-1}(p-z_1)^{-2} \right\|_{W_\delta \to W_{-\delta}} \leq
\\
|z_2 - z_1| \left\|(p-z_2)^{-1}\right\|_{W_{-\delta} \to W_{-\delta}}  \left\|(p-z_1)^{-2} \right\|_{W_\delta \to W_{-\delta}}   \quad. \end{eqnarray*}
With the help of Lemmas \ref{p-z-1} and \ref{plambda2} the limit $z_2 \to z_1$ equals zero and the existence of the derivative at the point $z_1$ is seen.
\end{proof}
\begin{lemma}
\label{analprolong}
Let $\delta>\frac{3}{2}$. Then the operator function $z\mapsto (p-z)^{-1}$ can be analytically continued
in ${\mathcal L}(W_\delta,W_{-\delta})$ from the upper complex half-plane into the lower one.
\end{lemma}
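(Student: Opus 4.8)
The plan is to imitate the scheme of \S XIII.8 of \cite{RSIV} (of which Lemma~\ref{R0lim} was the analogue of Lemma~7 there). The essential point is that, in the spectral realisation, $(p-z)^{-1}$ is the operator of multiplication by $(\lambda-z)^{-1}$, whose kernel depends rationally — in particular holomorphically — on $z$; what changes when $z$ crosses the real axis is only the distributional reading of the boundary value. So I would construct the continuation by hand on a convenient dense subspace and then transport it to $\mathcal L(W_\delta,W_{-\delta})$ by a uniform bound. First I would fix the dense subspace $\mathcal D=\{\varphi\in W_\delta:\mathcal F\varphi\in C^\infty_c(\mathbb R)\}$, whose members are Paley--Wiener functions (entire, Schwartz on every horizontal line). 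For $\varphi,\psi\in\mathcal D$ and $\Im z>0$ the bilinear form $\langle\psi,(p-z)^{-1}\varphi\rangle$ is an integral along $\mathbb R$ of the entire function $\psi\varphi$ divided by $\lambda-z$; since the pole at $\lambda=z$ lies strictly in the upper half-plane I may deform the contour from $\mathbb R$ to $\mathbb R-i\eta$ for a small $\eta>0$ (limited only by the first singularity of the weight, so $\eta<1$) without crossing it. The deformed integral $\int_{\mathbb R-i\eta}\psi(\lambda)\varphi(\lambda)(\lambda-z)^{-1}\,d\lambda$ is manifestly holomorphic for $\Im z>-\eta$, and hence extends the form analytically through the real axis into a one-sided strip neighbourhood of it in the lower half-plane.

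Next I would identify the extension: comparing the deformed integral with the ordinary one for $-\eta<\Im z<0$ and collecting the residue at $\lambda=z$ gives, in operator form,
\[
\widetilde{(p-z)^{-1}}=(p-z)^{-1}+2\pi i\,\Pi(z)\qquad(-\eta<\Im z<0),
\]
with $\Pi(z)$ the rank-type residue contribution (the Sokhotski--Plemelj jump), the whole being consistent at $\Im z=0$ with the boundary value furnished by Lemma~\ref{R0lim}. This is where the hypothesis $\delta>\tfrac32$ is used: the boundary term and its first $z$-derivative involve $\delta^{(j)}$ with $j\le1$, which belong to $W_{-\delta}$ precisely when $\delta>\tfrac32$. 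Then I would pass from $\mathcal D$ to $\mathcal L(W_\delta,W_{-\delta})$: by Lemma~\ref{pzderivative} the derivative of $z\mapsto(p-z)^{-1}$ on the upper half-plane is $(p-z)^{-2}$, and by Lemma~\ref{plambda2} one has $\|(p-z)^{-2}\|_{W_\delta\to W_{-\delta}}\le c$ uniformly for $z\notin\mathbb R$; repeating the contour argument for $(p-z)^{-2}$ shows that it too continues and stays uniformly bounded in the strip, so the continued $z\mapsto\widetilde{(p-z)^{-1}}$ is uniformly Lipschitz there on $\mathcal D$ in the $\mathcal L(W_\delta,W_{-\delta})$ norm. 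By density it extends to an $\mathcal L(W_\delta,W_{-\delta})$-valued function, and a locally bounded operator function holomorphic on a dense set of vectors is holomorphic (Theorem~3.10.1 of \cite{Hille-Phillips} and the adapted Theorem~VI.4 of \cite{RSIV}, as already invoked for Lemma~\ref{Aholomorph}). This yields a holomorphic $\mathcal L(W_\delta,W_{-\delta})$-valued function on a domain meeting both half-planes that restricts to $(p-z)^{-1}$ on $\{\Im z>0\}$, i.e.\ the asserted continuation.

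The genuinely delicate point I expect to be the main obstacle is the uniform estimate in the last step: one must read the contour-deformed formula as a bounded operator $W_\delta\to W_{-\delta}$ with a bound independent of the chosen member of $\mathcal D$, whose elements have varying exponential type, so a crude estimate directly on the deformed contour is not enough; this has to be arranged through Lemmas~\ref{plambda2}, \ref{pzderivative} and \ref{R0lim} exactly as the corresponding estimate is handled in \cite{RSIV}. Everything else — the contour shift, the residue bookkeeping, and the upgrade from weak to norm holomorphy — is routine.
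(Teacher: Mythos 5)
There is a genuine gap, and it sits exactly where you place your ``genuinely delicate point'': the uniform bound needed to pass from $\mathcal D$ to $\mathcal L(W_\delta,W_{-\delta})$ does not exist, so the density step cannot be carried out by any refinement of the estimates. Your own Plemelj bookkeeping shows why. For $\psi,\varphi\in\mathcal D$ and $-\eta<\Im z<0$ the contour-shifted form equals
\[
\bigl\langle\psi,(p-z)^{-1}\varphi\bigr\rangle+2\pi i\,\overline{\psi(\bar z)}\,\varphi(z),
\]
where $\varphi(z)$ is the Paley--Wiener (entire) extension of $\varphi$ evaluated at the complex point $z$. The functional $\varphi\mapsto\varphi(z)$ is unbounded on $(\mathcal D,\|\cdot\|_{W_\delta})$ as soon as $\Im z<0$: writing $\varphi(z)=c\int e^{iz\xi}({\mathcal F}\varphi)(\xi)\,d\xi$ and taking ${\mathcal F}\varphi_n=n^{-\delta}\chi(\cdot-n)$ for a fixed bump $\chi$ gives $\|\varphi_n\|_{W_\delta}=O(1)$ while $|\varphi_n(z)|\gtrsim n^{-\delta}e^{|\Im z|\,n}\to\infty$; a polynomial weight cannot control the exponential growth of analytic extensions at complex points. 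Since the first term \emph{is} bounded (the lower half-plane analogue of Lemma \ref{p-z-1}), the sum is an unbounded sesquilinear form and defines no element of $\mathcal L(W_\delta,W_{-\delta})$; hence no locally uniform bound on $\mathcal D$ is available, and the weak-to-norm upgrade you invoke (which requires exactly that local uniform boundedness) cannot be applied. The same confusion undermines the Lipschitz estimate: below the axis the \emph{analytic continuation from above} of $(p-z)^{-2}$ differs from the multiplication operator $(p-z)^{-2}$ by a $\delta'$-type residue at the complex point $z$, and Lemmas \ref{plambda2} and \ref{pzderivative} bound only the latter, i.e.\ the wrong branch. The hypothesis $\delta>\tfrac32$ makes the jump terms $\delta_x,\delta_x'$ bounded into $W_{-\delta}$ only for real $x$, which rescues the boundary values (Lemma \ref{R0lim}) but nothing strictly below the axis. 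This is the standard phenomenon that resolvent continuation past the spectrum requires exponential weights or compact supports; polynomial weights yield a limiting absorption principle, not a residue-corrected continuation.

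The paper's proof is structured precisely to avoid this trap, and is genuinely different from your route: it never attempts to continue the quadratic form of $(p-z)^{-1}$ on analytic vectors. Instead it defines $F(z_2)=(p-z_1)^{-1}+\int_{z_1}^{z_2}(p-z)^{-2}\,dz$ along a path crossing the real axis, where the integrand is the multiplication operator $(p-z)^{-2}$, uniformly bounded in $\mathcal L(W_\delta,W_{-\delta})$ for \emph{every} $z\in{\mathbb C}$ (including real $z$) by Lemma \ref{plambda2}; $F(z_2)$ is therefore a bounded operator by construction, and holomorphy of the glued function is then argued via continuity and Painlev\'e's theorem. In particular the paper's continuation in the lower half-plane is \emph{not} the operator $(p-z)^{-1}+2\pi i\,\Pi(z)$ that your residue formula produces --- it cannot be, since that operator is unbounded. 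If you want to repair your argument you must abandon the contour-shift-plus-residue identification and adopt the paper's mechanism of integrating the uniformly bounded second power across the axis.
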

\begin{proof} Let $\Im z_1 >0$ and define an operator function
\begin{eqnarray*}
F(z_2) = (p-z_1)^{-1} + \int_{z_1}^{z_2} (p-z)^{-2} \, dz
\end{eqnarray*}
where we integrate along a smooth curve crossing the real axis. Let us assume only one crossing for simplicity although it is not necessary.
By Lemmas \ref{p-z-1}-\ref{pzderivative} and their counterparts in the lower half-plane, $F(z_2)$ exists, equal $(p-z_2)^{-1}$ for $\Im z_2>0$, is holomorphic
in upper as well as in lower half-plane. It is also continuous in ${\mathbb C}$  by
Lemma \ref{plambda2} and independent of the choice of the integration curve. Then $F$ is holomorphic in ${\mathbb C}$ by Painlev\'{e}
theorem (e.g. Theorem 5.5.2 of \cite{BSimon2A}) applied to every $<\psi,F(z)\varphi>_{W_\delta,W_{-\delta}}$ with $\varphi,\psi \in W_\delta$, the holomorphicity
in norm following from the weak holomorphicity (e.g. Theorem 3.10.1 of \cite{Hille-Phillips} and Theorem VI.4 of \cite{RSIV}).
\end{proof}
\pagebreak[2]

After the preparatory considerations above, let us return to the study of operators closer connected to the resolvent of our Hamiltonian.
\begin{lemma}
\label{R0ubound}
Let $-\frac{\alpha^2}{4} < \lambda_1<\lambda_2<0$,
$\delta>\frac{3}{2}$. Then the operator
$(I-\alpha R_0(k))^{-1} : L^2_\delta({\mathbb R}) \to L^2_{-\delta}({\mathbb R})$ is uniformly bounded for
$k^2\in [\lambda_1,\lambda_2] + i [0,+\infty)$.
Here the above operator at $\Im k^2=0$ is defined as a limit from $\Im k^2>0$ and $R_0$ is defined in (\ref{R0def}).
\end{lemma}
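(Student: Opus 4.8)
The plan is to conjugate by the Fourier transformation, reducing the claim to a uniform bound on a Fourier multiplier, and then to split that multiplier into a bounded part plus a combination of first–order resolvents $(p\mp\zeta)^{-1}$ of the momentum operator, which are controlled in $\mathcal{L}(W_\delta,W_{-\delta})$ by Lemmas \ref{p-z-1}--\ref{analprolong}. I would first dispose of large $\Im k^2$: since $\mathcal{F}R_0(k)\mathcal{F}^{-1}$ is multiplication by $(2\sqrt{p^2-k^2})^{-1}$ and $|p^2-k^2|\ge\Im k^2$ for real $p$, one has $\|R_0(k)\|_{L^2\to L^2}\le(2\sqrt{\Im k^2})^{-1}$; hence for $\Im k^2\ge\varepsilon_1:=\alpha^2$ a Neumann series gives $\|(I-\alpha R_0(k))^{-1}\|_{L^2\to L^2}\le 2$, and the asserted bound is immediate there because $L^2_\delta\hookrightarrow L^2\hookrightarrow L^2_{-\delta}$. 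So it remains to treat $k^2$ in the compact set $S:=[\lambda_1,\lambda_2]+i[0,\varepsilon_1]$. The Fourier transform in $s$ identifies $L^2_{\pm\delta}(\mathbb{R})$ with $W_{\pm\delta}$ (multiplication by $1+s^2$ corresponding to $1-\partial_p^2$) and carries $(I-\alpha R_0(k))^{-1}$ into multiplication by $A(k,p)$ from (\ref{A_def}); thus it suffices to bound that multiplier as an operator $W_\delta\to W_{-\delta}$, uniformly for $k^2\in S$.

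Write $A=1+\alpha B$, $B=(2\sqrt{p^2-k^2}-\alpha)^{-1}$, and rationalize: $B(k,p)=\bigl(2\sqrt{p^2-k^2}+\alpha\bigr)\big/\bigl(4(p^2-\zeta^2)\bigr)$ with $\zeta=\zeta(k):=\sqrt{k^2+\alpha^2/4}$ chosen so that $\Re\zeta>0$, $\Im\zeta\ge0$; for $k^2\in S$ one checks $\Re\zeta\ge\sqrt{\lambda_1+\alpha^2/4}>0$ and $\zeta$ stays in a compact subset of the closed first quadrant. Since $2\sqrt{\zeta^2-k^2}=\alpha$, the residues of $B$ at $p=\pm\zeta$ are $\pm\alpha/(4\zeta)$, so
\begin{equation*}
B(k,p)=\frac{\alpha}{4\zeta}\Bigl(\frac{1}{p-\zeta}-\frac{1}{p+\zeta}\Bigr)+B_{\mathrm{reg}}(k,p),
\end{equation*}
where $B_{\mathrm{reg}}$ has removable singularities at $p=\pm\zeta$. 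Because $\Re k^2<0$ forces $(\Im k)^2\ge|\Re k^2|\ge-\lambda_2$, the branch points $\pm k$ of $\sqrt{p^2-k^2}$ (and the chosen branch cut) stay away from $\mathbb{R}$ uniformly, so $B_{\mathrm{reg}}(k,\cdot)$ is holomorphic in a fixed strip about $\mathbb{R}$ and tends to $0$ as $|\Re p|\to\infty$ there; hence, by Cauchy's estimates, $B_{\mathrm{reg}}(k,\cdot)$ and all its $p$-derivatives are bounded on $\mathbb{R}$, uniformly over $S$.

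Now I would estimate the three summands of $A$ as operators $W_\delta\to W_{-\delta}$. The constant $1$ contributes the norm of $W_\delta\hookrightarrow W_{-\delta}$, which is $\le1$. Multiplication by $B_{\mathrm{reg}}(k,\cdot)$, being multiplication by a smooth function with uniformly bounded derivatives, is uniformly bounded on $W_\delta=H^{\delta,2}$, hence $W_\delta\to W_{-\delta}$. For the singular part, $|\alpha/(4\zeta)|$ is uniformly bounded, and I claim $(p-z)^{-1}$ (understood for real $z$ as the boundary limit from $\Im z>0$ supplied by Lemma \ref{R0lim}) is uniformly bounded in $\mathcal{L}(W_\delta,W_{-\delta})$ for $z$ in any horizontal strip $0\le\Im z\le y_2$: since translations in $p$ are isometries of $W_{\pm\delta}$, the norm $\|(p-z)^{-1}\|_{W_\delta\to W_{-\delta}}$ depends only on $\Im z$; it is continuous in $\Im z\in(0,y_2]$ by Lemma \ref{pzderivative} and has a finite limit as $\Im z\to0^+$ by Lemma \ref{R0lim}, hence is bounded on $[0,y_2]$. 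The term $\frac{\alpha}{4\zeta}(p+\zeta)^{-1}$ is handled the same way using the lower–half–plane counterparts of these lemmas (cf.\ the proof of Lemma \ref{analprolong}), since $-\zeta$ lies in the reflected strip. Adding up gives the uniform bound for $\Im k^2>0$, and it passes to $\Im k^2=0$ because each summand converges there (the resolvent terms by Lemma \ref{R0lim} and its lower–half–plane analogue, the $B_{\mathrm{reg}}$ term by uniform convergence of the symbol and its derivatives), which is also why the operator at $\Im k^2=0$ is well defined as a limit.

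The crux is the singular part near $\Im k^2=0$: there the symbol $A(k,p)$ develops genuine real poles at $p=\pm\sqrt{\lambda+\alpha^2/4}\in(0,\alpha/2)$ — the footprint of the continuous spectrum — and these are exactly the simple–pole–in–$p$ singularities that passing from $L^2_\delta$ to $L^2_{-\delta}$ with $\delta>3/2$ is designed to absorb; the whole weighted–resolvent apparatus of Lemmas \ref{p-z-1}--\ref{analprolong} exists precisely so that $\|(p-z)^{-1}\|_{W_\delta\to W_{-\delta}}$ stays bounded uniformly up to and on the real axis. Everything else — the Neumann bound for large $\Im k^2$, the Fourier reduction, and the removable–singularity bookkeeping for $B_{\mathrm{reg}}$ — is routine.
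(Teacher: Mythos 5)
Your proof is correct and follows essentially the same route as the paper: after Fourier conjugation, your residue computation reproduces exactly the paper's decomposition (\ref{Adecomposed}) (your $B_{\mathrm{reg}}$ is precisely $(2\sqrt{p^2-k^2}+\alpha)^{-1}$, so its boundedness is even immediate from $\Re\sqrt{p^2-k^2}>0$, no Cauchy estimates needed), and the singular terms $(p\mp p_0)^{-1}$ are controlled by Lemmas \ref{p-z-1}--\ref{R0lim} and their lower-half-plane counterparts, just as in the paper. Your separate Neumann-series treatment of large $\Im k^2$ is a harmless variant of the paper's observation that $\Re p_0,\Im p_0\to\infty$ there.
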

\begin{proof} We shall use the Fourier transform (\ref{A_def}) and the relation
\begin{equation}
\label{Adecomposed}
A(k,p)=\frac{\alpha^2}{4 p_0}\left(\frac{1}{p-p_0}-\frac{1}{p+p_0}\right) +1 + \frac{\alpha}{2\sqrt{p^2-k^2} + \alpha}
\end{equation}
where
$$
p_0=\sqrt{k^2+\frac{\alpha^2}{4}} \quad , \quad \Re p_0>0 \quad.
$$
The existence of uniform limit at $\Im k^2=0$ now follows from Lemma \ref{R0lim} realizing that some terms in the above formula are bounded in the considered range of $k$.
The uniform bound in $k$ is similarly seen (notice that $\Re p_0 \to \infty$, $\Im p_0 \to \infty$ for $\Im k^2\to \infty$ and $\Re k^2$  bounded in $[\lambda_1,\lambda_2]$).
\end{proof}
\begin{lemma}
\label{R0analprolong}
Let $-\frac{\alpha^2}{4} < \lambda_1<\lambda_2<0$,
$\varepsilon_0 >0$,
$\delta>\frac{3}{2}$. Then the operator
$(I-\alpha R_0(k))^{-1} : L^2_\delta({\mathbb R}) \to L^2_{-\delta}({\mathbb R})$ can be analytically continued in ${\mathcal L}(L^2_\delta, L^2_{-\delta})$
from the upper half-plane to the region of $k^2 \in (\lambda_1,\lambda_2) + i (-\varepsilon_0,+\infty)$
\end{lemma}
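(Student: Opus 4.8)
The plan is to transport the problem to momentum space and continue the partial-fraction decomposition (\ref{Adecomposed}) term by term. Recall from (\ref{A_def}) that for $\Re k^2<0$, $\Im k^2>0$ one has $(I-\alpha R_0(k))^{-1}=\mathcal{F}^{-1}M_{A(k,\cdot)}\mathcal{F}$, where $M_g$ denotes multiplication by $g$; moreover $\mathcal{F}$ is an isomorphism of $L^2_{\pm\delta}(\mathbb{R})$ (weighted $L^2$ in the curve parameter) onto $W_{\pm\delta}(\mathbb{R})=H^{\pm\delta,2}(\mathbb{R})$ in the dual variable $p$, carrying $M_{1/(p-z)}$ precisely into the operator $(p-z)^{-1}\in\mathcal{L}(W_\delta,W_{-\delta})$ of Lemmas \ref{p-z-1}--\ref{analprolong}. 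So it suffices to continue analytically, in $\mathcal{L}(W_\delta,W_{-\delta})$, the operator $M_{A(k,\cdot)}$ from $\{\Re k^2\in(\lambda_1,\lambda_2),\ \Im k^2>0\}$ into $k^2\in(\lambda_1,\lambda_2)+i(-\varepsilon_0,+\infty)$.

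Next I would record the elementary facts valid on the whole target region, all depending only on $\Re k^2\in(\lambda_1,\lambda_2)\subset(-\tfrac{\alpha^2}{4},0)$: for every real $p$ one has $\Re(p^2-k^2)=p^2-\Re k^2>0$, so $\sqrt{p^2-k^2}$ is holomorphic in $(k^2,p)$ with positive real part and $2\sqrt{p^2-k^2}+\alpha$ is bounded away from $0$; and $\Re(k^2+\tfrac{\alpha^2}{4})\in(\lambda_1+\tfrac{\alpha^2}{4},\lambda_2+\tfrac{\alpha^2}{4})\subset(0,\tfrac{\alpha^2}{4})$, so $p_0=\sqrt{k^2+\alpha^2/4}$ (branch $\Re p_0>0$) is holomorphic, single-valued and nowhere zero on the region, with $\operatorname{sgn}\Im p_0=\operatorname{sgn}\Im k^2$, and $p+p_0$ never vanishes for real $p$. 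Consequently, in (\ref{Adecomposed}) the summands $1$, $\alpha(2\sqrt{p^2-k^2}+\alpha)^{-1}$ and $\tfrac{\alpha^2}{4p_0}(p+p_0)^{-1}$ are, for each $k^2$ in the region, multiplications by functions of $p$ that are $C^\infty$ with all derivatives bounded (locally uniformly in $k^2$) and holomorphic in $k^2$; such symbols act as multipliers on $W_\delta$ and, arguing exactly as in Lemma \ref{Aholomorph} (weak holomorphicity of matrix elements, then norm holomorphicity), each gives an $\mathcal{L}(W_\delta,W_{-\delta})$-valued function holomorphic on all of the target region. These three terms therefore require no genuine continuation.

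The single term that feels the real axis is $\tfrac{\alpha^2}{4p_0}(p-p_0)^{-1}$. Under $\mathcal{F}$ the factor $M_{1/(p-p_0)}$ is the operator $(p-z)^{-1}$ of Lemmas \ref{p-z-1}--\ref{analprolong} evaluated at $z=p_0$: on $\Im k^2>0$ we have $\Im p_0>0$ and Lemma \ref{pzderivative} gives holomorphy, while as $k^2$ descends below $(\lambda_1,\lambda_2)$ the point $p_0$ passes through a positive real value into $\{\Im p_0<0\}$, where Lemma \ref{analprolong} supplies the analytic continuation of $(p-z)^{-1}$ in $\mathcal{L}(W_\delta,W_{-\delta})$ (in fact an entire $\mathcal{L}(W_\delta,W_{-\delta})$-valued function of $z$). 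Since $k^2\mapsto p_0(k^2)$ is holomorphic and nonzero on the region, multiplying that continuation by the scalar analytic prefactor $\tfrac{\alpha^2}{4p_0}$ yields a holomorphic $\mathcal{L}(W_\delta,W_{-\delta})$-valued function there, coinciding with $M_{\alpha^2(4p_0(p-p_0))^{-1}}$ on $\Im k^2>0$. Summing the four continued terms and conjugating back by $\mathcal{F}$ produces the asserted analytic continuation of $(I-\alpha R_0(k))^{-1}$ in $\mathcal{L}(L^2_\delta,L^2_{-\delta})$, which agrees with the original operator where $\Im k^2>0$.

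The only non-routine point, and the main thing to get right, is the bookkeeping of the singularities of (\ref{Adecomposed}) in the joint variable $(k^2,p)$: one must verify that the $p_0$-prefactors introduce no spurious poles on the target region and that exactly one factor, $(p-p_0)^{-1}$, feels the real axis, so that below it the continuation must be taken in the sense of Lemma \ref{analprolong} (i.e. via the path integral of $(p-z)^{-2}$) rather than as the naive boundary-value multiplier, which would differ by a delta-type term. Everything else reduces to the holomorphic-multiplier observation already used in Lemma \ref{Aholomorph}, and the remaining work is purely a matter of assembling the earlier lemmas.
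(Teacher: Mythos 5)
Your overall strategy is the same as the paper's: pass to momentum space, use the partial-fraction decomposition (\ref{Adecomposed}), and continue the resolvent-type factors via Lemma \ref{analprolong}. However, there is a genuine error in the bookkeeping step that you yourself single out as ``the main thing to get right.'' You classify $-\tfrac{\alpha^2}{4p_0}(p+p_0)^{-1}$ among the harmless terms, claiming it is a bounded smooth multiplier because ``$p+p_0$ never vanishes for real $p$'' on the target region. This is false on the real segment itself: for $\Im k^2=0$ with $\Re k^2=\lambda\in(\lambda_1,\lambda_2)\subset(-\tfrac{\alpha^2}{4},0)$ one has $p_0=\sqrt{\lambda+\tfrac{\alpha^2}{4}}\in(0,\tfrac{\alpha}{2})$ real, so $p+p_0$ vanishes at the real point $p=-p_0$. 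The original symbol $A(k,p)$ has \emph{two} singularities, at $p=p_0$ and at $p=-p_0$; as $\Im k^2\to 0^+$ the first approaches the real $p$-axis from above and the second from below, and as $k^2$ crosses into the lower half-plane the point $-p_0$ crosses the real axis upward. Consequently the naive multiplier $(p+p_0)^{-1}$ is discontinuous (in ${\mathcal L}(W_\delta,W_{-\delta})$) across $\Im k^2=0$ — its boundary values from above and below differ by the usual jump of $(p-z)^{-1}$ across a real pole — so it cannot serve as the analytic continuation. This term must be treated exactly like $(p-p_0)^{-1}$, but with the mirror-image statement: the continuation of $z\mapsto(p-z)^{-1}$ from the \emph{lower} half-plane of $z$ into the upper one. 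This is precisely what the paper's proof invokes (``Lemma \ref{analprolong} and its analogue on prolongation from the lower to the upper half-plane''). Your remaining steps — the identity term, the factor $\alpha(2\sqrt{p^2-k^2}+\alpha)^{-1}$ (harmless since $\Re\sqrt{p^2-k^2}>0$), the nonvanishing holomorphic prefactor $\tfrac{\alpha^2}{4p_0}$, and the treatment of $(p-p_0)^{-1}$ via Lemma \ref{analprolong} — are correct and agree with the paper. The fix is local: apply to $(p+p_0)^{-1}=(p-z)^{-1}\big|_{z=-p_0}$ the lower-to-upper counterpart of Lemma \ref{analprolong}, whose proof is verbatim symmetric.
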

\begin{proof} By Lemma \ref{Aholomorph}, the operator is holomorphic in $k^2 \in (\lambda_1,\lambda_2) + i (0,\infty)$ with respect to ${\mathcal L}(L^2, L^2)$ norm and so also with respect to ${\mathcal L}(L^2_\delta, L^2_{-\delta})$ norm. The statement again follows from the formula (\ref{Adecomposed}), Lemma \ref{analprolong} and its analogue on prolongation from the lower to
the upper half-plane. For the last term of (\ref{Adecomposed}) it is again possible to calculate the derivative explicitly in $L^2({\mathbb R},dp)$,  past back to $L^2({\mathbb R},ds)$
by the inverse Fourier transform and use inclusion ${\mathcal L}(L^2_\delta, L^2_{-\delta}) \subset {\mathcal L}(L^2, L^2)$ (remember that $\Re \sqrt{p^2-k^2} >0$ for all considered
$p$ and $k$).
\end{proof}
\begin{lemma}
\label{Rmm-R0compact}
For $k$ satisfying inequalities (\ref{kboundsb}), and $\delta>3$ from Assumption 1, $R_{mm}(k)-R_0(k)$ is a compact operator from $L^2_{-\delta/2}({\mathbb R})$
into $L^2_{\delta/2}({\mathbb R})$, uniformly bounded with respect to $k$.
\end{lemma}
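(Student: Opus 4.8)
The plan is to show that $R_{mm}(k)-R_0(k)$, which has integral kernel
\[
G(s,t)=\frac{1}{2\pi}\bigl(K_0(-ik|\Gamma(s)-\Gamma(t)|)-K_0(-ik|s-t|)\bigr),
\]
is Hilbert--Schmidt, hence compact, as an operator $L^2_{-\delta/2}({\mathbb R})\to L^2_{\delta/2}({\mathbb R})$, with a uniform bound in $k$. Realizing the weighted spaces as $L^2({\mathbb R})$ through multiplication by powers of $w$ exactly as in the proof of Lemma~\ref{Rmm_compact}, the conjugated operator has kernel $w(s)^{\delta/2}G(s,t)\,w(t)^{\delta/2}$, so everything reduces to the estimate
\[
\int_{\mathbb R}\int_{\mathbb R} w(s)^{\delta}\,|G(s,t)|^{2}\,w(t)^{\delta}\,ds\,dt\le C
\]
with $C$ independent of $k$; compactness and the uniform operator bound follow at once from such a bound on the Hilbert--Schmidt norm.

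Two pointwise estimates on $G$ do the work. Writing $r_1=|\Gamma(s)-\Gamma(t)|$, $r_2=|s-t|$ (so $\rho r_2\le r_1\le r_2$), the identity $K_0'=-K_1$, the estimate (\ref{K0bound}) for $K_1$, and the monotonicity of $r\mapsto e^{-\frac34 k_2 r}/r$ give
\[
|G(s,t)|\le\frac{|k|}{2\pi}\int_{r_1}^{r_2}|K_1(-ikr)|\,dr
\le\frac{c_1}{2\pi}\int_{r_1}^{r_2}\frac{e^{-\frac34 k_2 r}}{r}\,dr
\le\frac{c_1}{2\pi\rho}\,\frac{e^{-\frac34 k_{20}\rho|s-t|}}{|s-t|}\,(r_2-r_1),
\]
which, via $e^{-\frac34 k_2 r}\le1$, also yields the crude uniform bound $|G(s,t)|\le\frac{c_1}{2\pi}\log(1/\rho)$. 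The decisive second input is a geometric estimate for $r_2-r_1$ when $s,t$ lie on the same asymptotic half-line, say $s,t\ge R_1$: there $\Gamma(s)-\Gamma(t)=v_+(s-t)+u$ with $u=\nu_+(s)-\nu_+(t)=\int_t^s\nu_+'$, and differentiating $|\Gamma'|\equiv 1$ gives $v_+\cdot\nu_+'=-\tfrac12|\nu_+'|^2$, whence
\[
r_2^{2}-r_1^{2}=|s-t|\int_{\min(s,t)}^{\max(s,t)}|\nu_+'|^{2}-|u|^{2}\le|s-t|\int_{\min(s,t)}^{\max(s,t)}|\nu_+'|^{2}.
\]
Dividing by $r_1+r_2\ge|s-t|$ and invoking Proposition~\ref{curvelimits} with a $\kappa<\delta+\tfrac12$ to be chosen,
\[
0\le r_2-r_1\le\int_{\min(s,t)}^{\max(s,t)}|\nu_+'(r)|^{2}\,dr\le|s-t|\,\bigl(\min(s,t)\bigr)^{-2\kappa},
\]
so combining with the first bound, $|G(s,t)|\lesssim e^{-\frac34 k_{20}\rho|s-t|}\bigl(\min(|s|,|t|)\bigr)^{-2\kappa}$ whenever $s,t$ lie on the same half-line with $|s|,|t|\ge R_1$ (and the analogous statement on the other one).

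It then remains to split ${\mathbb R}^2$ and bound the weighted double integral over each piece: (a) $|s|,|t|\le 2R_1$, where $|G|$ and the weights are bounded and the domain is compact; (b) $s,t\ge R_1$ or $s,t\le -R_1$, where, after the substitution $s=t+h$ ($h\ge0$) and the elementary bound $w(s)^\delta w(t)^\delta\lesssim t^{2\delta}+h^\delta t^\delta$, the integral is dominated by $\int_{R_1}^\infty\bigl(t^{2\delta-4\kappa}+C\,t^{\delta-4\kappa}\bigr)\,dt$, finite as soon as $\kappa>\tfrac\delta2+\tfrac14$ — which is compatible with $\kappa<\delta+\tfrac12$, so Proposition~\ref{curvelimits} supplies such a $\kappa$ (and $\delta>3$ leaves ample room); (c) $s\ge R_1$, $t\le -R_1$ and the mirror piece, where $|s-t|=|s|+|t|$ and $r_1\ge\rho(|s|+|t|)$, so $|G|\lesssim e^{-\frac12 k_{20}\rho(|s|+|t|)}$ by (\ref{K0bound}) and the weighted integral factorizes into a product of two finite one-dimensional integrals; (d) one of $|s|,|t|<R_1$ and the other $>2R_1$, handled the same way using $|s-t|\ge\tfrac12\max(|s|,|t|)$. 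Since every constant above depends on $k$ only through the compact ranges in (\ref{kboundsb}) and through the fixed sector (via $c_0,c_1$), the Hilbert--Schmidt bound is uniform in $k$, which proves the lemma.

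I expect the main obstacle to be precisely the geometric estimate for $r_2-r_1$. The obvious bound $r_2-r_1\le|\nu_+(s)-\nu_+(t)|\le|s-t|\bigl(\min(|s|,|t|)\bigr)^{-\kappa}$ forces the condition $\kappa>\delta+\tfrac12$ in region (b), which is exactly the value Proposition~\ref{curvelimits} cannot reach; the missing half power of decay — replacing $|\nu_+'|$ by $|\nu_+'|^2$ — is recovered only by exploiting the unit-speed identity $v_\pm\cdot\nu_\pm'=-\tfrac12|\nu_\pm'|^2$, without which the operator would just fail to be Hilbert--Schmidt. Everything else is bookkeeping of polynomial weights against exponential and power decay.
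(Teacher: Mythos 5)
Your proof is correct, and its skeleton (reduce to a Hilbert--Schmidt bound for the conjugated kernel $w^{\delta/2}G\,w^{\delta/2}$, split $\mathbb{R}^2$ into a compact block, two same-half-line blocks, and mixed blocks where $|\Gamma(s)-\Gamma(t)|$ is bounded below by $\rho$ times a large quantity) matches the paper's, which uses the regions $M_1,\dots,M_6$. The genuine difference is in the decisive same-half-line estimate. You obtain the quadratic improvement $r_2-r_1\le\int|\nu_\pm'|^2$ from the unit-speed identity $v_\pm\cdot\nu_\pm'=-\tfrac12|\nu_\pm'|^2$ and then feed in the pointwise decay $|\nu_\pm'(r)|\le|r|^{-\kappa}$ of Proposition \ref{curvelimits}; this is a nice and correct geometric observation. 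The paper instead stops at the linear triangle-inequality bound $r_2-r_1\le|\nu_\pm(s)-\nu_\pm(t)|\le|s-t|\,\varphi_\pm(\cdot)$, giving $|G(s,t)|\le c\,\varphi_\pm(\cdot)\,e^{-\frac12 k_{20}\rho|s-t|}$, and then integrates using the hypothesis (\ref{phiLdelta}) \emph{directly}: the inner integral over the exponential produces a factor comparable to $w^\delta$, and the remaining integral $\int w^{2\delta}\varphi_\pm^2$ is finite precisely because $\varphi_\pm\in L^2_\delta$. So your closing claim --- that without the $|\nu'|^2$ trick ``the operator would just fail to be Hilbert--Schmidt'' --- is not right: the linear bound only fails if one first degrades $\varphi_\pm$ to the pointwise power law $t^{-\kappa}$ with $\kappa<\delta+\tfrac12$; keeping the $L^2_\delta$ integrability avoids the loss entirely, and indeed the paper's remark after the lemma notes that $\delta\ge1$ already suffices for this statement. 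Your route buys a self-contained pointwise kernel bound at the cost of an extra geometric identity; the paper's buys a weaker hypothesis by exploiting the integral form of the assumption.
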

\begin{proof} $R_{mm}(k)-R_0(k)$ is an integral operator with the kernel
\begin{equation}
\label{Rmm-R0kernel}
g(s,t)=\frac{1}{2\pi}K_0(-ik|\Gamma(s)-\Gamma(t)|) - \frac{1}{2\pi}K_0(-i k |s-t|)
\end{equation}
and it is sufficient to prove that the operator with the kernel $g_1(s,t)=w(s)^{\delta/2} g(s,t) w(t)^{\delta/2}$ is Hilbert-Schmidt in $L^2$, i.e. that
\begin{equation}
\label{Rmm-R0_HS}
\int_{\mathbb R} \int_{\mathbb R} w(s)^{\delta} |g(s,t)|^2 w(t)^{\delta}\,ds \,dt < \infty \quad,
\end{equation}
and that (\ref{Rmm-R0_HS}) is uniformly bounded.
It is sufficient to integrate only over $t<s$ due to the symmetry. Let $R>0$ be a number from Assumption 1. We consider separately the following ranges of variables $t<s$:
\begin{eqnarray*}
M_1:& -\infty<t<s<-R \quad,\\
M_2:& -\infty<t<-R<s<R \quad,\\
M_3:&  -\infty<t<-R<R<s<\infty \quad,\\
M_4:& -R<t<s<R \quad,\\
M_5:& -R<t<R<s<\infty  \quad,\\
M_6:& R<t<s< \infty  \quad.
\end{eqnarray*}

In the following, $c$ is a finite constant (independent of $k,s,t$ but depending on $\rho, \delta, \alpha, \lambda_1,\lambda_2,\varepsilon_0$)
which might have different values in various formulas (but one maximal finite value can be chosen). Estimates on Macdonald functions from Section
\ref{resolventsection} are used. In $M_1$ we estimate  (remember relations $K_0'=-K_1$, (\ref{kboundsb}), (\ref{K0derivative}) and Assumptions~1 and2)
\begin{eqnarray*}
|g(s,t)| \leq \frac{1}{2\pi} |k| \left||\Gamma(s)-\Gamma(t)|-|s-t|\right| \sup_{|\Gamma(s)-\Gamma(t)|\leq\xi\leq |s-t|}|K_1(-ik\xi)| \leq
\\
c |k| \left||\Gamma(s)-\Gamma(t)|-|s-t|\right| \frac{e^{-\frac{1}{2}k_{20}|\Gamma(s)-\Gamma(t)|}}{|k| |\Gamma(s)-\Gamma(t)|}  \leq
\\
c \left| | v_-(s-t) +\nu_-(s)-\nu_-(t)| -|s-t| \right| \frac{e^{-\frac{1}{2} k_{20} \rho |s-t|}}{|s-t|}  =
\\
c \left| | v_-(s-t) +\nu_-(s)-\nu_-(t)| -|v_-(s-t)| \right| \frac{e^{-\frac{1}{2} k_{20} \rho |s-t|}}{|s-t|}  \leq
\\
c \left| \nu_-(s)-\nu_-(t) \right| \frac{e^{-\frac{1}{2} k_{20} \rho |s-t|}}{|s-t|}  \leq c \varphi_-(s) e^{-\frac{1}{2} k_{20} \rho |s-t|} \quad.
\end{eqnarray*}
Now
\begin{eqnarray}
\nonumber
\int_{M_1} |g_1(s,t)|^2 \,ds\,dt   \leq
\\
\nonumber
c \int_{-\infty}^{-R} w(s)^{\delta} \varphi_-(s)^2 \int_{-\infty}^s e^{-k_{20}\rho(s-t)} w(t)^{\delta} \, dt \,ds  =
\\
\nonumber
c \int_{-\infty}^{-R} w(s)^{\delta} \varphi_-(s)^2 \int_0^{\infty}  e^{-k_{20}\rho \tau } w(|s|+ \tau )^{\delta} \, d\tau \, ds    \leq
\\
\label{M1int}
c \int_{-\infty}^{-R} w(s)^{\delta}(1+w(s)^\delta) \varphi_-(s)^2 \, ds
\end{eqnarray}
where inequality $w(|s|+\tau) \leq w(s)+w(\tau)$ and convexity of the function $x\mapsto x^\delta$ were used ($c$ is changed by a factor in these estimates).
The convergence of (\ref{M1int}) then follows from the assumption (\ref{phiLdelta}). The integral over $M_6$ is treated in the same way.

For the case of $M_2$ let us similarly estimate
\begin{eqnarray*}
|g(s,t)| \leq c \frac{ |s-t| - |\Gamma(s)-\Gamma(t)| }{|s-t|} e^{-\frac{1}{2} k_{20} \rho |s-t|} \leq c e^{-\frac{1}{2} k_{20} \rho (s-t)}
\end{eqnarray*}
and
\begin{eqnarray*}
\int_{M_2} |g_1(s,t)|^2 \, ds \, dt \leq \left(\int_{-R}^R e^{-k_{20}\rho s} w(s)^\delta \, ds\right)
\left(\int_{-\infty}^{-R} e^{k_{20}\rho t} w(t)^\delta \, dt\right) < \infty \quad.
\end{eqnarray*}
The remaining parts $M_3, M_4, M_5$ can be treated analogically.
\end{proof}
\noindent
{\bf Remark.} For the validity of the Lemma \ref{Rmm-R0compact}, assumption (\ref{phiLdelta}) with $\delta \geq 1$ is sufficient as is seen from the proof.
\begin{lemma}
For $\delta>3$ from Assumption 1 and $k$ satisfying (\ref{kbounds})
\begin{equation*}
(R_{mm}(k)-R_0(k)) (I-\alpha R_0(k))^{-1}
\end{equation*}
is a compact operator in $L^2_{\delta/2}$, uniformly bounded in norm with respect to $k$ in the considered range.
\end{lemma}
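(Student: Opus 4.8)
The plan is to obtain the statement as an immediate composition of the two preceding lemmas, factored through the intermediate weighted space $L^2_{-\delta/2}({\mathbb R})$. Concretely, I would view the operator in question as the composition
$$
L^2_{\delta/2}({\mathbb R}) \xrightarrow{(I-\alpha R_0(k))^{-1}} L^2_{-\delta/2}({\mathbb R}) \xrightarrow{R_{mm}(k)-R_0(k)} L^2_{\delta/2}({\mathbb R}),
$$
so that it is a bounded operator followed by a compact operator, hence compact in $L^2_{\delta/2}({\mathbb R})$, with the norm controlled by the product of the two factor norms.

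First I would check that the two factors have the claimed mapping properties with uniform bounds. For $k$ in the range (\ref{kbounds}) one has $k^2=\lambda+i\varepsilon$ with $\lambda\in[\lambda_1,\lambda_2]\subset(-\tfrac{\alpha^2}{4},0)$ and $0<\varepsilon<\varepsilon_0$; in particular $\Im k^2>0$, so $(I-\alpha R_0(k))^{-1}$ is the honest bounded inverse appearing in (\ref{full_resolvent}), and $k^2\in[\lambda_1,\lambda_2]+i[0,+\infty)$. Since $\delta>3$ gives $\delta/2>\tfrac32$, Lemma \ref{R0ubound} applied with $\delta/2$ in place of $\delta$ shows that $(I-\alpha R_0(k))^{-1}\colon L^2_{\delta/2}({\mathbb R})\to L^2_{-\delta/2}({\mathbb R})$ is bounded, uniformly for $k$ in the considered range. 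On the other side, Lemma \ref{Rmm-R0compact} gives directly that $R_{mm}(k)-R_0(k)\colon L^2_{-\delta/2}({\mathbb R})\to L^2_{\delta/2}({\mathbb R})$ is compact and uniformly bounded in norm with respect to $k$ (here the full strength $\delta>3$ of Assumption 1 enters through that lemma).

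Then I would conclude: the composition is a well-defined operator in $L^2_{\delta/2}({\mathbb R})$; it is compact because a bounded operator post-composed with a compact operator is compact; and its operator norm is at most the product of the two uniform bounds, hence uniformly bounded in $k$ on the range considered. There is essentially no analytic obstacle left, all the substance having been placed in Lemmas \ref{R0ubound} and \ref{Rmm-R0compact}; the only points requiring a word of care are the matching of the weight exponents $\pm\delta/2$ between the two lemmas, the admissibility condition $\delta/2>\tfrac32$, and the observation that for $\varepsilon>0$ the limiting-absorption operator of Lemma \ref{R0ubound} coincides with the ordinary bounded inverse.
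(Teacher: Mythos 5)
Your proposal is correct and follows essentially the same route as the paper: the paper's own proof likewise factors the operator through $L^2_{-\delta/2}({\mathbb R})$, invoking Lemma \ref{R0ubound} (with $\delta/2>3/2$) for the uniformly bounded factor and Lemma \ref{Rmm-R0compact} for the compact, uniformly bounded factor, and concludes by composition. Your extra remarks on matching the weight exponents and on the coincidence of the limiting operator with the ordinary inverse for $\varepsilon>0$ are sound but not needed beyond what the paper states.
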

\begin{proof} As $\delta/2 >3/2$ the operator
$(I-\alpha R_0(k))^{-1}$ from  $L^2_{\delta/2}$ into $L^2_{-\delta/2}$ is uniformly bounded by Lemma \ref{R0ubound}.
The operator $(R_{mm}(k)-R_0(k))$ from $L^2_{-\delta/2}$ into $L^2_{\delta/2}$ is compact and uniformly bounded by Lemma \ref{Rmm-R0compact}. So their product is compact and uniformly bounded.
\end{proof}
\begin{lemma}
\label{Rmm-R0boundary}
Let
$\delta$ be the number from Assumption 1 ($\delta \geq 1$ is sufficient), $\lambda_1 < \lambda_2 < 0$, $k^2=\lambda + i\varepsilon$, $\Im k >0$. Then
$$
\lim_{\lambda\to\lambda_0,\varepsilon\to 0^+} (R_{mm}(k) - R_0(k))
$$
in the $L^2_{-\delta/2}({\mathbb R}) \to L^2_{\delta/2}({\mathbb R})$ operator norm exists uniformly for $\lambda_0 \in [\lambda_1,\lambda_2]$.
\end{lemma}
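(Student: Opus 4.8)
The plan is to reduce the statement to a dominated‑convergence argument for the Hilbert--Schmidt kernel of the difference, the very kernel already analysed in Lemma~\ref{Rmm-R0compact}. Write $g_k(s,t)$ for the kernel (\ref{Rmm-R0kernel}) of $R_{mm}(k)-R_0(k)$ and put $g_{1,k}(s,t)=w(s)^{\delta/2}g_k(s,t)w(t)^{\delta/2}$; then $g_{1,k}$ is the kernel of $w^{\delta/2}(R_{mm}(k)-R_0(k))w^{\delta/2}$ on $L^2(\mathbb R)$, the multiplications by $w^{\pm\delta/2}$ being the isometries $L^2\to L^2_{\mp\delta/2}$, so that $\|(R_{mm}(k)-R_0(k))-(R_{mm}(k')-R_0(k'))\|_{L^2_{-\delta/2}\to L^2_{\delta/2}}\le\|g_{1,k}-g_{1,k'}\|_{L^2(\mathbb R^2)}$. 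Hence it suffices to show that $k\mapsto g_{1,k}$ is a continuous $L^2(\mathbb R^2)$‑valued map on the compact range of $k$ in play; uniform continuity will then deliver the uniform‑in‑$\lambda_0$ existence of the limit.

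First I would observe that every $k$ occurring here --- those with $k^2=\lambda+i\varepsilon$, $\lambda\in[\lambda_1,\lambda_2]$, $0\le\varepsilon\le\varepsilon_0/2$, together with the limit points $k_0=i\sqrt{-\lambda_0}$ --- lies in the compact rectangle $k_1\in[-b,b]$, $k_2\in[k_{20},k_{21}]$ of (\ref{kboundsb}), which is a compact subset of the open upper half‑plane; in particular $\Re(-ik\xi)=k_2\xi\ge k_{20}\xi>0$ for $\xi>0$. Hence for $s\ne t$ the arguments $-ik|\Gamma(s)-\Gamma(t)|$ and $-ik|s-t|$ stay in the half‑plane $\{\Re z>0\}$ on which $K_0$ is holomorphic, and continuity of $K_0$ gives $g_{1,k}(s,t)\to g_{1,k_0}(s,t)$ for a.e.\ $(s,t)$ as $k\to k_0$; note also that $k_0$ itself still satisfies (\ref{kboundsb}), so Lemma~\ref{Rmm-R0compact} applies at $\varepsilon=0$ as well.

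Next I would reuse the region decomposition $M_1,\dots,M_6$ from the proof of Lemma~\ref{Rmm-R0compact}: the pointwise bounds on $|g_k(s,t)|$ obtained there depend only on $k_{20}=\sqrt{-\lambda_2}$ (a lower bound for $k_2$ throughout the range), on $\rho$, $\delta$ and the fixed constants, hence yield a single $G\in L^1(\mathbb R^2)$ with $|g_{1,k}(s,t)|^2\le G(s,t)$ for all admissible $k$, the finiteness of $\int G$ being precisely the convergence checked in that proof (e.g.\ (\ref{M1int})). Since $|g_{1,k}-g_{1,k_0}|^2\le 4G$ and the integrand tends to $0$ a.e., dominated convergence gives $\|g_{1,k}-g_{1,k_0}\|_{L^2(\mathbb R^2)}\to 0$ as $k\to k_0$, for every $k_0$ in the closed range; this is the asserted continuity of $k\mapsto g_{1,k}$ there. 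Being continuous on a compact metric space it is uniformly continuous, and since $z\mapsto\sqrt z$ is uniformly continuous on compacta, $|k-k_0|$ is controlled by $|\lambda-\lambda_0|+\varepsilon$; this yields the uniform limit (which is of course the Hilbert--Schmidt operator $R_{mm}(k_0)-R_0(k_0):L^2_{-\delta/2}\to L^2_{\delta/2}$ with kernel $g_{k_0}$, though only the Cauchy property is needed).

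I do not expect a genuine obstacle: the substance already sits in Lemma~\ref{Rmm-R0compact}, and the only new points are checking that the constants in its region estimates are truly $k$‑uniform, so as to serve as a dominating function, and packaging the a.e.\ convergence together with the compactness of the $k$‑range to upgrade convergence at a fixed $\lambda_0$ into convergence uniform in $\lambda_0$. (By the Remark following Lemma~\ref{Rmm-R0compact}, $\delta\ge 1$ suffices throughout.)
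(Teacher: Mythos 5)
Your proof is correct, but it reaches the uniform limit by a different mechanism than the paper. The paper's proof differentiates the kernel in $k$: using $K_0'=-K_1$ and the bound (\ref{K0derivative}) on $K_1$ and $K_1'$ along the segment $[k,k']$, it shows that the kernel increment is bounded by $c\,|k'-k|\,\frac{|s-t|-|\Gamma(s)-\Gamma(t)|}{|s-t|}\,e^{-\frac{1}{2}k_{20}\rho|s-t|}$, and then verifies the finiteness of the weighted integral (\ref{HS_estim1}) over the same regions $M_1,\dots,M_6$; this yields a genuine Lipschitz estimate $\|\,(R_{mm}(k')-R_0(k'))-(R_{mm}(k)-R_0(k))\,\|\le c\,|k'-k|$ in the $L^2_{-\delta/2}\to L^2_{\delta/2}$ norm, from which the uniform Cauchy property is immediate. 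You instead avoid differentiating in $k$ altogether: you take the $k$-uniform pointwise majorants already established in Lemma~\ref{Rmm-R0compact} as a single $L^1$ dominating function, use continuity of $K_0$ on $\{\Re z>0\}$ for a.e.\ pointwise convergence of the kernels, apply dominated convergence to get continuity of $k\mapsto w^{\delta/2}(R_{mm}(k)-R_0(k))w^{\delta/2}$ in the Hilbert--Schmidt norm on the closed (compact) parameter range including $\varepsilon=0$, and then invoke uniform continuity on a compact set to obtain uniformity in $\lambda_0$. Both routes rest on the same region decomposition and the same integrability facts; yours is softer and slightly more economical (no second-derivative estimates on $K_0$, and it identifies the limit operator explicitly as the one with kernel $g_{k_0}$), while the paper's buys a quantitative Lipschitz modulus of continuity in $k$, which is in the same spirit as the derivative estimates it then needs anyway for the holomorphy statement of Lemma~\ref{Rmm-R0holomrphic}. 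The only points you should make explicit in a final write-up are the ones you already flag: that the constants in the $M_1,\dots,M_6$ estimates depend only on $\rho,\delta,\alpha,\lambda_1,\lambda_2,\varepsilon_0$ and not on $k$, and that the closure of the $k$-range stays in the region $k_2\ge k_{20}>0$, $|k_1|\le b$, where the Macdonald-function bounds hold uniformly.
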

\begin{proof} Let $k^2=\lambda + i\varepsilon$, $k'^2=\lambda' + i\varepsilon'$. We want to estimate the difference
$$
(R_{mm}(k') - R_0(k')) - (R_{mm}(k) - R_0(k))
$$
in the $L^2_{-\delta/2} \to L^2_{\delta/2}$ operator norm for which it is sufficient to estimate
$$
w^{\delta/2} ((R_{mm}(k') - R_0(k')) - (R_{mm}(k) - R_0(k))) w^{\delta/2}
$$
in the Hilbert-Schmidt norm. We start from the estimate of the kernel denoting $[k,k']$ the interval from $k$ to $k'$ in the complex plain,
using Assumptions 1 and 2, estimates from the section \ref{resolventsection} and (\ref{kbounds}). We obtain
\begin{eqnarray*}
\left|K_0(-ik'|\Gamma(s)-\Gamma(t)|) - K_0(-ik'|s-t|) - \right.
\\
\left. K_0(-ik|\Gamma(s)-\Gamma(t)|) + K_0(-ik|s-t|)\right| \leq
\\
|k'-k| \sup_{\xi\in [k,k']} \left| (|\Gamma(s)-\Gamma(t)|-|s-t|) K_1(-i\xi |\Gamma(s)-\Gamma(t)|) + \right.
\\
\left. |s-t|(K_1(-i\xi |\Gamma(s)-\Gamma(t)|) - K_1(-i\xi |s-t|)) \right| \leq
\\
|k'-k| ||\Gamma(s)-\Gamma(t)|-|s-t|| \left( \sup_{\xi\in [k,k']} \left|K_1(-i\xi |\Gamma(s)-\Gamma(t)|)\right| + \right.
\\
\left. |s-t| \sup_{\xi\in [k,k']} |\xi| \sup_{\eta\in [ |\Gamma(s)-\Gamma(t)|,|s-t|]} \left|K_1'(-i\xi\eta)\right| \right) \leq
\\
c |k'-k|\left( \frac{|s-t|-|\Gamma(s)-\Gamma(t)|}{|s-t|} \right) e^{-\frac{1}{2}k_{20}\rho |s-t|} \quad.
\end{eqnarray*}
Now we have to estimate
\begin{equation}
\label{HS_estim1}
d=\int_{{\mathbb R}^2} w(s)^\delta \left( \frac{|s-t|-|\Gamma(s)-\Gamma(t)|}{|s-t|} \right)^2 e^{-k_{20}\rho |s-t|} w(t)^\delta \, ds\, dt \quad.
\end{equation}
As in the proof of Lemma \ref{Rmm-R0compact}, we integrate separately in the regions $M_1,\dots,M_6$ some cases being completely same and some very slightly different. In $M_1$ and $M_6$, the $\Gamma$-dependent term in the brackets is estimated with the help of $\varphi_\pm$, in $M_2,\dots,M_5$, estimate by the constant $1$ is sufficient. We see that $d<\infty$, so $R_{mm}(k)-R_0(k)$ is uniformly Cauchy in $k$ and the required uniform limit exists.
\end{proof}
\pagebreak[2]
\begin{lemma}
\label{Rmm-R0holomrphic}
Under the assumptions of Lemma \ref{Rmm-R0boundary}, operator $R_{mm}(k)-R_0(k)$ from $L^2_{-\delta/2}({\mathbb R})$ into $L^2_{\delta/2}({\mathbb R})$
is holomorphic in $k^2\in (\lambda_1,\lambda_2) +i(-\varepsilon_0,\varepsilon_0)$.
\end{lemma}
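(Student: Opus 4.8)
\emph{The reduction.} Under the branch $\Im\sqrt{z}>0$, the region $k^{2}\in(\lambda_{1},\lambda_{2})+i(-\varepsilon_{0},\varepsilon_{0})$ is in bijective holomorphic correspondence with a region of $k=k_{1}+ik_{2}$ in which $k_{2}\ge k_{20}>0$ — this is the positive lower bound on $k_{2}$ already noted after (\ref{kboundsb}). Since $k\neq 0$ there, $k\mapsto k^{2}$ is locally biholomorphic, so it suffices to prove that $R_{mm}(k)-R_{0}(k)$ is holomorphic in $k$ on that region. A useful preliminary remark is that, in contrast to Lemma~\ref{R0analprolong}, no genuine analytic continuation is involved: because $k_{2}>0$ throughout, the argument $-ik|\Gamma(s)-\Gamma(t)|$ (and $-ik|s-t|$) has positive real part $k_{2}|\Gamma(s)-\Gamma(t)|$ (resp.\ $k_{2}|s-t|$), so $R_{mm}(k)-R_{0}(k)$ is given there directly by a convergent integral. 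Conjugating by multiplication by $w^{\delta/2}$, which is an isomorphism $L^{2}\to L^{2}_{\delta/2}$ and $L^{2}_{-\delta/2}\to L^{2}$, the assertion reduces — exactly as in Lemmas~\ref{Rmm-R0compact} and \ref{Rmm-R0boundary} — to holomorphicity in $k$, in the Hilbert--Schmidt norm on $L^{2}(\mathbb{R})$, of the integral operator with kernel $w(s)^{\delta/2}g(s,t)w(t)^{\delta/2}$, where $g$ is the kernel (\ref{Rmm-R0kernel}).

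\emph{The main line.} I would prove norm-differentiability in $k$ directly, following the route already used for $R_{mm}$ itself. Using $K_{0}'=-K_{1}$, the candidate derivative of the kernel is
\[
\partial_{k}g(s,t)=\frac{i}{2\pi}\Bigl(|\Gamma(s)-\Gamma(t)|\,K_{1}(-ik|\Gamma(s)-\Gamma(t)|)-|s-t|\,K_{1}(-ik|s-t|)\Bigr).
\]
Writing $h_{k}(\xi)=\xi K_{1}(-ik\xi)$, one has $\partial_{k}g(s,t)=\tfrac{i}{2\pi}\bigl(h_{k}(|\Gamma(s)-\Gamma(t)|)-h_{k}(|s-t|)\bigr)$, and the estimates of Section~\ref{resolventsection} give $|h_{k}'(\xi)|\le c\,\xi^{-1}e^{-\frac12 k_{20}\xi}$ uniformly in $k$ in the compact range (\ref{kboundsb}), the factor $\xi$ in $h_{k}$ exactly offsetting the $\xi^{-1}$ singularity of $K_{1}$ at the origin. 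Combined with (\ref{rhodef}) and the monotonicity of $\xi\mapsto\xi^{-1}e^{-c\xi}$ on $(0,\infty)$, this yields
\[
|\partial_{k}g(s,t)|\le c\,\frac{|s-t|-|\Gamma(s)-\Gamma(t)|}{|s-t|}\,e^{-\frac12 k_{20}\rho|s-t|},
\]
which is precisely the pointwise bound whose square was integrated in (\ref{HS_estim1}). Hence the six-region ($M_{1},\dots,M_{6}$) estimate of the proof of Lemma~\ref{Rmm-R0boundary} applies verbatim and shows $w(s)^{\delta/2}\,\partial_{k}g(s,t)\,w(t)^{\delta/2}\in L^{2}(\mathbb{R}^{2})$, with a Hilbert--Schmidt bound uniform in $k$. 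To confirm that this candidate is the actual norm-derivative I would write the difference quotient minus $\partial_{k}g$ by Taylor's formula with integral remainder, controlled pointwise by $c\,|k'-k|\sup_{\kappa}|\partial_{\kappa}^{2}g(s,t;\kappa)|$ with $\partial_{k}^{2}g(s,t)=\tfrac{1}{2\pi}\bigl(|\Gamma(s)-\Gamma(t)|^{2}K_{1}'(-ik|\Gamma(s)-\Gamma(t)|)-|s-t|^{2}K_{1}'(-ik|s-t|)\bigr)$; the function $\xi\mapsto\xi^{2}K_{1}'(-ik\xi)$ is again bounded at $0$ and exponentially small at $\infty$, with $\xi$-derivative bounded by $c\,\xi^{-1}(1+\xi)^{N}e^{-\frac12 k_{20}\xi}$ for some $N$, so the same six-region bookkeeping — now with a harmless extra polynomial factor absorbed into a fraction of the exponential (and, in $M_{1},M_{6}$, into the decay of $\varphi_{\pm}$) — shows $w(s)^{\delta/2}\sup_{\kappa}|\partial_{\kappa}^{2}g(s,t;\kappa)|\,w(t)^{\delta/2}\in L^{2}(\mathbb{R}^{2})$ locally uniformly in $k$. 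The difference quotient then converges in Hilbert--Schmidt norm, so $R_{mm}(k)-R_{0}(k)$ is norm-holomorphic in $k$, hence in $k^{2}$ on the asserted region.

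\emph{A softer alternative.} One can also bypass the second derivative entirely. By Lemma~\ref{Rmm-R0boundary} the weighted operator $B(k)=w^{\delta/2}(R_{mm}(k)-R_{0}(k))w^{\delta/2}$ is Hilbert--Schmidt-norm continuous and locally bounded on the region (\ref{kboundsb}); for each fixed $(s,t)$ the kernel $g(s,t;k)$ is holomorphic in $k$ (a difference of two values of $K_{0}$ at arguments with positive real part, and in fact continuous across the diagonal $s=t$, where it vanishes); hence for $\varphi,\psi\in C_{c}^{\infty}(\mathbb{R})$ the matrix element $\langle\psi,B(k)\varphi\rangle=\iint\overline{\psi(s)}w(s)^{\delta/2}g(s,t;k)w(t)^{\delta/2}\varphi(t)\,ds\,dt$ is holomorphic in $k$ by differentiation under the integral over the compact $(s,t)$-support of $\overline{\psi}\otimes\varphi$. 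Weak holomorphicity on the dense set $C_{c}^{\infty}$ together with local boundedness gives weak, then — using the already established norm-continuity — norm holomorphicity of $B$, just as in the proof of Lemma~\ref{Aholomorph}; undoing the conjugation gives the statement.

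\emph{Expected difficulty.} There is no deep obstacle. The one point I would write out with care — and the only place where a little thought is needed — is the behaviour near $\xi=0$ of $h_{k}(\xi)=\xi K_{1}(-ik\xi)$ and of $\xi^{2}K_{1}'(-ik\xi)$: although $K_{1}$ and $K_{1}'$ are singular there, the prefactors render these functions, and their $\xi$-derivatives up to the stated exponential factors, bounded, so that the regions $M_{2},\dots,M_{5}$, in which $|s-t|$ can be arbitrarily small, contribute no divergence; coupled with this, the only extra bookkeeping over Lemma~\ref{Rmm-R0boundary} is the harmless polynomial factor produced by passing to $\partial_{k}^{2}g$.
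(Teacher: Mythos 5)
Your main line is essentially the paper's own proof: the paper likewise expands the kernel $g$ to second order in $k$, bounds $\partial_k g$ and $\partial_k^2 g$ (written there via $K_0'$ and $K_0''$ rather than $K_1$ and $K_1'$, which is the same thing) by $c\,\frac{|s-t|-|\Gamma(s)-\Gamma(t)|}{|s-t|}\,e^{-\frac12 k_{20}\rho|s-t|}$, and invokes the convergence of (\ref{HS_estim1}) to get Hilbert--Schmidt convergence of the difference quotient. The proposal is correct; the "softer alternative" via weak holomorphicity is a reasonable extra but not what the paper does.
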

\begin{proof} It is equivalent to show that the considered operator function is holomorphic in $k$. As in the proofs of previous lemmas, let us consider
the kernel $g(k,s,t)$ (\ref{Rmm-R0kernel})
of the operator $(R_{mm}(k) - R_0(k))$,  and expand
$$
g(k',s,t)-g(k,s,t)=(k'-k)\frac{\partial g(k,s,t)}{\partial k} + r(k',k,s,t) \quad.
$$
Let us write explicitly
\begin{eqnarray*}
2\pi \frac{\partial g(k,s,t)}{\partial k}=
-i|\Gamma(s)-\Gamma(t)| K_0'(-ik |\Gamma(s)-\Gamma(t)|) + i|s-t| K_0'(-ik |s-t|) \quad,
\\
2\pi \frac{\partial^2 g(k,s,t)}{\partial k^2}=
-|\Gamma(s)-\Gamma(t)|^2 K_0''(-ik |\Gamma(s)-\Gamma(t)|) + |s-t|^2 K_0''(-ik |s-t|) \quad.
\end{eqnarray*}
With the estimate (\ref{K0derivative}) and (\ref{kbounds}), we obtain
\begin{eqnarray*}
2\pi \left|\frac{\partial g(k,s,t)}{\partial k}\right| \leq
||\Gamma(s)-\Gamma(t)| - |s-t|| |K_0'(-ik|\Gamma(s)-\Gamma(t)|)| +
\\
|s-t| |K_0'(-ik|\Gamma(s)-\Gamma(t)|) - K_0'(-ik|s-t|)| \leq
\\
c \frac{|s-t| - |\Gamma(s)-\Gamma(t)|}{|s - t|} e^{-\frac{1}{2}k_{20}\rho |s-t|}
\end{eqnarray*}
and the convergence of (\ref{HS_estim1}) show that kernel $\partial g/ \partial k$ defines a bounded operator $L^2_{-\delta/2}\to L^2_{\delta/2}$.

Let us estimate
$$
|r(k',k,s,t)| \leq |k'-k|^2 \sup_{\xi\in [k,k']} \left|\frac{\partial^2 g(\xi,s,t)}{\partial k^2}\right| \quad.
$$
Similarly as above
\begin{eqnarray*}
2\pi\left|\frac{\partial^2 g(\xi,s,t)}{\partial k^2}\right| \leq \left( |s-t|^2 - |\Gamma(s)-\Gamma(t)|^2\right) \left| K_0''(-i\xi |\Gamma(s)-\Gamma(t)|) \right| +
\\
|s-t|^2 \left| K_0''(-i\xi |\Gamma(s)-\Gamma(t)|) - K_0''(-i\xi |s-t|) \right| \leq
\\
c \frac{|s-t| - |\Gamma(s)-\Gamma(t)|}{|s - t|} e^{-\frac{1}{2}k_{20}\rho |s-t|}
\end{eqnarray*}
and convergency of (\ref{HS_estim1}) again show that the kernel $(k'-k)^{-2}r(k',k,s,t)$ defines operator $L^2_{-\delta/2}\to L^2_{\delta/2}$
uniformly bounded in $k',k$. So the existence of
$\frac{\partial}{\partial k} (R_{mm}(k)-R_0(k))$ follows.
\end{proof}
\begin{lemma}
\label{Rmm-R0infty}
Let $\delta\geq 1$ be the number from Assumption 1, $k^2=\lambda +i\varepsilon$, $\lambda < 0$, $\varepsilon >0$, $\Im k >0$.
Then
$$
\lim_{\varepsilon \to +\infty} \| R_{mm}(k)-R_0(k) \|_{L^2_{-\delta/2}\to L^2_{\delta/2}} = 0 \quad.
$$
\end{lemma}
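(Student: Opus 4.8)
The plan is to run the Hilbert--Schmidt estimate of Lemmas~\ref{Rmm-R0compact} and~\ref{Rmm-R0boundary} once more, this time \emph{retaining} the dependence on $k_2=\Im k$. Since $k^2=\lambda+i\varepsilon$ with $\lambda<0$ fixed, $k_2^2=\frac12(\sqrt{\lambda^2+\varepsilon^2}-\lambda)\to\infty$ as $\varepsilon\to\infty$, and $\arg(-ik)$ stays in the fixed cone $|\arg(-ik)|\le\frac{\pi}{4}$ for all $\varepsilon>0$, so the Macdonald estimates (\ref{Knureal})--(\ref{K0derivative}) apply with constants independent of $\varepsilon$. Since $\|R_{mm}(k)-R_0(k)\|_{L^2_{-\delta/2}\to L^2_{\delta/2}}\le\bigl\|w^{\delta/2}(R_{mm}(k)-R_0(k))w^{\delta/2}\bigr\|_{HS}$ and, with $g$ the kernel (\ref{Rmm-R0kernel}),
$$
\bigl\|w^{\delta/2}(R_{mm}(k)-R_0(k))w^{\delta/2}\bigr\|_{HS}^2=\int_{{\mathbb R}^2} w(s)^\delta|g(s,t)|^2 w(t)^\delta\,ds\,dt,
$$
it suffices to show the right-hand side tends to $0$ as $\varepsilon\to\infty$.

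For the pointwise kernel estimate I would not reuse the uniform-in-$k$ bound of Lemma~\ref{Rmm-R0compact} (which downgrades $k_2$ to the fixed lower bound $k_{20}$), but re-derive it keeping $k_2$. Writing $g(s,t)=-\frac{1}{2\pi}\int_{|\Gamma(s)-\Gamma(t)|}^{|s-t|} ik\,K_1(-ik\xi)\,d\xi$ (using $K_0'=-K_1$) and applying (\ref{K0derivative}) in the form $|k|\,|K_1(-ik\xi)|\le c\,\xi^{-1}e^{-\frac12 k_2\xi}$, Assumption~2, i.e.\ $|\Gamma(s)-\Gamma(t)|\ge\rho|s-t|$, yields the \emph{crude} bound $|g(s,t)|\le c\,(\log\frac1\rho)\,e^{-\frac12 k_2\rho|s-t|}$ valid everywhere. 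In the regions $M_1$ and $M_6$ from the proof of Lemma~\ref{Rmm-R0compact} one also has $\bigl|\,|s-t|-|\Gamma(s)-\Gamma(t)|\,\bigr|\le|\nu_-(s)-\nu_-(t)|\le|s-t|\,\varphi_-(s)$ in $M_1$ (and the analogue with $\varphi_+(t)$ in $M_6$), which upgrades the bound to $|g(s,t)|\le c\,\varphi_-(s)\,e^{-\frac12 k_2\rho|s-t|}$ in $M_1$ and $|g(s,t)|\le c\,\varphi_+(t)\,e^{-\frac12 k_2\rho|s-t|}$ in $M_6$.

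Then I would integrate over the six regions $M_1,\dots,M_6$ exactly as in Lemma~\ref{Rmm-R0compact}. The decay mechanism is that, after the splitting $w(s-t)\le w(s)+w(t)$ and convexity of $x\mapsto x^\delta$, one meets the one-dimensional integral $\int_0^\infty e^{-k_2\rho\tau}w(\tau)^\delta\,d\tau$; the substitution $u=k_2\rho\tau$ shows it is $O(1/k_2)$ uniformly (for $k_2\rho\ge1$ it is bounded by $(k_2\rho)^{-1}\int_0^\infty e^{-u}(1+u^2)^{\delta/2}\,du$). In $M_1$ and $M_6$ this leaves $\frac{c}{k_2}\int(w^\delta+w^{2\delta})\varphi_-^2$ (resp.\ $\varphi_+^2$), which is finite by Assumption~1 (\ref{phiLdelta}); in $M_2,M_4,M_5$ one of the two variables stays in $[-R,R]$, so the corresponding weight is bounded by a constant and the same $O(1/k_2)$ estimate applies; in $M_3$ one has $|s-t|>2R$, so the integral factorizes into two copies of $\int_R^\infty(1+s^2)^{\delta/2}e^{-k_2\rho s}\,ds\to0$. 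Hence every region contributes $O(1/k_2)$ (or less) and the whole double integral is $O(1/k_2)\to0$, which proves the lemma. I do not expect a genuine obstacle: the computation is essentially bookkeeping, the only points needing care being (a) keeping $k_2$ in the exponent instead of downgrading it to $k_{20}$, (b) the uniformity of the Macdonald constants as $\varepsilon\to\infty$ (guaranteed since $\arg(-ik)$ stays in a fixed cone), and (c) noticing that the decay comes precisely from $\int_0^\infty e^{-k_2\rho\tau}w(\tau)^\delta\,d\tau=O(1/k_2)$ played against the finite weighted $L^2$ norms of $\varphi_\pm$ furnished by Assumption~1.
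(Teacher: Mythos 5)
Your proposal is correct and follows essentially the same route as the paper: reduce to the Hilbert--Schmidt norm of $w^{\delta/2}(R_{mm}(k)-R_0(k))w^{\delta/2}$, re-derive the kernel bounds of Lemma~\ref{Rmm-R0compact} keeping $k_2=\Im k\geq\sqrt{\varepsilon/2}$ in the exponential (the Macdonald constants staying uniform because $\arg(-ik)$ remains in a fixed cone, equivalently $k_1/k_2\to 1$), and integrate over the regions $M_1,\dots,M_6$. The only cosmetic difference is that you extract an explicit $O(1/k_2)$ rate from $\int_0^\infty e^{-k_2\rho\tau}w(\tau)^\delta\,d\tau$, whereas the paper concludes by dominated convergence.
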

\begin{proof} As above, it is sufficient to show that $w^{\delta/2} (R_{mm}(k)-R_0(k)) w^{\delta/2}$ tends to zero in the Hilbert-Schmidt norm.
We use the estimates of its kernel from the proof of Lemma \ref{Rmm-R0compact} where the overall constant $c$ is independent of $k$ and
$k_{20}$ may be replaced by $\sqrt{\varepsilon/2}$. The estimate (\ref{K0derivative}) can be used as direct calculations show that
$k_1/k_2 \to 1$ as $\varepsilon \to +\infty$ with $\lambda$ fixed.
Then we use dominated convergence to finish the proof.
\end{proof}
\begin{lemma}
\label{small_inverse_lemma}
Let $\delta>3$ be the number from Assumption 1, $\alpha>0$. Then there exists a discrete subset ${\mathcal E}$ of {\emph open} interval $(-\frac{\alpha^2}{4},0)$
with the property that for any closed interval $[\lambda_1,\lambda_2] \subset (-\frac{\alpha^2}{4},0) \setminus {\mathcal E}$  there exists $\varepsilon_0>0$ such that
\begin{equation}
\label{small_inverse}
\left(I-\alpha (R_{mm}(k)-R_0(k))(I-\alpha R_0(k))^{-1} \right)^{-1}
\end{equation}
exists and is uniformly bounded in  $L^2_{\delta/2}({\mathbb R}) \to L^2_{\delta/2}({\mathbb R})$ operator norm for
$k^2=\lambda +i\varepsilon$, $\lambda\in [\lambda_1,\lambda_2]$, $0<\varepsilon<\varepsilon_0$, $\Im k >0$.
\end{lemma}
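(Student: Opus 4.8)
The plan is to write the operator in (\ref{small_inverse}) as $(I-K(k))^{-1}$ with
$$
K(k)=\alpha\,\bigl(R_{mm}(k)-R_0(k)\bigr)\bigl(I-\alpha R_0(k)\bigr)^{-1}
$$
acting in $L^2_{\delta/2}({\mathbb R})$, to show that the map $z\mapsto K(k)$, $k=\sqrt z$, is holomorphic with values in the compact operators on a suitable connected open set, and then to invoke the analytic Fredholm theorem.

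Concretely, fix auxiliary numbers $-\frac{\alpha^2}{4}<\lambda_1<\lambda_2<0$ and $\varepsilon_0>0$ and set $\Omega=(\lambda_1,\lambda_2)+i(-\varepsilon_0,\infty)$, a connected open subset of ${\mathbb C}\setminus[0,\infty)$ meeting ${\mathbb R}$ exactly in $(\lambda_1,\lambda_2)$; for $z\in\Omega$ take $k=\sqrt z$ with $\Im\sqrt z>0$ on the upper half-plane, analytically continued across $(\lambda_1,\lambda_2)$. By Lemma \ref{R0analprolong} the factor $(I-\alpha R_0(k))^{-1}\colon L^2_{\delta/2}\to L^2_{-\delta/2}$ is holomorphic on $\Omega$, and by Lemma \ref{Rmm-R0holomrphic} together with Lemma \ref{Rmm-R0compact} (whose Hilbert--Schmidt estimates only use $k_2\ge k_{20}>0$ and hence also apply for $\Im z>0$) the factor $R_{mm}(k)-R_0(k)\colon L^2_{-\delta/2}\to L^2_{\delta/2}$ is holomorphic and compact on $\Omega$. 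Thus $z\mapsto K(k)$ is holomorphic on $\Omega$ with values in the compact operators on $L^2_{\delta/2}$. Moreover, by Lemma \ref{R0ubound} the first factor is uniformly bounded for $\Im z\in[0,\infty)$ while by Lemma \ref{Rmm-R0infty} the second tends to $0$ in norm as $\Im z\to+\infty$, so $\|K(k)\|\to0$ in that direction and $I-K(k)$ is invertible at some point of $\Omega$; the degenerate alternative of the analytic Fredholm theorem is therefore excluded, and one obtains a set $\mathcal E_\Omega\subset\Omega$ without accumulation point in $\Omega$ such that $(I-K(k))^{-1}$ exists in $\mathcal L(L^2_{\delta/2})$ for $z\in\Omega\setminus\mathcal E_\Omega$ and is meromorphic on $\Omega$, with poles exactly on $\mathcal E_\Omega$.

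Since $z\in\mathcal E_\Omega$ is equivalent to non-invertibility of $I-K(k)$ in $\mathcal L(L^2_{\delta/2})$, the real trace $\mathcal E_\Omega\cap{\mathbb R}$ is independent of $\lambda_1,\lambda_2,\varepsilon_0$, so
$$
\mathcal E=\bigl\{\lambda\in(-\tfrac{\alpha^2}{4},0)\ :\ I-K(\sqrt\lambda)\ \text{not invertible in}\ \mathcal L(L^2_{\delta/2})\bigr\}
$$
is well defined (with $K(\sqrt\lambda)$ the continued value, which also equals $\lim_{\varepsilon\to0^+}K(\sqrt{\lambda+i\varepsilon})$ by Lemmas \ref{Rmm-R0boundary} and \ref{R0ubound}), and covering $(-\frac{\alpha^2}{4},0)$ by intervals $(\lambda_1,\lambda_2)$ shows $\mathcal E$ has no accumulation point in $(-\frac{\alpha^2}{4},0)$, i.e.\ it is discrete there. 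For the uniform bound, given $[\lambda_1,\lambda_2]\subset(-\frac{\alpha^2}{4},0)\setminus\mathcal E$ I would enlarge it to $[\lambda_1',\lambda_2']\subset(-\frac{\alpha^2}{4},0)\setminus\mathcal E$ with $\lambda_1'<\lambda_1<\lambda_2<\lambda_2'$ (possible since $\mathcal E$ is discrete and $[\lambda_1,\lambda_2]$ compact and disjoint from $\mathcal E$), fix $\varepsilon_0>0$, and apply the above on $\Omega'=(\lambda_1',\lambda_2')+i(-\varepsilon_0,\infty)$. Then $\mathcal E_{\Omega'}\cap{\mathbb R}=\mathcal E\cap(\lambda_1',\lambda_2')=\emptyset$, and since $\mathcal E_{\Omega'}$ has no accumulation point in $\Omega'$ and $[\lambda_1,\lambda_2]$ is a compact subset of the open segment $(\lambda_1',\lambda_2')\subset\Omega'$, there is an open neighbourhood $U\subset\Omega'$ of $[\lambda_1,\lambda_2]$ disjoint from $\mathcal E_{\Omega'}$; hence $(I-K(k))^{-1}$ is holomorphic, in particular continuous, on $U$, and after shrinking $\varepsilon_0$ the compact set $\{\lambda+i\varepsilon:\lambda\in[\lambda_1,\lambda_2],\ 0\le\varepsilon\le\varepsilon_0\}$ lies in $U$, so continuity there yields the required uniform bound in $\mathcal L(L^2_{\delta/2})$ for $k^2=\lambda+i\varepsilon$, $\lambda\in[\lambda_1,\lambda_2]$, $0<\varepsilon<\varepsilon_0$.

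The step I expect to be the main obstacle is not the Fredholm argument itself but the bookkeeping of domains: one must verify that the holomorphy regions furnished by Lemmas \ref{R0analprolong}, \ref{Rmm-R0holomrphic} and \ref{Rmm-R0compact} genuinely combine into one connected $\Omega$ that also reaches arbitrarily large $\Im z$ (so that $\|K\|<1$ somewhere), and that the pole set of $(I-K(k))^{-1}$ restricted to the real axis is intrinsic to $K$, so that $\mathcal E$ can be fixed once and for all independently of the auxiliary interval. The passage from pointwise meromorphy to a \emph{uniform} bound on a one-sided neighbourhood of $[\lambda_1,\lambda_2]$ in the closed upper half-plane rests on the poles having no accumulation point in the \emph{open} set $\Omega$, which is precisely why the analytic continuation below the real axis was arranged in the earlier lemmas.
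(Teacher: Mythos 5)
Your proposal is correct and follows essentially the same route as the paper: the same factorization into $\alpha(R_{mm}(k)-R_0(k))(I-\alpha R_0(k))^{-1}$, the same reliance on Lemmas \ref{Rmm-R0compact}, \ref{Rmm-R0holomrphic}, \ref{R0analprolong} and \ref{Rmm-R0infty} to get a compact, holomorphic, uniformly bounded operator family vanishing in norm as $\Im k^2\to+\infty$, and the same application of the analytic Fredholm theorem followed by intersecting the exceptional set with the real axis. Your handling of the connectedness of the holomorphy region and of the intrinsic character of ${\mathcal E}$ is if anything slightly more explicit than the paper's.
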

\begin{proof}  By Lemmas \ref{Rmm-R0compact} and \ref{Rmm-R0holomrphic}, $R_{mm}(k)-R_0(k)$ is uniformly bounded, compact and holomorphic as
$L^2_{-\delta/2} \to L^2_{\delta/2}$ operator for $k^2\in [\lambda_1',\lambda_2'] +i [-\varepsilon_1,\varepsilon_1]$ with arbitrary fixed
$-\frac{\alpha^2}{4} < \lambda_1' < \lambda_2'<0$, $\varepsilon_1>0$.
By Lemma \ref{R0analprolong}, $(I-\alpha R_0(k))^{-1}$ is uniformly bounded and holomorphic as $L^2_{\delta/2} \to L^2_{-\delta/2}$ operator there.
Now
$$
\alpha (R_{mm}(k)-R_0(k)) (I-\alpha R_0(k))^{-1}
$$
is uniformly bounded compact holomorphic operator $L^2_{\delta/2}\to L^2_{\delta/2}$.

With the help of Lemma \ref{Rmm-R0infty}, we also know that
$$
\lim_{\varepsilon\to +\infty} \|(R_{mm}(k)-R_0(k))(I-\alpha R_0(k))^{-1}\|_{L^2_{\delta/2} \to L^2_{\delta/2}} =0 $$
and then (\ref{small_inverse}) exists for $\varepsilon$ large enough.
By the analytic Fredholm theorem (e.g. Theorem VI.14 in \cite{RSIV}), (\ref{small_inverse}) now exists and is holomorphic for
$k^2\in ((\lambda'_1,\lambda_2') +i (-\varepsilon_1, +\varepsilon_1)) \setminus {\mathcal E_1}$  where  ${\mathcal E_1}$ is a discrete subset of
$(\lambda'_1,\lambda_2') +i (-\varepsilon_1, +\varepsilon_1)$. Let ${\mathcal E} \cap (\lambda_1',\lambda_2')$ be now the intersection of ${\mathcal E_1}$ with the real axis.
This defines ${\mathcal E}$ discrete in $(-\frac{\alpha^2}{4},0)$ as $\lambda_1', \lambda_2'$ were arbitrary.
For $[\lambda_1,\lambda_2] \subset (-\frac{\alpha^2}{4},0) \setminus {\mathcal E}$ it is now possible to choose $\varepsilon_0>0$ such that
$([\lambda_1,\lambda_2] +i [0,\varepsilon_0]) \cap {\mathcal E_1} = \emptyset$ and the lemma is proved.
\end{proof}
\pagebreak[2]
\begin{lemma}
\label{Rphidelta}
Let $\varphi \in C_0^\infty({\mathbb R}^2)$, $k$ satisfies (\ref{kbounds}), $\delta\geq 0$ and $R_{m\, dx}(k)$ be defined in (\ref{Rmx}).
Then $R_{m\, dx}(k)\varphi \in L^2_\delta({\mathbb R})$ with the norm uniformly bounded with respect to $k$.
\end{lemma}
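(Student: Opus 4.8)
The plan is to estimate the weighted $L^2$-norm of $R_{m\,dx}(k)\varphi$ directly from the integral formula \eqref{Rmx}. Write
$$
(R_{m\,dx}(k)\varphi)(s) = \frac{1}{2\pi}\int_{{\mathbb R}^2} K_0(-ik|\Gamma(s)-y|)\,\varphi(y)\,dy,
$$
and let $B\subset{\mathbb R}^2$ be a ball containing the (compact) support of $\varphi$. First I would use $|K_0(-ik|x|)|\le K_0(k_2|x|)$ from \eqref{Knureal} together with the lower bound $k_2\ge k_{20}>0$ from \eqref{kbounds}, so that for all $y\in B$ and all $s$ with $|\Gamma(s)-y|$ bounded away from $0$ the kernel decays exponentially in $|\Gamma(s)-y|$. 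The key geometric input is that $\Gamma$ escapes to infinity linearly: by Assumption~2, $|\Gamma(s)-\Gamma(0)|\ge\rho|s|$, hence for $|s|$ large, $|\Gamma(s)-y|\ge\rho|s|-\operatorname{diam}(B)-|\Gamma(0)|\ge \tfrac{\rho}{2}|s|$. Thus for $|s|$ large,
$$
|(R_{m\,dx}(k)\varphi)(s)| \le \frac{\|\varphi\|_\infty\,|B|}{2\pi}\,K_0\!\big(\tfrac{\rho}{2}k_{20}|s|\big)
\le c\,(1+|\log|s||)\,e^{-\frac{\rho}{2}k_{20}|s|}
$$
using the bound \eqref{K0bound} on $K_0$; on the bounded range of $s$ the function is controlled by the logarithmic singularity of $K_0$, which is locally $L^2$ (indeed $K_0(-ik|\Gamma(s)-y|)$ integrated against $\varphi$ is bounded there, since near a coincidence point the log singularity is integrable in $y\in{\mathbb R}^2$).

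Combining the two regimes, $(R_{m\,dx}(k)\varphi)(s)$ is bounded on compact $s$-sets and decays exponentially as $|s|\to\infty$, so $(1+s^2)^{\delta/2}(R_{m\,dx}(k)\varphi)(s)$ is square-integrable for every $\delta\ge 0$: the polynomial weight is absorbed by the exponential decay. For the uniformity in $k$, I would note that all constants entering the estimates depend only on $\rho$, on $k_{20}$ and $b$ (the fixed bounds from \eqref{kbounds}), on $\delta$, and on $\varphi$ through $\|\varphi\|_\infty$ and $\operatorname{supp}\varphi$; since $k_{20}>0$ is a fixed lower bound on $k_2$ independent of the particular admissible $k$, the resulting bound on $\|R_{m\,dx}(k)\varphi\|_{L^2_\delta}$ is uniform over all $k$ satisfying \eqref{kbounds}.

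The only point requiring a little care — and the main (minor) obstacle — is the behaviour of the kernel when $\Gamma(s)$ meets or nearly meets $\operatorname{supp}\varphi$, i.e. on the bounded set of $s$ for which $\Gamma(s)\in B$: there $K_0$ has a logarithmic singularity, but since $\int_{B}|\log|\Gamma(s)-y||\,dy$ is bounded uniformly in $s$ (translation of an integrable function) and this range of $s$ is a bounded interval, the contribution to $\|R_{m\,dx}(k)\varphi\|_{L^2_\delta}$ from it is finite and uniformly bounded. With that handled, the exponential tail estimate completes the proof.
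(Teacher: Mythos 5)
Your proposal is correct and follows essentially the same route as the paper: bound $|K_0(-ik|\Gamma(s)-y|)|$ via (\ref{Knureal}) and (\ref{K0bound}), use Assumption~2 to get $|\Gamma(s)-y|\gtrsim\rho|s|$ for $|s|$ large (hence exponential decay in $s$ that absorbs the polynomial weight $w^\delta$), handle the logarithmic singularity of $K_0$ by its local integrability in $y\in{\mathbb R}^2$ on the bounded set of $s$ with $\Gamma(s)$ near $\operatorname{supp}\varphi$, and observe that all constants depend only on $k_{20}$, $k_{21}+b$, $\rho$ and $\varphi$. The paper's proof is the same argument written as a single global estimate $|(R_{m\,dx}(k)\varphi)(s)|\le c_3 e^{-k_{20}\rho|s|}(1+|\log(|s|+d)|)$ rather than a two-regime split, so no substantive difference.
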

\begin{proof} Using (\ref{Rmx}), (\ref{K0bound}), (\ref{kbounds}) and denoting $c$ a suitable constant,
\begin{eqnarray*}
|(R_{m\, dx}(k)\varphi)(s)| \leq c \|\varphi\|_\infty \int_{{\rm supp}\, \varphi}(1+ |\log(|k| |\Gamma(s)-y|)|) e^{-k_{20}|\Gamma(s)-y|} \, d^2y \leq
\\
c \|\varphi\|_\infty e^{k_{20}d} e^{-k_{20} |\Gamma(s)-\Gamma(0)|}\int_{{\rm supp}\, \varphi}  ( 1 + |\log |k|| +|\log |\Gamma(s)-y||)\, d^2y
\end{eqnarray*}
where
$$
d=\sup_{y\in {\rm supp}\, \varphi} |y-\Gamma(0)| < \infty \quad.
$$
As $0<k_{20}\leq |k| \leq k_{21}+b < \infty$, term $|\log |k||$ is bounded by a constant in the above expression. Further,
$$
|\Gamma(s)-y| \leq |\Gamma(s)-\Gamma(0)| + |\Gamma(0)-y| \leq |s| + d \quad.
$$
If $|\Gamma(s)-\Gamma(0)|\geq d+1$ then  $|\Gamma(s)-y|\geq 1$ and
\begin{eqnarray*}
0\leq \log|\Gamma(s)-y| \leq \log(|s|+d) \quad,
\\
\int_{{\rm supp}\, \varphi} |\log|\Gamma(s)-y|| \, d^2y \leq c_1 \log(|s|+d)
\end{eqnarray*}
where $c_1$ is a suitable constant (dependent on $\varphi$).

If $|\Gamma(s)-\Gamma(0)| < d+1$, then denoting $B(a,r)$ the disk in ${\mathbb R}^2$ of the center $a$ and radius $r$
\begin{eqnarray*}
\int_{{\rm supp}\, \varphi} |\log|\Gamma(s)-y||\, d^2y \leq \int_{B(\Gamma(s), 2 d + 1)} |\log|\Gamma(s)-y||\, d^2y =
\\
\int_{B(0, 2d+1)} |\log|y||\, d^2y = c_2<\infty \quad.
\end{eqnarray*}
Finally, we can estimate for every $s$
$$
|(R_{m\, dx}(k)\varphi)(s)| \leq c_3 e^{-k_{20}\rho |s|} (1 + |\log(|s|+d)|)
$$
with a suitable finite $c_3$ (dependent on $\varphi$) and $R_{m\, dx}(k)\varphi \in L^2_\delta$.
\end{proof}

Now we give the first main result.
\begin{theorem}
\label{spectralac}
Under the Assumptions 1-4, there exists at most discrete subset ${\mathcal E}$ of open interval $(-\frac{\alpha^2}{4},0)$ such that
\begin{eqnarray*}
\left(-\frac{\alpha^2}{4},0\right) \cap \sigma_{pp}(H) = {\mathcal E}    \quad,
\\
\left(-\frac{\alpha^2}{4},0\right) \subset \sigma_{ac}(H) \quad,
\\
\left(-\frac{\alpha^2}{4},0\right) \cap \sigma_{sc}(H) = \emptyset \quad.
\end{eqnarray*}
\end{theorem}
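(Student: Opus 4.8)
The plan is to verify the limiting absorption principle for $H$ on $(-\tfrac{\alpha^2}{4},0)$ away from a discrete exceptional set and then to apply Theorem~XIII.20 of \cite{RSIV}. Concretely, let $\mathcal{E}\subset(-\tfrac{\alpha^2}{4},0)$ be the discrete set supplied by Lemma~\ref{small_inverse_lemma}, fix a closed interval $[\lambda_1,\lambda_2]\subset(-\tfrac{\alpha^2}{4},0)\setminus\mathcal{E}$ together with its $\varepsilon_0>0$, and take $\varphi\in C_0^\infty({\mathbb R}^2)$; I would prove that $|\Im(\varphi,(H-\lambda-i\varepsilon)^{-1}\varphi)|$ stays bounded for $\lambda\in[\lambda_1,\lambda_2]$ and $0<\varepsilon<1$. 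For $\varepsilon\ge\varepsilon_0$ this is the elementary resolvent bound $\|(H-\lambda-i\varepsilon)^{-1}\|\le\varepsilon_0^{-1}$, so only $k$ in the range (\ref{kbounds}) (with $k_2\ge k_{20}>0$) is at stake, and there I would read the bound off the Krein formula (\ref{full_resolvent}). Its first term $R_{dx\,dx}(k)=(-\Delta-k^2)^{-1}$ is harmless, since $\Re k^2\le\lambda_2<0$ keeps $k^2$ at distance $\ge-\lambda_2$ from $\sigma(-\Delta)=[0,\infty)$, so $(\varphi,R_{dx\,dx}(k)\varphi)$ is uniformly bounded with real boundary values; the work is in the middle term.

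For the middle term I would use the factorisation
\[
(I-\alpha R_{mm}(k))^{-1}=(I-\alpha R_0(k))^{-1}\bigl(I-\alpha(R_{mm}(k)-R_0(k))(I-\alpha R_0(k))^{-1}\bigr)^{-1},
\]
which, by Lemma~\ref{R0ubound} (here $\delta/2>\tfrac32$ since $\delta>3$) combined with Lemma~\ref{small_inverse_lemma}, is a bounded operator $L^2_{\delta/2}({\mathbb R})\to L^2_{-\delta/2}({\mathbb R})$ uniformly in the admissible $k$; this is precisely the place where $\mathcal{E}$, and through Lemmas~\ref{Rmm-R0compact} and \ref{Rmm-R0holomrphic} the curve asymptotics, enter. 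Sandwiching it, I would write for $g=(I-\alpha R_{mm}(k))^{-1}R_{m\,dx}(k)\varphi$
\[
(\varphi,R_{dx\,m}(k)g)_{L^2({\mathbb R}^2)}=\int_{\mathbb R}g(s)\,\bigl(R_{m\,dx}(k)\bar\varphi\bigr)(s)\,ds,
\]
the interchange of integrations being legitimate because $K_0(-ik|\cdot|)$ decays exponentially for $k_2\ge k_{20}>0$ and by (\ref{rhodef}); Cauchy--Schwarz in the pairing $L^2_{-\delta/2}$--$L^2_{\delta/2}$ then gives $|(\varphi,R_{dx\,m}(k)g)|\le\|g\|_{L^2_{-\delta/2}}\,\|R_{m\,dx}(k)\bar\varphi\|_{L^2_{\delta/2}}$, and Lemma~\ref{Rphidelta} (applied to $\varphi$ and to $\bar\varphi$) bounds $\|R_{m\,dx}(k)\varphi\|_{L^2_{\delta/2}}$ and $\|R_{m\,dx}(k)\bar\varphi\|_{L^2_{\delta/2}}$ uniformly. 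Hence $(\varphi,(H-k^2)^{-1}\varphi)$, and a fortiori its imaginary part, is uniformly bounded on $[\lambda_1,\lambda_2]\times(0,\varepsilon_0)$.

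By Theorem~XIII.20 of \cite{RSIV} the spectral measures of $H$ for all $\varphi$ in the dense set $C_0^\infty({\mathbb R}^2)$ are then purely absolutely continuous on $(\lambda_1,\lambda_2)$, so $\sigma_{sc}(H)\cap(\lambda_1,\lambda_2)=\sigma_{pp}(H)\cap(\lambda_1,\lambda_2)=\emptyset$ and $E_H((\lambda_1,\lambda_2))L^2({\mathbb R}^2)$ lies in the absolutely continuous subspace of $H$; since $(\lambda_1,\lambda_2)\subset\sigma_{ess}(H)=[-\tfrac{\alpha^2}{4},\infty)$ by Proposition~\ref{essspectr}, it follows that $(\lambda_1,\lambda_2)\subset\sigma_{ac}(H)$. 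Letting $[\lambda_1,\lambda_2]$ exhaust the open set $(-\tfrac{\alpha^2}{4},0)\setminus\mathcal{E}$ and using that $\sigma_{ac}(H)$ is closed while $(-\tfrac{\alpha^2}{4},0)\setminus\mathcal{E}$ is dense in $(-\tfrac{\alpha^2}{4},0)$ yields $(-\tfrac{\alpha^2}{4},0)\subset\sigma_{ac}(H)$; moreover $\sigma_{pp}(H)\cap(-\tfrac{\alpha^2}{4},0)\subset\mathcal{E}$, and replacing $\mathcal{E}$ by $\sigma_{pp}(H)\cap(-\tfrac{\alpha^2}{4},0)$ leaves it at most discrete; finally $\sigma_{sc}(H)\cap(-\tfrac{\alpha^2}{4},0)=\emptyset$, for a point of $\sigma_{sc}(H)$ inside that interval could not be isolated in $\sigma_{sc}(H)$ and would therefore force points of $\mathcal{E}$ to accumulate inside $(-\tfrac{\alpha^2}{4},0)$, against discreteness.

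I expect the genuine difficulty to lie in the lemmas one quotes rather than in this assembly: the compactness and holomorphy of $R_{mm}(k)-R_0(k)$ between the weighted spaces (Lemmas~\ref{Rmm-R0compact}, \ref{Rmm-R0holomrphic}), which consume the asymptotic flatness (\ref{phiLdelta}) of $\Gamma$, and the analytic Fredholm argument behind Lemma~\ref{small_inverse_lemma}, which is the limiting absorption principle itself and produces $\mathcal{E}$. Within the proof of the theorem proper the one delicate point is a bookkeeping matter: one must check that $g=(I-\alpha R_{mm}(k))^{-1}R_{m\,dx}(k)\varphi$, which is a priori an $L^2({\mathbb R})$ object for $k$ off the real axis, coincides with the $L^2_{-\delta/2}({\mathbb R})$ object produced by the factorised formula, so that the uniform weighted estimates really bear on the operator occurring in (\ref{full_resolvent}); this follows from uniqueness of the inverse of $I-\alpha R_{mm}(k)$ on $L^2({\mathbb R})\cap L^2_{-\delta/2}({\mathbb R})$, using that $R_0(k)$ and $R_{mm}(k)$ have exponentially decaying kernels throughout the range (\ref{kbounds}). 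The globalisation over $\mathcal{E}$ is the other routine but necessary step, handled as above.
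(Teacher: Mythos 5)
Your proposal is correct and follows essentially the same route as the paper: Theorem~XIII.20 of \cite{RSIV}, the Krein formula (\ref{full_resolvent}) factorised through $(I-\alpha R_0(k))^{-1}$, uniform bounds from Lemmas~\ref{R0ubound}, \ref{small_inverse_lemma} and \ref{Rphidelta}, and the same globalisation over the discrete set ${\mathcal E}$. The only departures are cosmetic and sound: you handle the outer factor $R_{dx\,m}(k)$ by transposing it onto $\bar\varphi$ instead of estimating the double integral directly, and you make explicit two points the paper leaves implicit (the isolated-point argument excluding $\sigma_{sc}$ from all of $(-\tfrac{\alpha^2}{4},0)$ rather than just from $(-\tfrac{\alpha^2}{4},0)\setminus{\mathcal E}$, and the identification of the $L^2$ and weighted-space inverses of $I-\alpha R_{mm}(k)$).
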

\begin{proof} By Proposition \ref{essspectr}, $(-\frac{\alpha^2}{4},0) \subset \sigma_{ess}(H)$.
Let ${\mathcal E}$ be the set from Lemma \ref{small_inverse_lemma} and $[\lambda_1,\lambda_2] \subset (-\frac{\alpha^2}{4},0) \setminus {\mathcal E}$.
We show that $(\lambda_1,\lambda_2) \subset \sigma_{ac}(H)$ and $(\lambda_1,\lambda_2) \cap (\sigma_{pp}(H) \cup \sigma_{sc}(H)) =\emptyset$.
By Theorem XIII.20 of \cite{RSIV}, it is sufficient to show that
\begin{equation}
\label{integral_bound}
\sup_{0<\varepsilon<\varepsilon_0} \int_{\lambda_1}^{\lambda_2} |\Im (\varphi, (H-\lambda-i\varepsilon)^{-1} \varphi)|^p \, d\lambda < \infty
\end{equation}
for some $p>1$, $\varepsilon_0>0$ and every $\varphi \in C_0^\infty({\mathbb R}^2)$.
As above, we shall write $k^2=\lambda+i\varepsilon$, $\Im k >0$. We use formula (\ref{full_resolvent}),
\begin{eqnarray*}
(H-\lambda-i\varepsilon)^{-1} = R_{dx\, dx}(k) + \alpha R_{dx\, m}(k) (I-\alpha R_{mm}(k))^{-1} R_{m\, dx}(k) =
\\
R_{dx\, dx}(k) + \alpha R_{dx\, m}(k) (I-\alpha R_0(k))^{-1} \cdot
\\
\cdot
\left( I - \alpha (R_{mm}(k) - R_0(k)) (I-\alpha R_0(k))^{-1}\right)^{-1} R_{m\, dx}(k)   \quad.
\end{eqnarray*}
As $(\lambda_1,\lambda_2)$ is contained in the resolvent set of the free Laplacian,
$$
|(\varphi,R_{dx\, dx}(k)\varphi)| \leq |\lambda_2|^{-1} \|\varphi\|_{L^2({\mathbb R}^2)}^2 \quad.
$$
Let $\delta>3$ be the number from Assumption 1. Then by Lemma \ref{Rphidelta}, $\|R_{dx\, m}(k)\varphi\|_{L^2_{\delta/2}({\mathbb R})}$
is uniformly bounded with respect to $k$. By Lemmas \ref{R0ubound} and \ref{small_inverse_lemma}
\begin{eqnarray*}
R_2 \varphi :=
\\
(I-\alpha R_0(k))^{-1} \left( I - \alpha (R_{mm}(k) - R_0(k)) (I-\alpha R_0(k))^{-1}\right)^{-1} R_{m\, dx}(k) \varphi
\end{eqnarray*}
is uniformly bounded in $L^2_{-\delta/2}({\mathbb R})$.
Now
\begin{eqnarray*}
2\pi |(\varphi, R_{dx\, m}(k) R_2\varphi)|=
\\
\left| \int_{{\mathbb R}^2} \overline{\varphi(x)} \left(\int_{\mathbb R} K_0(-ik|x-\Gamma(s)|)(R_2\varphi)(s)\, ds\right)\, d^2x \right|  \leq
\\
\left(\int_{{\mathbb R}^2} \int_{\mathbb R}|\varphi(x)|^2 |K_0(-ik|x-\Gamma(s)|)|^2 w(s)^{\delta} \, d^2 x\, ds\right)^{1/2}
\cdot
\\
\cdot
\left(\int_{{\rm supp}\, \varphi} d^2x \right)^{1/2}  \|R_2\varphi\|_{L^2_{-\delta/2}({\mathbb R})}
\end{eqnarray*}
and practically the same estimates as in the proof of Lemma \ref{Rphidelta} show that this is uniformly bounded in $k$.
Now
$$
|(\varphi,(H-\lambda -i \varepsilon)^{-1}\varphi)|
$$
is uniformly bounded with respect to $\lambda\in (\lambda_1,\lambda_2)$ and $\varepsilon \in (0,\varepsilon_0)$ and (\ref{integral_bound}) is verified for any $p>1$.

So we have
\begin{eqnarray*}
\left(-\frac{\alpha^2}{4},0\right) \cap \sigma_{pp}(H) \subset {\mathcal E}    \quad,\quad
\left(-\frac{\alpha^2}{4},0\right) \setminus {\mathcal E}\subset \sigma_{ac}(H) \quad,
\\
\left( \left(-\frac{\alpha^2}{4},0\right) \setminus {\mathcal E} \right) \cap \sigma_{sc}(H) = \emptyset \quad.
\end{eqnarray*}
However, isolated points of ${\mathcal E}$ which are not
eigenvalues of $H$ can be skipped from ${\mathcal E}$ without change of the statements and the theorem is proved.
\end{proof}
\noindent
{\bf Remark}. Theorem \ref{spectralac} does not exclude that the points $-\frac{\alpha^2}{4}$ or $0$ are accumulation points of embedded eigenvalues. This would
desire a further  study as well as the existence of the embedded eigenvalues in general.
\section{Wave operators}
\label{wosection}
We turn to the study of scattering for particles moving along the curve $\Gamma$ (\ref{nudef}) with the energies in $(-\frac{\alpha^2}{4},0)$ and the asymptotic states corresponding
to the movement along the straight lines $\Gamma^{(\pm)}$,
$$
\Gamma^{(\pm)}(s) = a_\pm + v_\pm s, \quad s \in {\mathbb R}.
$$
We have two directions of the asymptotic movement $v_\pm$ in each of which the particles can be incoming or outgoing and we need four wave operators. We include some projectors into
their definitions to distinguish these cases (cf. \cite{RSIV}, vol. 3, par. XI.3). Let us denote as $H^{(\pm)}$ the Hamiltonians corresponding to the curves $\Gamma^{(\pm)}$
and $J^{(\pm)}$ their spectral projectors to the energy interval $(-\frac{\alpha^2}{4},0)$.
Remember that $H^{(\pm)}$ has the absolutely continuous spectrum only.
Further, denote $P^{(\pm)}_>$ the spectral projectors to the interval $(0,+\infty)$ of the
momentum in the direction $v_\pm$, i.e., of the self-adjoint operator $-i v_\pm \cdot \nabla$. Similarly, $P^{(\pm)}_<$ denotes the spectral projector to the interval $(-\infty, 0)$
of the momentum in the direction $v_\pm$, i.e. the movement in the direction $-v_\pm$. We are interested in the wave operators
\begin{eqnarray}
\label{wo_first}
\Omega^{(+)}_{in} = s-\lim_{t\to -\infty} e^{i H t} P^{(+)}_< e^{-i H^{(+)} t} J^{(+)} \quad ,
\\
\label{wo}
\Omega^{(+)}_{out} = s-\lim_{t\to +\infty} e^{i H t} P^{(+)}_> e^{-i H^{(+)} t} J^{(+)} \quad ,
\\
\Omega^{(-)}_{in} = s-\lim_{t\to -\infty} e^{i H t} P^{(-)}_> e^{-i H^{(-)} t} J^{(-)} \quad ,
\\
\label{wo_last}
\Omega^{(-)}_{out} = s-\lim_{t\to +\infty} e^{i H t} P^{(-)}_< e^{-i H^{(-)} t} J^{(-)} \quad .
\end{eqnarray}
Their interpretation is clear, e.g. $\Omega^{(+)}_{in}$ corresponds to particles incoming along the curve in the direction opposite to $v_+$. $S$-matrix is formed from these wave operators
in the usual way. The other four possible wave operators with interchanged $P^{(\pm)}_>$ and $P^{(\pm)}_<$ are not of the physical interest as they would correspond to the particles
outgoing in the past or incoming in the future.

We prove the existence of $\Omega^{(+)}_{out}$. The existence of the other wave operators can be then proved in the same way or by the symmetry arguments. If $\Omega^{(+)}_{out}$ exists for every curve
$\Gamma$ satisfying our assumptions the existence of $\Omega^{(-)}_{out}$ is seen using the reparametrization $\tilde{\Gamma}(s)=\Gamma(-s)$ and the existence of $\Omega^{(+)}_{in}$ and $\Omega^{(-)}_{in}$
then follows by the time-reversal invariance.

For the simplicity of notation we choose the coordinate system such that the first axis is oriented along $v_+$, i.e.
\begin{equation}
\label{specialv+}
v_+=(1,0) \quad.
\end{equation}
Now
\begin{equation}
\label{H+def}
H^{(+)} = \overline{ -\partial_{x_1}^2 \otimes I_2 + I_1 \otimes H_{2,\alpha} }
\end{equation}
where $I_{1,2}$ are identity operators in $L^2({\mathbb R})$, $\partial_{x_1}^2$ means the free Laplacian in one dimension and $H_{2,\alpha}$ is the one dimensional Schr\"{o}dinger operator
with the $-\alpha \delta(x_2)$ interaction. $H_{2,\alpha}$ has one eigenvalue $-\frac{\alpha^2}{4}$
with the eigenfunction
\begin{equation}
\varphi_0(x_2) = \sqrt{\frac{\alpha}{2}} e^{-\frac{\alpha}{2} |x_2|}
\end{equation}
and absolutely continuous spectrum $[0,+\infty)$, e.g. \cite{Albeverio}, Chapter I.3. The spectral measure of the operator $H^{(+)}$ with separated variables is the tensor convolution of the spectral measures
for $-\partial_{x_1}^2$ and $H_{2,\alpha}$ \cite{Fox}. So we can explicitly construct the projector in (\ref{wo}) (notice that the last three operators in the product commute there)
\begin{equation}
\label{spectproj}
P^{(+)}_>J^{(+)} = {\mathcal F_1}^{-1} \chi_{(0,\alpha/2)} {\mathcal F_1} \varphi_0 (\varphi_0,\cdot)_2
\end{equation}
where $(\cdot,\cdot)_2$ is the scalar product in $L^2({\mathbb R}, d x_2)$ used for the projection on $\varphi_0(x_2)$, ${\mathcal F}_1$ is the Fourier transform in the variable $x_1$
and $\chi_{(0,\alpha/2)}$ the characteristic function projecting on the range of the corresponding momentum variable $p_1$ in the interval $(0,\alpha/2)$.

We prove the existence of the wave operator $\Omega^{(+)}_{out}$ using generalized Kuroda-Birman theorem given in the Appendix B. We use the resolvent in the form (\ref{full_resolvent})
with $k=i\kappa$, $\kappa>0$ large enough for the validity of the following considerations. Then $z=k^2=-\kappa^2$ belongs to the resolvent sets of both $H$ and $H^{(+)}$ and $\|\alpha R_{mm}\| <1$ so that
\begin{equation}
\label{reduced_resolvent}
(I-\alpha R_{mm}(k))^{-1} = I + r
\end{equation}
where $r$ is a bounded operator in $L^2({\mathbb R})$. We show now that $r$ is an integral operator and derive some bounds on its kernel. We use letter $c$ to denote a constant which might have
different values in different formulas. Let $f\in L^2({\mathbb R})$, then
\begin{eqnarray*}
|(\alpha R_{mm}(k) f)(s)| \leq c \int_{\mathbb R} K_0(\kappa \rho |s-t|) |f(t)| \, dt
\\
\leq c \int_{\mathbb R} {\mathcal K}_1(s-t) |f(t)| \, dt \quad,
\\
{\mathcal K}_1(s)=c (1+ |\log \kappa\rho |s||) e^{-\kappa \rho |s|}
\end{eqnarray*}
according to (\ref{Rmm}), (\ref{rhodef}) and (\ref{K0bound}). By the Young inequality (e.g. Sect. IX.4 of \cite{RSIV}), $\|\alpha R_{mm}(k)\| \leq c/(\kappa \rho) < 1$ for $\kappa$ sufficiently large.
Now
$$
(rf)(s) = \int_{\mathbb R} {\mathcal R}(s,t) f(t) \, dt
$$
where
\begin{equation}
\label{kernelK}
|{\mathcal R}(s,t)| \leq {\mathcal K}(s-t) \quad,\quad {\mathcal K} = {\mathcal K}_1 + {\mathcal K}_1*{\mathcal K}_1 + {\mathcal K}_1*{\mathcal K}_1*{\mathcal K}_1 + \dots
\end{equation}
provided that the last series is convergent in $L^1({\mathbb R})$ which is the case for $\kappa$ large enough. Even more,
${\mathcal K}_1 \in L^1({\mathbb R}, (1+|s|) ds) =: {\mathcal H}_1$
and  $\| {\mathcal K}_1\|_{{\mathcal H}_1} < 1$ for $\kappa$ large enough. Now $\|{\mathcal K}_1*\dots * {\mathcal K}_1\|_{{\mathcal H}_1} \leq \|{\mathcal K}_1\|_{{\mathcal H}_1}^n$
for $n-1$ convolutions%
\footnote{We could use $L^1_{1/2}({\mathbb R})$ instead of ${\mathcal H}_1$ but then an unessential factor of the form $c^{n-1}$ would appear in the last formula.
The both spaces are of course equal as sets.}
and we see that ${\mathcal K}\in{\mathcal H}_1$.

Let now $K=P^{(+)}_> J^{(+)}$ (\ref{spectproj}) and consider the difference
\begin{eqnarray*}
T&=& K (H^{(+)}+\kappa^2)^{-1}-(H+\kappa^2)^{-1}K \quad,
\\
&=& \left( (H^{(+)}+\kappa^2)^{-1}-(H+\kappa^2)^{-1} \right) K \quad,
\\
T&=&C_1+C_2 \quad,
\\
C_1&=&\left( (H^{(+)}+\kappa^2)^{-1}-(H+\kappa^2)^{-1} \right) \theta(x_1) K \quad,
\\
C_2&=&\left( (H^{(+)}+\kappa^2)^{-1}-(H+\kappa^2)^{-1} \right) \theta(-x_1) K \quad.
\end{eqnarray*}

Let
\begin{eqnarray*}
M_1={\mathcal F_1^{-1}} C_0^\infty ((-\infty,0) \cup (0,\frac{\alpha}{2}) \cup (\frac{\alpha}{2},+\infty))
\quad,\quad
M_2={\mathcal M}(H_{2,\alpha})
\quad,
\\
M= \{M_1 \times M_2\}_{lin}
\end{eqnarray*}
where the definition of ${\mathcal M}$ is reminded in the Appendix B and $M$ is the set of all finite linear combinations of
vectors $\varphi_1 \otimes \varphi_2$, $\varphi_1\in M_1$, $\varphi_2 \in M_2$ (sometimes called as the algebraic tensor product).
As the closures $\overline{M_1} = L^2({\mathbb R})$, $\overline{M_2} = L^2({\mathbb R})$ the set $M$ is dense, $\overline{M}=L^2({\mathbb R}^2)$.
\begin{lemma}
The above defined set
$$
M \subset {\mathcal M}(H^{(+)}) \quad.
$$
\end{lemma}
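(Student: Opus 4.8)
The plan is to reduce the claim to a single elementary tensor and then exploit the product structure of $H^{(+)}$ through the convolution formula for spectral measures recalled just above the lemma (see \cite{Fox}). Since $M$ is the linear span of the vectors $\varphi_1\otimes\varphi_2$ with $\varphi_1\in M_1$ and $\varphi_2\in M_2=\mathcal{M}(H_{2,\alpha})$, and since the conditions defining $\mathcal{M}(H^{(+)})$ in Appendix~B are preserved by finite linear combinations, it is enough to prove $\varphi_1\otimes\varphi_2\in\mathcal{M}(H^{(+)})$ for such elementary tensors. I would first spell out what membership in $\mathcal{M}(H^{(+)})$ requires here: these conditions concern the spectral measure $\lambda\mapsto(\psi,E_{H^{(+)}}(\lambda)\psi)$, which must be absolutely continuous apart from at most a finite point part, with the a.c. density bounded, of compact support, and supported in the interior of $\sigma(H^{(+)})=[-\alpha^2/4,\infty)$, i.e. bounded away from the threshold $-\alpha^2/4$; membership in $\mathcal{H}_{ac}(H^{(+)})$ is automatic because $H^{(+)}$ is purely absolutely continuous. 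All of these properties are stable under convolution of measures.

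Next I would identify the two factor measures. By \cite{Fox} the spectral measure of $\varphi_1\otimes\varphi_2$ for $H^{(+)}=-\partial_{x_1}^2\otimes I_2+I_1\otimes H_{2,\alpha}$ is the convolution of the spectral measure of $\varphi_1$ for $-\partial_{x_1}^2$ with that of $\varphi_2$ for $H_{2,\alpha}$. For $\varphi_1\in M_1$ we have $\widehat{\varphi_1}\in C_0^\infty(\mathbb{R})$ with $\operatorname{supp}\widehat{\varphi_1}$ disjoint from $\{0,\alpha/2\}$; changing from the momentum variable $p_1$ to the energy $\lambda_1=p_1^2$, the spectral measure of $\varphi_1$ for $-\partial_{x_1}^2$ is absolutely continuous with density $(2\sqrt{\lambda_1})^{-1}\big(|\widehat{\varphi_1}(\sqrt{\lambda_1})|^2+|\widehat{\varphi_1}(-\sqrt{\lambda_1})|^2\big)$, which is smooth, compactly supported, and --- thanks to the vanishing of $\widehat{\varphi_1}$ near $p_1=0$ --- supported in a compact subset of $(0,\infty)$ (the vanishing near $p_1=\alpha/2$ is not needed for this lemma and will be used only later). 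For $\varphi_2\in M_2=\mathcal{M}(H_{2,\alpha})$ the spectral measure already has, by definition, the structure required for the one-dimensional operator $H_{2,\alpha}$: a bounded compactly supported a.c. density inside $(0,\infty)$ plus at most a finite multiple of the point mass at the eigenvalue $-\alpha^2/4$ of $H_{2,\alpha}$.

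Finally I would combine the factors. Convolving the a.c., bounded, compactly supported density coming from $\varphi_1$, whose support lies in some $[\varepsilon,M]\subset(0,\infty)$, with the spectral measure of $\varphi_2$ gives again an absolutely continuous measure: convolution with the a.c. part produces a bounded a.c. density by Young's inequality, while convolution with the point mass at $-\alpha^2/4$ merely translates the $\varphi_1$-density by $-\alpha^2/4$. The support of the result is contained in $\operatorname{supp}\mu_{\varphi_1}+\operatorname{supp}\mu_{\varphi_2}$, hence compact, and it lies in $(-\alpha^2/4,\infty)$ since its left end $\varepsilon-\alpha^2/4$ is strictly above $-\alpha^2/4$. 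Thus the spectral measure of $\varphi_1\otimes\varphi_2$ for $H^{(+)}$ is absolutely continuous with a bounded, compactly supported density lying in the interior of $[-\alpha^2/4,\infty)$, which is precisely what is required for $\varphi_1\otimes\varphi_2\in\mathcal{M}(H^{(+)})$. By linearity $M\subset\mathcal{M}(H^{(+)})$.

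I expect the only delicate point to be the bookkeeping at the threshold $-\alpha^2/4$: one must be certain that the atom of $\mu_{\varphi_2}$ at $-\alpha^2/4$ does not, after convolution, drag the support of the combined spectral measure down to the bottom of the spectrum, and this is exactly what the excision of $p_1=0$ from $\operatorname{supp}\widehat{\varphi_1}$ secures. A secondary technicality is to match whatever regularity of the a.c. density is demanded in the Appendix~B definition of $\mathcal{M}$, which is again automatic here since the convolution of a $C_0^\infty$ density with a finite measure is still $C_0^\infty$.
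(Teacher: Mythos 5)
Your argument is essentially the paper's own: reduce to elementary tensors $\varphi_1\otimes\varphi_2$ by linearity, compute the spectral density of $\varphi_1$ for $-\partial_{x_1}^2$ explicitly as $\bigl(|\hat{\varphi}_1(\sqrt{\lambda})|^2+|\hat{\varphi}_1(-\sqrt{\lambda})|^2\bigr)\theta(\lambda)/(2\sqrt{\lambda})$ (bounded precisely because $\hat{\varphi}_1$ vanishes near $p_1=0$), invoke the convolution formula of Fox for the spectral measure of the tensor product, and observe that convolution preserves the class of absolutely continuous measures with bounded density. That is correct and is exactly what the paper does.

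The one thing to fix is your reading of the definition of ${\mathcal M}$ from Appendix~B: it requires only $d(\varphi,E_\lambda\varphi)=|f(\lambda)|^2\,d\lambda$ with $f\in L^\infty({\mathbb R})$, i.e.\ a purely absolutely continuous spectral measure with bounded density. There is no compact-support condition, no requirement of staying away from the threshold $-\alpha^2/4$, and no finite point part is allowed. Consequently your assertion that elements of $M_2={\mathcal M}(H_{2,\alpha})$ may carry an atom at $-\alpha^2/4$ and have compactly supported density is false ``by definition,'' and the step where you deduce compactness of the support of the convolved measure from $\operatorname{supp}\mu_{\varphi_1}+\operatorname{supp}\mu_{\varphi_2}$ does not follow from the actual hypotheses. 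None of this damages the proof, because the only property that has to be verified is absolute continuity with bounded density, which your Young-inequality argument (the paper uses $f_1,f_2\in L^1\cap L^\infty\Rightarrow f_1*f_2\in L^1\cap L^\infty$) already gives; the extra conclusions you draw are simply unnecessary, and one of them is unjustified.
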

\begin{proof} The Stone formula easily leads to
$$
d (\varphi_1, E^{H_1}_\lambda \varphi_1) = \left( |\hat{\varphi_1}(\sqrt{\lambda})|^2 + |\hat{\varphi_1}(-\sqrt{\lambda})|^2 \right) \frac{\theta(\lambda)}{2\sqrt{\lambda}}\, d\lambda
$$
where $H_1$ is the self-adjoint operator corresponding to $-\partial_{x_1}^2$ in $L^2({\mathbb R})$, $E^{H_1}_\lambda$ its spectral projector and $\hat{\varphi_1} = {\mathcal F}_1 \varphi_1$.
So $M_1 \subset {\mathcal M}(H_1)$. Now $\varphi_1 \otimes \varphi_2 \in {\mathcal M}(H^{(+)})$ for every $\varphi_1 \in M_1$, $\varphi_2 \in M_2$.
To see that assume
$$
d(\varphi_1, E^{H_1}_\lambda \varphi_1) = f_1(\lambda) d\lambda \quad,\quad d(\varphi_2, E^{H_{2,\alpha}}_\lambda \varphi_2) = f_2(\lambda) d\lambda
$$
with $f_{1,2} \in L^1({\mathbb R})\cap L^\infty({\mathbb R})$. Then \cite{Fox}
\begin{eqnarray*}
(\varphi_1 \otimes \varphi_2, E^{H^{(+)}}_\lambda (\varphi_1 \otimes \varphi_2)) = (\varphi_1 \otimes \varphi_2, (E^{H_1} \circledast E^{H_{2,\alpha}} )_\lambda (\varphi_1 \otimes \varphi_2))
\\
= \int_{\lambda_1 + \lambda_2 < \lambda} d (\varphi_1, E^{H_1}_{\lambda_1} \varphi_1) \, d (\varphi_2, E^{H_{2,\alpha}}_{\lambda_2} \varphi_2)
= \int_{-\infty}^\lambda (f_1*f_2)(\xi)\, d\xi \quad.
\end{eqnarray*}
As also convolution $f_1*f_2 \in L^1({\mathbb R}) \cap L^\infty({\mathbb R})$, $\varphi_1 \otimes \varphi_2 \in {\mathcal M}(H^{(+)})$ and $M\subset {\mathcal M}(H^{(+)})$ by the linearity.
\end{proof}
We are going to verify the assumptions of Theorem \ref{Kuroda-Birman} in the Appendix B with our $C_1$, $C_2$, $M$, $A=H$, $B=H^{(+)}$. Let us start with (\ref{C2decrease2}).
Taking into account that the difference of resolvents in $C_2$ is a bounded operator, $K$ commutes with $\exp(-i H^{(+)} t)$, and
$$
K\psi = \varphi \otimes \varphi_0 \; , \; \hat{\varphi} \in C_0^\infty (0,{\alpha}/{2}) \; , \; e^{-i H^{(+)}t} K\psi = \left(e^{-i H_1 t} \varphi \right) \otimes e^{i\frac{\alpha^2}{4}t} \varphi_0
$$
for every $\psi \in M$ with a suitable $\varphi \in L^2({\mathbb R})$ it is sufficient to verify the following statement.
\begin{lemma}
Let $\varphi \in L^2({\mathbb R})$, $\hat{\varphi} \in C_0^\infty ((0,\alpha/2))$ where $\alpha>0$ then
$$
\lim_{t \to +\infty}\|e^{-i H_1 t}\varphi\|_{L^2((-\infty,0))} = 0 \quad,\quad \int_0^{+\infty} \|e^{-i H_1 t}\varphi\|_{L^2((-\infty,0))} \, dt <\infty \; .
$$
\end{lemma}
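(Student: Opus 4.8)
The plan is to prove this by a standard non-stationary phase (propagation) estimate. Since $\hat\varphi\in C_0^\infty((0,\alpha/2))$, its support is a compact subset $[a,b]$ of $(0,\alpha/2)$, so $a:=\inf{\rm supp}\,\hat\varphi>0$; physically the packet $\varphi$ has strictly positive group velocity $2p\ge 2a$, moves to the right, and the $L^2$-mass it carries into $(-\infty,0)$ decays rapidly in $t$. First I would pass to the momentum representation: since ${\mathcal F}_1(e^{-iH_1t}\varphi)(p)=e^{-ip^2t}\hat\varphi(p)$,
$$
(e^{-iH_1t}\varphi)(x)=\frac{1}{\sqrt{2\pi}}\int_{\mathbb R}e^{i(px-p^2t)}\hat\varphi(p)\,dp ,
$$
where the integration is effectively only over $[a,b]$. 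For $x\le 0$ and $t>0$ the phase derivative $\partial_p(px-p^2t)=x-2pt=-(|x|+2pt)$ satisfies $|x-2pt|\ge|x|+2at>0$ on that set, i.e.\ it is bounded away from zero there.

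Next I would integrate by parts $N$ times, using the identity $e^{i(px-p^2t)}=\frac{1}{i(x-2pt)}\,\partial_p e^{i(px-p^2t)}$. The boundary terms vanish because $\hat\varphi$ is compactly supported and $x-2pt$ does not vanish on ${\rm supp}\,\hat\varphi$. This rewrites the integral as $\int e^{i(px-p^2t)}({\mathcal L}^N\hat\varphi)(p)\,dp$ with ${\mathcal L}g=-\partial_p\left(\frac{g}{i(x-2pt)}\right)$, and an elementary induction shows that ${\mathcal L}^N\hat\varphi$ is a finite linear combination of terms $c_{N,m}\,t^m\,\hat\varphi^{(N-m)}(p)\,(x-2pt)^{-(N+m)}$ with $0\le m\le N$. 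Using $|x-2pt|\ge|x|+2at$ together with $t^m|x-2pt|^{-m}\le t^m(2at)^{-m}=(2a)^{-m}$, each such term is bounded by a constant, depending only on $N$, $a$, $b$ and finitely many $\|\hat\varphi^{(j)}\|_\infty$, times $(|x|+2at)^{-N}$. Hence
$$
|(e^{-iH_1t}\varphi)(x)|\le\frac{C_N}{(|x|+2at)^{N}}\qquad(x\le 0,\ t>0).
$$

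Finally, squaring and integrating over $x<0$ (substitute $y=-x$) gives, for every $N\ge 1$,
$$
\|e^{-iH_1t}\varphi\|_{L^2((-\infty,0))}^2\le\int_0^\infty\frac{C_N^2}{(y+2at)^{2N}}\,dy=\frac{C_N^2}{(2N-1)(2at)^{2N-1}} ,
$$
so $\|e^{-iH_1t}\varphi\|_{L^2((-\infty,0))}\le C_N'\,t^{1/2-N}$ for $t>0$. Taking $N=1$ yields $\|e^{-iH_1t}\varphi\|_{L^2((-\infty,0))}\le C_1'\,t^{-1/2}\to 0$ as $t\to+\infty$, the first assertion. For the second, split $\int_0^{+\infty}dt=\int_0^1+\int_1^{+\infty}$: on $(0,1)$ use the elementary unitarity bound $\|e^{-iH_1t}\varphi\|_{L^2((-\infty,0))}\le\|e^{-iH_1t}\varphi\|_{L^2({\mathbb R})}=\|\varphi\|_{L^2}$, and on $(1,+\infty)$ use the above estimate with $N=2$, which gives the integrable tail $C_2'\,t^{-3/2}$; hence $\int_0^{+\infty}\|e^{-iH_1t}\varphi\|_{L^2((-\infty,0))}\,dt<\infty$.

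There is no genuine obstacle here; the one point that needs attention is that the non-stationary phase bound degenerates as $t\to 0^+$ (the factor $(2at)^{2N-1}$ in the denominator), so near $t=0$ one cannot use the decay estimate and must fall back on the unitarity bound $\|e^{-iH_1t}\varphi\|_{L^2}=\|\varphi\|_{L^2}$. Apart from that it is only the routine bookkeeping of how repeated integration by parts distributes the powers of $t$ and of $1/(x-2pt)$.
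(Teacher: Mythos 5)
Your proposal is correct and follows essentially the same route as the paper: write $e^{-iH_1t}\varphi$ as an oscillatory integral over the compact support of $\hat\varphi\subset(0,\alpha/2)$, exploit that the phase derivative $x-2pt$ is bounded away from zero for $x\le 0$, $t>0$, integrate by parts in $p$ (the paper does this exactly twice, yielding the same $t^{-3/2}$ decay you get with $N=2$), and handle $t$ near $0$ by the unitarity bound. The only cosmetic difference is that you keep the number of integrations by parts general and use $N=1$ for the limit statement.
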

\begin{proof} Let us use the explicit form of $\phi(t,x) = (\exp({-iH_1 t})\varphi)(x)$,
$$
({\mathcal F} e^{-i H_1 t} \varphi)(p) = e^{-i p^2 t} \hat{\varphi}(p) \quad,\quad \phi(t,x) = (2\pi)^{-\frac{1}{2}} \int_\varepsilon^a e^{ipx -i p^2 t} \hat{\varphi}(p) \, dp
$$
where $0<\varepsilon <a \leq \alpha/2$ are such that still $\hat{\varphi} \in C_0^\infty ((\varepsilon, a))$. Integrating here twice by parts with respect to p for $x<0$ and $t>0$,
\begin{eqnarray*}
\phi(t,x)=
\\
-(2\pi)^{-1/2} \int_\varepsilon ^a e^{ipx-ip^2t}\left(\frac{\hat{\varphi}''(p)}{(x-2pt)^2} + \frac{6\hat{\varphi}'(p)t}{(x-2pt)^3} + \frac{12\hat{\varphi}(p)t^2}{(x-2pt)^4} \right) \, dp
\end{eqnarray*}
and
$$
|\phi(t,x)| \leq c (x-2\varepsilon t)^{-2}
$$
with a constant $c$ dependent on $\varphi$. Then
$$
\|\phi(t,\cdot)\|_{L^2((-\infty,0))} \leq \frac{c}{2\sqrt{6}} \varepsilon^{-3/2} t^{-3/2}
$$
and the statement follows, integrability near $t=0$ being automatic as $\|\phi(t,\cdot)\|_{L^2({\mathbb R})}$ is independent of $t$.
\end{proof}

It remains to show that the operator $C_1$ is trace class. We typically deal with the integral operators $\hat{J}$ in $L^2({\mathbb R}^2)$ with the kernels of the form
$$
J(x,y)=\int_{\mathbb R} A(x,s) B(s,y) \, ds \quad,\quad (x,y \in {\mathbb R}^2) \quad.
$$
According to Proposition 3.6.5 of \cite{BSimon} the operator $\hat{J} \in {\mathcal J}_1$ if and only if
$$
\sup_{\{\varphi_n\},\{\psi_n\}} S(\hat{J}, \{\varphi_n\}, \{\psi_n\}) <\infty     \;\; ,\;\; S(\hat{J}, \{\varphi_n\}, \{\psi_n\}) =  \sum_n |(\varphi_n,\hat{J} \psi_n)|
$$
where the supremum is taken over all orthonormal sets $\{\varphi_n\}$, $\{\psi_n\}$.
We use the following scheme.
Using Bessel and Schwarz inequalities,
\begin{eqnarray}
\nonumber
S = \sum_n \left| \int_{{\mathbb R}^2} \overline{\varphi_n(x)} \left(\int_{\mathbb R} A(x,s) \left(\int_{{\mathbb R}^2} B(s,y) \psi_n(y)\, dy\right) \, ds\right)\,  dx \right|
\\
\nonumber
\leq \int_{\mathbb R} \sum_n | (\varphi_n,A(\cdot,s)) | | (\overline{B(s,\cdot)},\psi_n) | \, ds
\\
\nonumber
\leq \int_{\mathbb R} \left[ \sum_n |(\varphi_n,A(\cdot,s))|^2 \right]^{1/2} \left[ \sum_n |(\psi_n,B(s,\cdot))|^2 \right]^{1/2} \, ds
\\
\label{ABnorms}
\leq \int_{\mathbb R} \left[ \int_{{\mathbb R}^2}|A(x,s)|^2 \,dx \right]^{1/2} \left[ \int_{{\mathbb R}^2} |B(s,y)|^2 \,dy \right]^{1/2} \, ds \quad
\end{eqnarray}
and it is sufficient to verify the convergence of the last integral.
This also justifies the use of Fubini theorem here in the following way. The Fourier coefficient
$$
|(\overline{B(s,\cdot)}, \psi_n)| \leq \| B(s,\cdot)\| \quad.
$$
As functions $A$ and $B$ involved in our estimates below are measurable, it is sufficient to verify the convergence of
$$
\int_{{\mathbb R}^2 \times {\mathbb R}} \left| \overline{\varphi_n(x)} A(x,s) (\overline{B(s,\cdot)}, \psi_n) \right| \,dx\,ds \leq \int_{{\mathbb R}} \|A(\cdot,s)\|\,\|B(s,\cdot)\| \,ds \quad,
$$
i.e., the finiteness of (\ref{ABnorms}). The interchange of the sum  and integral over $s$ with the non-negative integrand is then always possible.
\begin{lemma}
The operator
$$
C_1 = \left( (H^{(+)}+\kappa^2)^{-1}-(H+\kappa^2)^{-1} \right) \theta(x_1) K
$$
is trace class for $\kappa$ sufficiently large.
\end{lemma}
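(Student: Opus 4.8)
The plan is to write $C_1$ by means of the Krein resolvent formula (\ref{full_resolvent}) for both $H$ (curve $\Gamma$) and $H^{(+)}$ (straight line $\Gamma^{(+)}$), and to reduce the trace-norm bound to the scheme (\ref{ABnorms}), the intermediate variable running over the curve parameter $s$. The common free term $R_{dx\,dx}(i\kappa)$ cancels in the difference of the two resolvents; since $|\Gamma^{(+)}(s)-\Gamma^{(+)}(t)|=|s-t|$ we have $R_{mm}^{(+)}(i\kappa)=R_0(i\kappa)$, and for $\kappa$ large $(I-\alpha R_0(i\kappa))^{-1}=I+r_0$ with $r_0$ admitting a Neumann-series kernel bound by an $L^1$ convolution kernel exactly as for $r$ in (\ref{kernelK}). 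Thus
\begin{equation*}
C_1=\alpha\big[R_{dx\,m}^{(+)}(I+r_0)R_{m\,dx}^{(+)}-R_{dx\,m}(I+r)R_{m\,dx}\big]\theta(x_1)K ,
\end{equation*}
and I would telescope this into three summands, each carrying exactly one ``curve--difference'' factor: $\Delta:=R_{dx\,m}^{(+)}-R_{dx\,m}$, $r_0-r=\alpha(I+r_0)\big(R_0(i\kappa)-R_{mm}(i\kappa)\big)(I+r)$, and $\Delta^\ast=R_{m\,dx}^{(+)}-R_{m\,dx}$. It then suffices to show each summand is trace class.

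For each summand I would factor it as $\mathcal{A}\mathcal{B}$ with $\mathcal{B}:L^2({\mathbb R}^2)\to L^2({\mathbb R})$ and $\mathcal{A}:L^2({\mathbb R})\to L^2({\mathbb R}^2)$, the middle $L^2({\mathbb R})$ being the curve, and bound the trace norm by $\int_{\mathbb R}\|A(\cdot,s)\|_{L^2({\mathbb R}^2)}\|B(s,\cdot)\|_{L^2({\mathbb R}^2)}\,ds$ as in (\ref{ABnorms}). The ``harmless'' factors built from $R_{dx\,m}$, $R_{m\,dx}$, $I+r$, $I+r_0$ have columns of uniformly bounded $L^2({\mathbb R}^2)$-norm, because $\|K_0(\kappa|\cdot-\Gamma(s)|)\|_{L^2({\mathbb R}^2)}$ does not depend on $s$ (translation invariance, and the logarithmic singularity of $K_0$ is square integrable in two dimensions), and convolution with the $L^1$ kernels of $r,r_0$ preserves this. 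The factor $\theta(x_1)K$ is controlled via (\ref{spectproj}): $K\psi=\varphi\otimes\varphi_0$ with $\varphi_0$ exponentially decaying and $\widehat{\varphi}\in C_0^\infty((0,\alpha/2))$, hence $\varphi\in\mathcal{S}({\mathbb R})$, so $\theta(x_1)K\psi$ is rapidly decaying in $x_1$ and exponentially decaying in $x_2$; consequently $R_{m\,dx}^{(+)}\theta(x_1)K$ (and the analogous operator with $\Gamma$) has rows whose $L^2$-norm decays exponentially as $s\to-\infty$, because $\Gamma^{(+)}(s)$ then leaves the half-plane $\{x_1\ge 0\}$ supporting $\theta$ while, by (\ref{rhodef}), $\Gamma(s)$ escapes to infinity, so the Macdonald factor $K_0(\kappa|\,\cdot-\Gamma^{(+)}(s)|)$ (resp. with $\Gamma(s)$) is exponentially small there.

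The core of the argument is the behaviour of the three curve--difference factors as functions of $s$. On every bounded parameter interval the contribution is automatically finite, the region being compact and all the $L^2$-norms there finite. For $s\to+\infty$ I would use Proposition \ref{curvelimits}: $\Gamma(s)-\Gamma^{(+)}(s)=\nu_+(s)$ with $|\nu_+(s)|\le s^{1-\kappa}/(\kappa-1)\lesssim s^{-5/2}$ (taking $\kappa=7/2$), so by the mean value theorem and the estimates (\ref{K0bound})--(\ref{K0derivative}) the column norm $\|\Delta(\cdot,s)\|_{L^2({\mathbb R}^2)}$ is $O\big(|\nu_+(s)|\,|\log|\nu_+(s)||\big)=O(s^{-5/2+\varepsilon})$, hence integrable; likewise the kernel of $R_0(i\kappa)-R_{mm}(i\kappa)$ at $(s,t)$ is $O\big(\varphi_+(\cdot)\,e^{-c|s-t|}\big)$ on the outgoing part of the curve, exactly the bound established in the proof of Lemma \ref{Rmm-R0compact}, which again gives an integrable contribution after convolution with the $L^1$ kernels of $r,r_0$ and with the bounded rows of $R_{m\,dx}^{(+)}\theta(x_1)K$. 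For $s\to-\infty$ the two curves are \emph{not} close, but then both $\Gamma(s)$ and $\Gamma^{(+)}(s)$ are far from the effective support of $\theta(x_1)K\psi$, so each Macdonald factor separately — and \emph{a fortiori} their difference — is exponentially small in $|s|$. Collecting the three parameter regions gives $\int_{\mathbb R}\|A(\cdot,s)\|\,\|B(s,\cdot)\|\,ds<\infty$ for each summand, whence $C_1\in\mathcal{J}_1$.

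I expect the main obstacle to be the regime of bounded, and especially negative, curve parameter, where the curve is far from $\Gamma^{(+)}$ and the decay of $\nu_+$ is unavailable: there the estimate must rely on the rapid decay of $\varphi$ — that is, on the presence of the momentum spectral projector in $K=P^{(+)}_>J^{(+)}$ — together with the non-intersecting bound (\ref{rhodef}) to make the off-diagonal Macdonald factors decay; and on the bookkeeping of the several convolutions introduced by $r$ and $r_0$, keeping track of where each decay factor ($s^{-5/2}$, $e^{-c|s-t|}$, $e^{-c|s|}$) sits so that the integral $\int_{\mathbb R}\|A(\cdot,s)\|\,\|B(s,\cdot)\|\,ds$ converges.
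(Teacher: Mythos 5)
Your decomposition is in substance identical to the paper's: the free term $R_{dx\,dx}$ cancels, $R_{mm}^{(+)}=R_0$, and telescoping $R_{dx\,m}^{(+)}(I+r^{(+)})R_{m\,dx}^{(+)}-R_{dx\,m}(I+r)R_{m\,dx}$ so that each summand carries exactly one curve-difference factor reproduces the paper's five operators $C_{11},\dots,C_{15}$ (your three summands are just the paper's terms grouped, with $r_0=r^{(+)}$ and $r^{(+)}-r=\alpha(I+r^{(+)})(R_0-R_{mm})(I+r)$ handled as in $C_{14}$). The trace-norm scheme $\int_{\mathbb R}\|A(\cdot,s)\|\,\|B(s,\cdot)\|\,ds$ and the source of decay for $s\to+\infty$ (the $L^2$-smallness of $K_0(\kappa|x-\Gamma(s)|)-K_0(\kappa|x-\Gamma^{(+)}(s)|)$ in terms of $|\nu_+(s)|$, integrable by Proposition \ref{curvelimits}) are also the paper's.

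There is, however, one genuine flaw: your mechanism for the decay as $s\to-\infty$. You propose to use that $K\psi=\varphi\otimes\varphi_0$ with $\hat\varphi\in C_0^\infty((0,\alpha/2))$, hence $\varphi\in{\mathcal S}({\mathbb R})$ and $\theta(x_1)K\psi$ rapidly decaying in $x_1$. That holds only for $\psi$ in the special dense set $M$, which is what the \emph{bounded} remainder $C_2$ needs in Theorem \ref{Kuroda-Birman}; it is not available for the trace-class estimate, where by Proposition 3.6.5 of \cite{BSimon} one must control $\sum_n|(\varphi_n,C_1\psi_n)|$ over \emph{arbitrary} orthonormal systems. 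For a general $\psi\in L^2$ the function $\hat\varphi=\chi_{(0,\alpha/2)}{\mathcal F}_1(\varphi_0,\psi)_2$ is merely $L^2$ with support in $(0,\alpha/2)$, so $\varphi$ has no pointwise decay, and the momentum projector $K_1={\mathcal F}_1^{-1}\chi_{(0,\alpha/2)}{\mathcal F}_1$ can only be factored out as a bounded operator — which is exactly what the paper does, keeping only $\theta(x_1)K_2$ with $K_2=\varphi_0(\varphi_0,\cdot)_2$ inside the trace-norm estimate. The decay of $\|B(s,\cdot)\|$ as $s\to-\infty$ must then come from the geometry alone: $\theta(y_1)$ and the exponential transverse profile $\varphi_0$ confine the mass to a half-strip around the positive $x_1$-axis, and $\Gamma(s)$, $\Gamma^{(+)}(s)$ recede from that half-strip linearly in $|s|$ precisely because $v_-\neq -v_+$; the paper needs two separate cases ($v_-\neq v_+$, exploiting the growth of $|\Gamma_2(s)|$ against $\varphi_0(z_2)$, and $v_-=v_+$, exploiting the cut $y_1\geq 0$). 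Your phrase ``$\Gamma(s)$ escapes to infinity by (\ref{rhodef})'' is not by itself sufficient, since the half-strip is unbounded in the $+x_1$ direction. A second, smaller point: in the summand $\Delta\, r\, R_{m\,dx}\theta(x_1)K_2$ the mixed region $s<0<t$ has no decay from either outer factor, so the convolution kernel ${\mathcal K}$ of (\ref{kernelK}) must lie in $L^1({\mathbb R},(1+|u|)\,du)$, not merely in $L^1$; the paper arranges this explicitly and your bookkeeping should too.
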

\begin{proof}
Remember that by (\ref{spectproj})
$$
K=K_2 K_1 \quad,\quad K_1={\mathcal F}_1^{-1}\chi_{(0,\alpha/2)}{\mathcal F}_1 \quad,\quad K_2=\varphi_0 (\varphi_0,\cdot)_2 \quad.
$$
With the help of relations (\ref{full_resolvent}), (\ref{reduced_resolvent}) and denoting quantities related to the operator $H^{(+)}$ by superscript $(+)$
while leaving those related to $H$ without a superscript,
\begin{eqnarray*}
C_1&=&\alpha \sum_{j=1}^5 C_{1j}\, K_1 \quad,
\\
C_{11}&=&(R_{dx\,m}^{(+)}-R_{dx\,m})R_{m\,dx} \theta(x_1) K_2 \quad,
\\
C_{12}&=&(R_{dx\,m}^{(+)}-R_{dx\,m}) r R_{m\,dx} \theta(x_1) K_2 \quad,
\\
C_{13}&=&R_{dx\,m}^{(+)}(R_{m\,dx}^{(+)}-R_{m\,dx}) \theta(x_1) K_2 \quad,
\\
C_{14}&=&R_{dx\,m}^{(+)} (r^{(+)} - r) R_{m\,dx} \theta(x_1) K_2 \quad,
\\
C_{15}&=&R_{dx\,m}^{(+)} r^{(+)} (R_{m\,dx}^{(+)}-R_{m\,dx}) \theta(x_1) K_2 \quad.
\end{eqnarray*}
As $K_1$ is a bounded operator, it is sufficient to show that each term $C_{1j}$ is trace class separately using the above described scheme. We use the properties of Macdonald function and the explicit form of $K_2$, including that of eigenfunction $\varphi_0$. Again, $c$ denotes a constant whose value may change from line to line.
\\
\\
{\em Operator $C_{11}$}
\\
Put
\begin{eqnarray*}
A(x,s)=(2\pi)^{-1} \left(K_0(\kappa|x-\Gamma(s)|) - K_0(\kappa|x-\Gamma^{(+)}(s)|) \right) \quad,
\\
B(s,y)=(2\pi)^{-1} \int_{\mathbb R} K_0(\kappa|\Gamma(s)-(y_1,z_2)|) \theta(y_1) \varphi_0(z_2)\, dz_2 \varphi_0(y_2) \quad.
\end{eqnarray*}

For any $s$ we can estimate
\begin{equation}
\label{C111}
\int_{{\mathbb R}^2} |A(x,s)|^2 \, dx \leq 2 (2\pi)^{-2} \int_{{\mathbb R}^2} K_0(\kappa|x|)^2\,dx < \infty.
\end{equation}
Further,
$$
|B(s,y)| \leq (2\pi)^{-1} \left[\int_{\mathbb R} K_0(\kappa |\Gamma(s)-(y_1,z_2)|)^2 \, dz_2 \right]^{1/2} \varphi_0(y_2)
$$
and
\begin{equation}
\label{C112}
\int_{{\mathbb R}^2} |B(s,y)|^2\,dy \leq (2\pi)^{-2} \int_{{\mathbb R}^2} K_0(\kappa|y|)^2\,dy < \infty
\end{equation}
using normalization of $\varphi_0$.

Consider now the values of $s>0$. Then (see (\ref{nudef}) and (\ref{K0bound}))
\begin{eqnarray*}
|A(x,s)| = (2\pi)^{-1} \left|\kappa \int^{|x-\Gamma^{(+)}(s) - \nu_+(s)|}_{|x-\Gamma^{(+)}(s)|} K_1(\kappa t) \, dt \right|
\\
\leq
c \kappa \left| \int^{|x-\Gamma^{(+)}(s) - \nu_+(s)|}_{|x-\Gamma^{(+)}(s)|} \frac{1}{\kappa t} e^{-\frac{3}{4} \kappa t} \, dt \right|
\\
\leq
c e^{-\frac{3}{4} \kappa \min(|x-\Gamma^{(+)}(s)|,|x-\Gamma^{(+)}(s) - \nu_+(s)|)} \left|\log \frac{|x-\Gamma^{(+)}(s) - \nu_+(s)|}{|x-\Gamma^{(+)}(s)|} \right|
\end{eqnarray*}
and
$$
\int_{{\mathbb R}^2} |A(x,s)|^2 dx \leq c \int_{{\mathbb R}^2} e^{\frac{3}{2}\kappa |\nu_+(s)| - \frac{3}{2}\kappa|x|} \left|\log\frac{|x-\nu_+(s)|}{|x|} \right|^2 \, dx
\quad.
$$
Remembering that $\nu_+(s)$ is bounded for $s>0$,
\begin{eqnarray*}
\int_{|x|\leq 2|\nu_+(s)|} |A(x,s)|^2 \,dx
\\
\leq c \int_{|x|\leq 2|\nu_+(s)|} |x| \left( (\log||x|-|\nu_+(s)||)^2 + (\log||x|+|\nu_+(s)||)^2
\right.
\\
\left.
 + (\log|x|)^2 \right) \, d|x| \leq c |\nu_+(s)| \quad,
 \\
 \int_{|x|>2|\nu_+(s)|} |A(x,s)|^2 \,dx \leq c \int_{|x|>2|\nu_+(s)|} |x| e^{-\frac{3}{2}\kappa |x|} \frac{|\nu_+(s)|^2}{|x|^2} \, d|x| \leq c |\nu_+(s)|
\end{eqnarray*}
so
$$
\int_{{\mathbb R}^2} |A(x,s)|^2 dx \leq c |\nu_+(s)| \quad.
$$
Now
\begin{equation}
\label{C113}
\int_0^{+\infty} \|A(\cdot,s)\| \, ds < \infty
\end{equation}
under our assumptions according to Proposition \ref{curvelimits}.

For $s<0$ let us better estimate the term $B$. Assume first $v_- \not= v_+$ (i.e. $v_{-2}\not=0$ for our choice (\ref{specialv+})).
Let us estimate with the help of (\ref{K0bound})
\begin{eqnarray*}
\int_{\mathbb R} K_0(\kappa |y|)^2 \,dy_1 \leq c \int_{\mathbb R} (1+ |\log\kappa |y_1||^2 + \kappa^2 (y_1^2 + y_2^2)) e^{-\sqrt{2}\kappa (|y_1|+|y_2|)} \,dy_1
\\
\leq c (1+\kappa ^2 y_2^2) e^{-\sqrt{2}\kappa |y_2|} \quad.
\end{eqnarray*}
Then using Schwarz inequality and that $\varphi_0 \in L^1({\mathbb R})$ for the integral over $z_2$
\begin{eqnarray*}
|B(s,y)|^2 \leq c\, \theta(y_1) |\varphi_0(y_2)|^2 \int_{\mathbb R} |K_0(\kappa | \Gamma(s) - (y_1,z_2)|)|^2 \varphi_0(z_2) \, dz_2 \quad,
\\
\int_{{\mathbb R}^2} |B(s,y)|^2 \,dy \leq c \int_{\mathbb R} (1+\kappa^2 |\Gamma_2(s)-z_2|^2) e^{-\sqrt{2}\kappa |\Gamma_2(s)-z_2|} \, \varphi_0(z_2)\,dz_2
\\
\leq c \int_{\mathbb R} e^{-\kappa |\Gamma_2(s)-z_2| -\frac{\alpha}{2}|z_2|} \,dz_2 \leq c e^{-\min(\kappa,\alpha) |\Gamma_2(s)|/2}
\leq c e^{\min(\kappa,\alpha) |v_{-2}| s/2}
\end{eqnarray*}
as $|\Gamma_2(s)|\geq -|v_{-2}|s -|a_-|-\|\nu_-\|_\infty$ for $s<0$.
Now
\begin{equation}
\label{C114}
\int_{-\infty}^0 \|B(s,\cdot)\| \,ds < \infty \quad.
\end{equation}

It remains to treat the case $v_-=v_+=(1,0)$, $s<0$. Here,
\begin{eqnarray*}
|\Gamma(s)-(y_1,z_2)|=|(s-y_1,-z_2)+a_-+\nu_-(s)| \geq |s-y_1 + a_{-1} + \nu_{-1}(s)| ,
\\
|B(s,y)|\leq (2\pi)^{-1} \|\varphi_0\|_{L^1} \, \theta(y_1) \, \varphi_0(y_2) \, K_0(\kappa |s-y_1 + a_{-1} + \nu_{-1}(s)|) ,
\end{eqnarray*}
\begin{eqnarray*}
\int_{{\mathbb R}^2} |B(s,y)|^2\, dy \leq c \int_0^{+\infty} K_0(\kappa |s-y_1 + a_{-1} + \nu_{-1}(s)|)^2 \, dy_1
\\
\leq c \int_{-s - |a_-| - \|\nu_-\|_\infty}^{+\infty} K_0(\kappa |y_1|)^2 \, dy_1 ,
\end{eqnarray*}
\begin{eqnarray*}
\int_{-\infty}^0 \|B(s,\cdot)\| \, ds
\leq
c \int_0^{|a_-| + \|\nu_-\|_\infty} \left[\int_{\mathbb R} K_0(\kappa |y_1|)^2\, dy_1 \right]^{1/2} \, ds
\\
+ c \int_0^{+\infty}\left[\int_s^{+\infty} K_0(\kappa |y_1|)^2\, dy_1 \right]^{1/2} \, ds
\\
\leq
c + c \int_0^{+\infty} \left[\int_s^{+\infty} (1+ (\log \kappa y_1)^2) e^{-2 \kappa y_1} \right]^{1/2} \, ds
\\
\leq c + c \int_0^{+\infty} \left[e^{-\kappa s} \int_s^{+\infty} (1+ (\log \kappa y_1)^2) e^{-\kappa y_1}  \right]^{1/2} \, ds < \infty ,
\end{eqnarray*}
i.e., (\ref{C114}) holds for every $v_- \not= - v_+$. Combining estimates (\ref{C111})-(\ref{C114}), the integral (\ref{ABnorms}) is finite and $C_{11} \in {\mathcal J}_1$ is proved.
\\
\\
{\em Operator $C_{12}$}
\\
This is an integral operator with the kernel
$$
\int_{{\mathbb R}^2} A(x,s) {\mathcal R}(s,t) B(t,y) \, ds\, dt
$$
where $A$ and $B$ are the same as for $C_{11}$. Similarly as above,
$$
\sum_n |(\varphi_n, C_{12} \psi_n)| \leq \int_{{\mathbb R}^2} \|A(\cdot,s)\| \, {\mathcal K}(s-t) \, \|B(t,\cdot)\| \, ds \, dt
$$
for every orthonormal sets $\{\varphi_n\}$, $\{\psi_n\}$ with ${\mathcal K}$ introduced in (\ref{kernelK}).
Let us divide the integration range into four parts,
${\mathbb R}^2 = ({\mathbb R}_+\times {\mathbb R}_+) \cup ({\mathbb R}_+\times {\mathbb R}_-) \cup ({\mathbb R}_-\times {\mathbb R}_+) \cup ({\mathbb R}_-\times {\mathbb R}_-)$.
In view of (\ref{C111})-(\ref{C114}) and ${\mathcal K} \in L^1({\mathbb R})$, integrals over $({\mathbb R}_+\times {\mathbb R}_+) \cup ({\mathbb R}_+\times {\mathbb R}_-) \cup ({\mathbb R}_-\times {\mathbb R}_-)$
are finite. Further,
\begin{eqnarray*}
\int_{s<0\, , \, t>0} \|A(\cdot,s)\| \, |{\mathcal K}(s-t)| \, \|B(t,\cdot)\| \, ds \, dt
\\
\leq
c \int_{s<0\, , \, t>0} |{\mathcal K}(s-t)| \, ds \, dt
= \int_{-\infty}^0 |u| |{\mathcal K}(u)| \, du <\infty
\end{eqnarray*}
as ${\mathcal K} \in L^1({\mathbb R},(1+|u|)\, du)$ (see below (\ref{kernelK}) above). The relation $C_{12}\in {\mathcal J}_1$ is proved now.
\\
\\
{\em Operator $C_{13}$}
\\
To prove that $C_{13}$ is trace class we take
$$
A(x,s)=(2\pi)^{-1} K_0(\kappa |x-\Gamma^{(+)}(s)|) \quad,\quad \|A(\cdot,s)\| <c < \infty \quad,
$$
\begin{eqnarray*}
B(s,y)=(2\pi)^{-1} \theta(y_1) \overline{\varphi_0(y_2)} \cdot
\\
\cdot \int_{\mathbb R} \left(K_0(\kappa |\Gamma^{(+)}(s)-(y_1,z_2)|) - K_0(\kappa |\Gamma(s)-(y_1,z_2)|)\right) \varphi_0(z_2) \,dz_2
\; .
\end{eqnarray*}
Let us start with the estimate (see (\ref{K0bound}))
\begin{eqnarray*}
| K_0(\kappa |x|) - K_0(\kappa |y|) | = \left| \kappa \int_{|y|}^{|x|} K_1(\kappa t) \, dt\right| \leq c \kappa \left|\int_{|y|}^{|x|} (\kappa t)^{-1} e^{-\frac{3}{4} \kappa t} \, dt\right|
\\
\leq c e^{-\frac{3}{4} \kappa \min(|x|,|y|)} \left| \log \frac{|x|}{|y|} \right|
\end{eqnarray*}
and further
\begin{eqnarray*}
\int_{{\mathbb R}^2} |B(s,y)|^2 \, dy
\\
\leq c \int_{{\mathbb R}^2} \theta(y_1) |\varphi_0(y_2)|^2 e^{-\frac{3}{2} \kappa \min(|\Gamma^{(+)}(s)-y|, |\Gamma(s)-y)|) } \left| \log \frac{|\Gamma^{(+)}(s)-y|}{|\Gamma(s)-y)|}\right|^2 \, dy
\\
\leq c \int_{{\mathbb R}^2} e^{-\frac{3}{2} \kappa \min(|y|,|y-\nu_+(s)|)} \left| \log \frac{|y-\nu_+(s)|}{|y|} \right|^2 \, dy
\end{eqnarray*}

Consider now the case $s>0$ where $\|\nu_+\|_{L^\infty((0,+\infty))} < \infty$.
Realizing that
$$
|\log|y-\nu_+(s)||^2 \leq |\log||y|+|\nu_+(s)|||^2 + |\log||y|-|\nu_+(s)|||^2
$$
we estimate
$$
\int_{|y|\leq 2 |\nu_+(s)|} \left| \log \frac{|y-\nu_+(s)|}{|y|}\right|^2 \, dy \leq c |\nu_+(s)| \quad.
$$
For $|y|>2 |\nu_+(s)|$,
$$
|y|\left|\log \frac{|y-\nu_+(s)|}{|y|} \right|^2 \leq c |\nu_+(s)| \quad.
$$
Finally, we obtain
$$
\|B(s,\cdot)\| \leq c \sqrt{|\nu_+(s)|}
$$
which is integrable in ${\mathbb R}_+$ due to the Proposition \ref{curvelimits}.

For $s<0$, we apply the considerations used for $C_{11}$ to the both parts depending on $\Gamma$ and $\Gamma^{(+)}$ and we obtain
$$
\int_{\mathbb R} \|B(s, \cdot)\| \, ds < \infty
$$
which leads to the finiteness of (\ref{ABnorms}) and so proofs $C_{13}\in{\mathcal J}_1$.
\\
\\
{\em Operator $C_{14}$}
\\
We consider $\kappa$ large enough for which the identity
$$
r^{(+)} - r = \alpha (I+r) (R_{mm}^{(+)} - R_{mm}) (I+r^{(+)}) \quad.
$$
holds. Then we can write
\begin{eqnarray*}
C_{14}&=& \alpha (U_1+U_2+U_3+U_4) \theta(x_1) K_2 \quad,
\\
U_1&=& R_{dx\, m}^{(+)} (R_{mm}^{(+)} - R_{mm}) R_{m\,dx} \quad,
\\
U_2&=& R_{dx\, m}^{(+)} r (R_{mm}^{(+)} - R_{mm}) R_{m\,dx} \quad,
\\
U_3&=& R_{dx\, m}^{(+)} (R_{mm}^{(+)} - R_{mm}) r^{(+)} R_{m\,dx} \quad,
\\
U_4&=& R_{dx\, m}^{(+)} r (R_{mm}^{(+)} - R_{mm}) r^{(+)} R_{m\,dx} \quad.
\end{eqnarray*}
It is sufficient to show that the operators $U_1,\dots,U_4$ are trace class. They are of the form $U_j = R_{dx\, m}^{(+)} \hat{G}_j R_{m\,dx}$ where
$\hat{G}_j$ are integral operators with the kernels $G_j(s,t)$ in $L^2({\mathbb R})$ for $j=1,\dots,4$.
Let us denote for a while as $A(x,s)$ and $B(s,x)$ the kernels of operators $R_{dx\, m}^{(+)}$ and $R_{m\,dx}$ given in (\ref{Rxm}) and (\ref{Rmx}). As above, it is sufficient to show that
$$
\int_{{\mathbb R}^2} \|A(\cdot,s)\| |G_j(s,t)| \|B(t,\cdot)\| \, ds \, dt < \infty \quad.
$$
As $\|A(\cdot,s)\|$, $\|B(s,\cdot)\|$ are $s$-independent constants, it is sufficient to show that
$$
\int_{{\mathbb R}^2} |G_j(s,t)|  \, ds \, dt < \infty \quad.
$$
Function $G_1=-g$ (\ref{Rmm-R0kernel}) and using (\ref{Rmm-R0_HS})
\begin{eqnarray*}
\int_{{\mathbb R}^2} |G_1(s,t)| \,ds\,dt
\\
\leq \left[\int_{{\mathbb R}^2} w(s)^{-\delta} w(t)^{-\delta}\,ds\,dt\right]^{1/2} \left[\int_{{\mathbb R}^2} w(s)^\delta g(s,t) w(t)^\delta \,ds\,dt \right]^{1/2} < \infty
\end{eqnarray*}
where $\delta>3$ is a number from Assumption 1 (at this point $\delta>1$ is sufficient of course).
$$
G_2(s,t) = \int_{\mathbb R} {\mathcal R}(s,u) G_1(u,t) \,du
$$
and
$$
\int_{{\mathbb R}^2} |G_2(s,t)| \,ds\,dt \leq \int_{{\mathbb R}^3} |{\mathcal K}(s-u)| |G_1(u,t)| \,ds\,du\,dt < \infty
$$
as ${\mathcal K} \in L^1({\mathbb R})$.
Similarly are integrable
\begin{eqnarray*}
G_3(s,t)&=& \int_{\mathbb R} G_1(s,u) {\mathcal R}^{(+)}(u,t) \,du \quad,
\\
G_4(s,t)&=& \int_{{\mathbb R}^2} {\mathcal R}(s,u) G_1(u,q) {\mathcal R}^{(+)}(q,t) \,du \,dq
\end{eqnarray*}
and $C_{14}\in {\mathcal J}_1$ is proved.
\\
\\
{\em Operator $C_{15}$}
\\
Here we need to prove that
$$
\int_{{\mathbb R}^2}  \|A(\cdot,s)\| |{\mathcal R}^{(+)}(s,t)| \|B(t,\cdot)\| \,ds\,dt < \infty
$$
where $A$ and $B$ are the same as in the case of $C_{13}$. The estimates there together with (\ref{kernelK}) give the required relation.
This completes the proof of the Lemma.
\end{proof}
We formulate the result proved in this section.
\begin{theorem}
\label{woexistence}
Under the Assumptions 1-4, the wave operators $\Omega^{(+)}_{in}$, $\Omega^{(+)}_{out}$, $\Omega^{(-)}_{in}$, $\Omega^{(-)}_{out}$ defined in (\ref{wo_first})-(\ref{wo_last}) exist.
\end{theorem}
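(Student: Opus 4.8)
The strategy is to prove the existence of $\Omega^{(+)}_{out}$ by feeding the lemmas of this section into the generalized Kuroda--Birman theorem of Appendix~B (Theorem~\ref{Kuroda-Birman}), and then to obtain $\Omega^{(-)}_{out}$, $\Omega^{(+)}_{in}$, $\Omega^{(-)}_{in}$ from it by the reparametrization and time-reversal arguments indicated after (\ref{wo_last}). First I would fix $\kappa>0$ large enough that $-\kappa^2$ lies in the resolvent sets of both $H$ and $H^{(+)}$, that $\|\alpha R_{mm}(i\kappa)\|<1$, and that the convergent-series and trace-class estimates above hold, and apply Theorem~\ref{Kuroda-Birman} with $A=H$, $B=H^{(+)}$, identification operator $K=P^{(+)}_>J^{(+)}$ from (\ref{spectproj}), and the dense set $M=\{M_1\times M_2\}_{lin}$; the lemma proved above gives $M\subset\mathcal{M}(H^{(+)})$ with $\overline{M}=L^2(\mathbb{R}^2)$.

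Next I would split the resolvent difference entering that theorem as $T=C_1+C_2$ with $C_1$, $C_2$ as defined above via the cut-offs $\theta(\pm x_1)$. For $\psi\in M$ one has $e^{-iH^{(+)}t}K\psi=(e^{-iH_1t}\varphi)\otimes e^{i\alpha^2 t/4}\varphi_0$ with $\hat\varphi\in C_0^\infty((0,\alpha/2))$; since the resolvent difference occurring in $C_2$ is bounded and, by the lemma on $e^{-iH_1t}\varphi$, the quantity $\|e^{-iH_1t}\varphi\|_{L^2((-\infty,0))}$ is integrable on $(0,+\infty)$ and tends to $0$ as $t\to+\infty$, the condition (\ref{C2decrease2}) on $C_2$ is met, while the last lemma of the section supplies $C_1\in\mathcal{J}_1$ for $\kappa$ large. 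All hypotheses of Theorem~\ref{Kuroda-Birman} being verified, the strong limit $\lim_{t\to+\infty}e^{iHt}Ke^{-iH^{(+)}t}$ exists, and since $P^{(+)}_>$, $e^{-iH^{(+)}t}$ and $J^{(+)}$ mutually commute this strong limit is precisely $\Omega^{(+)}_{out}$.

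Then I would deduce the remaining three wave operators with no new estimates. Applying the statement just proved to the curve $\tilde\Gamma(s)=\Gamma(-s)$ --- which again satisfies Assumptions~1--4, with the roles of $+\infty$ and $-\infty$ (and of $v_+$, $v_-$) interchanged, and for which the objects $\tilde H$, $\tilde H^{(+)}$, $\tilde J^{(+)}$, $\tilde P^{(+)}_>$ built from $\tilde\Gamma$ coincide with $H$, $H^{(-)}$, $J^{(-)}$, $P^{(-)}_<$ --- yields the existence of $\Omega^{(-)}_{out}$ for $\Gamma$. Finally I would use that $H$, $H^{(+)}$, $H^{(-)}$ all commute with the complex conjugation $C$ on $L^2(\mathbb{R}^2)$, so that $Ce^{iHt}C=e^{-iHt}$ and likewise for $H^{(\pm)}$, while $C$ exchanges $P^{(\pm)}_>$ with $P^{(\pm)}_<$ (because $C(-iv_\pm\cdot\nabla)C=iv_\pm\cdot\nabla$) and commutes with $J^{(\pm)}$; conjugating by $C$ the strong limits defining $\Omega^{(+)}_{out}$ and $\Omega^{(-)}_{out}$ and substituting $t\mapsto-t$ turns them into the strong limits defining $\Omega^{(+)}_{in}$ and $\Omega^{(-)}_{in}$, which therefore exist as well.

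The real difficulty is not in this assembly but in the lemmas it rests on: the trace-class property of $C_1$, whose delicate sub-case is $s<0$ with $v_-=v_+$, handled by the bound $|\Gamma(s)-(y_1,z_2)|\geq|s-y_1+a_{-1}+\nu_{-1}(s)|$ and the slow decay of $\nu_\pm$ from Proposition~\ref{curvelimits}; and the $L^1$-in-time decay of $e^{-iH_1t}\varphi$ on $(-\infty,0)$, obtained by integrating twice by parts in the momentum variable. What remains to check at the level of the theorem --- that these inputs match the hypotheses of the generalized Kuroda--Birman theorem, and that Assumptions~1--4 together with the reality of $H$, $H^{(\pm)}$ survive the maps $s\mapsto-s$ and complex conjugation --- is routine.
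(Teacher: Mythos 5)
Your proposal is correct and follows essentially the same route as the paper: apply the generalized Kuroda--Birman theorem with $A=H$, $B=H^{(+)}$, $K=P^{(+)}_>J^{(+)}$ and the dense set $M\subset\mathcal{M}(H^{(+)})$, split the resolvent difference into $C_1+C_2$ via the cut-offs $\theta(\pm x_1)$, verify (\ref{C2decrease2}) through the dispersive decay of $e^{-iH_1t}\varphi$ on $(-\infty,0)$, invoke the trace-class lemma for $C_1$, and obtain the remaining three wave operators by the reparametrization $\tilde\Gamma(s)=\Gamma(-s)$ and time-reversal. Your explicit complex-conjugation argument merely spells out the time-reversal step that the paper leaves implicit, so no substantive difference remains.
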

\section{Conclusions}
\label{conclusionsection}
Our results are contained in Theorems \ref{spectralac} and \ref{woexistence} above. For the quantum mechanical particles transversally bounded to the asymptotically flat curve
by the attractive contact $\delta$-interaction with the coupling constant $\alpha$,
with energies in $(-\alpha^2/4,0)$ (i.e. in the range accessible due to the coupling only) we show that
\begin{enumerate}
\item[(1)]
the spectrum in this range is absolutely continuous with possible embedded eigenvalues, discrete in every compact subinterval of $(-\alpha^2/4,0)$;
\item[(2)]
the wave operators defined above exist.
\end{enumerate}
Our proofs are done for $C^3$-smooth non-intersecting curve with rather severe asymptotic conditions for approaching to the mutually diverging asymptotic half-lines.

We should stress that we proved the existence of
$$\Omega^{(\pm)}_{in,out}:=\Omega_{in,out}(H,H^{(\pm)}; P^{(\pm)}_{\gtrless}J^{(\pm)})$$
only. We know nothing on the wave operators of the type
$$
\Omega_{in,out}(H^{(\pm)},H) \quad.
$$
This means that our wave operators map each scattering state of the "free" Hamiltonians $H^{(\pm)}$ (incoming or outgoing along the asymptotic half-lines)
to the one of $H$ but we did not prove that all scattering states of $H$ are covered and
the opposite mappings exist. The reason is the used simple explicit form of the projectors $P^{(\pm)}_{\gtrless}J^{(\pm)}$ while the form of spectral projector for $H$ is not known.
However, our result on the absence of singular continuous spectrum shows that if the wave operators would be complete they would be also asymptotically complete.

The other question desiring further study is the existence of the embedded eigenvalues and especially their accumulation near the points $-\alpha^2/4$ and $0$ which one perhaps would not expect
but which is not excluded by our proof.

\section{Appendices}
\appendix
\section{Assumptions of Exner and Ichinose}
To verify the validity of the assumptions of \cite{EI2001} under our assumptions, it is sufficient to check (a2) of \cite{EI2001}, i.e. to show the existence
of $d>0$, $\mu>0$, $\omega \in (0,1)$ such that for $\omega < \frac{s}{s'} < \omega^{-1}$
\begin{equation}
\label{a2_below}
\frac{|\Gamma(s)-\Gamma(s')|}{|s-s'|} \geq 1 - \frac{d}{\sqrt{1+|s+s'|^{2\mu}}} \quad.
\end{equation}
Here the left hand side extends to $1$ for $s=s'$ (see the proof of Proposition \ref{assump_equivalence}).
We shall assume that $s$ and $s'$ have the same sign as only such ones enter the assumption. Due to the symmetry in $s$ and $s'$ we may also assume
that $|s|\leq |s'|$. Let $R_1$ and $\kappa \in(1,\delta + \frac{1}{2})$ be the numbers from Proposition \ref{curvelimits}, we assume $R_1\geq 1$ enlarging its value
if needed. Choose arbitrary $\omega\in (0,1)$, $\mu=\kappa$,
$R_2\geq \omega^{-1}R_1 > R_1$.

Let us first assume $|s| \leq |s'| \leq R_2$.
The left hand side of (\ref{a2_below}) has a lower bound $\rho\in (0,1]$ by the relation (\ref{rhodef}) from Assumption 2.
The right hand side of (\ref{a2_below}) has a maximum
$$
1 - \frac{d}{\sqrt{1+2^{2\mu}R_2^{2\mu}}}
$$
in the considered range of $s$, $s'$ and (\ref{a2_below}) will be satisfied for
\begin{equation}
\label{d1estim}
d\geq (1-\rho) \sqrt{1+2^{2\mu}R_2^{2\mu}} \quad.
\end{equation}

Let us further assume $|s'|\geq R_2 > R_1$ , then also $|s|\geq \omega |s'| \geq R_1$. Using relations (\ref{nudef}) and (\ref{nufall}) for $s,s'\gtrless 0$
respectively,
\begin{eqnarray*}
\nonumber
\frac{|\Gamma(s)-\Gamma(s')|}{|s-s'|} = \frac{| v_\pm (s-s') + \nu_\pm(s)-\nu_\pm(s') |}{|s-s'|}
\\
\geq 1 - \sup_{s \lessgtr \xi \lessgtr s'} |\nu_\pm'(\xi)|
\geq 1 - |s|^{-\mu} \quad.
\end{eqnarray*}
On the other hand,
\begin{eqnarray*}
\frac{d}{\sqrt{1+|s+s'|^{2\mu}}} = \frac{d}{\sqrt{1+(|s|+|s'|)^{2\mu}}} \geq \frac{d}{\sqrt{1+2^{2\mu}|s'|^{2\mu}}}
\\
\geq \frac{d}{\sqrt{1+2^{2\mu}\omega^{-2\mu}|s|^{2\mu}}}
\geq  2^{-\mu-\frac{1}{2}} d \omega^{\mu} |s|^{-\mu}
\end{eqnarray*}
where the relation $2\omega^{-1}|s| \geq R_1 \geq 1$ was used. Now
\begin{equation}
\label{d2estim}
d\geq 2^{\mu+\frac{1}{2}}\omega^{-\mu}
\end{equation}
is sufficient for the validity of (\ref{a2_below}).
Choosing $d$ satisfying both estimates (\ref{d1estim}) and (\ref{d2estim}) the assumption (a2) of \cite{EI2001} is verified.
\section{Generalized Pearson and Kuroda-Birman theorems}
We need a generalization of the Kuroda-Birman theorem (Theorem XI.9 of \cite{RSIV}, vol. 3, or more general Proposition 4 in Chapter 16 of \cite{Baumg_Woll}).
We sketch the proof closely following that of \cite{RSIV}.

Let us remind the following definition: Let B be a self-adjoint operator in a Hilbert space ${\mathcal H}$ and $E_\lambda$ its spectral projectors. We denote as ${\mathcal M}(B)$ the set of all
$\varphi \in {\mathcal H}$ such that $d (\varphi, E_\lambda \varphi) = |f(\lambda)|^2 d \lambda$ with $f \in L^\infty({\mathbb R})$.

We start with the generalization of Pearson theorem (Theorem XI.7 of \cite{RSIV}).
\begin{theorem}
\label{Pearson}
Let $A$, $B$ be self adjoint operators in a Hilbert space ${\mathcal H}$, $J$ a bounded operator in ${\mathcal H}$,
$$
C=AJ-JB
$$
in the sense that for every $\varphi\in{\mathcal D}(A)$, $\psi\in{\mathcal D}(B)$
\begin{equation}
\label{C_form_sense}
(\varphi,C\psi)=(A\varphi,J\psi)-(\varphi,J B \psi) \quad.
\end{equation}
Let $C=C_1+C_2$ where $C_1 \in {\mathcal J}_1$ is a trace class operator and $C_2$ is bounded. Let there exists a dense subset $M$ of ${\mathcal M}(B)$ such that
\begin{equation}
\label{C2decrease}
\lim_{t\to + \infty} C_2 e^{-iBt} \varphi =0 \quad , \quad \int_0^{+\infty} \| C_2 e^{-iBt} \varphi \| dt <\infty
\end{equation}
for every $\varphi \in M$.
Then there exists
$$
\Omega_{out} (A,B;J) = s-\lim_{t \to + \infty} e^{iAt} J e^{-iBt} P_{ac}(B) \quad.
$$
\end{theorem}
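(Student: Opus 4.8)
The plan is to follow the proof of Pearson's theorem (Theorem XI.7 of \cite{RSIV}) essentially verbatim, isolating the single new ingredient --- the additive bounded remainder $C_2$ --- which is disposed of by an elementary Cook-type estimate using hypothesis (\ref{C2decrease}).

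First I would record the Duhamel identity. Write $W(t)=e^{iAt}Je^{-iBt}$, so that $\|W(t)\|\le\|J\|$ for every $t$. For $\varphi\in{\mathcal D}(B)$ and $\psi\in{\mathcal D}(A)$ one differentiates $t\mapsto(e^{-iAt}\psi,Je^{-iBt}\varphi)$ and applies the form relation (\ref{C_form_sense}) with $e^{-iAt}\psi$ in place of $\varphi$ and $e^{-iBt}\varphi$ in place of $\psi$, obtaining $\frac{d}{dt}(\psi,W(t)\varphi)=i(\psi,e^{iAt}Ce^{-iBt}\varphi)$; since ${\mathcal D}(A)$ is dense and the right-hand side is bounded and continuous in $t$, integration gives
\begin{equation*}
W(t)\varphi-W(\tau)\varphi=i\int_\tau^t e^{iAt'}Ce^{-iBt'}\varphi\,dt'
\end{equation*}
as a Bochner integral. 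Because $C$ is bounded and $\|W(t)\|\le\|J\|$, both sides are continuous in $\varphi$ for fixed $\tau,t$, so the identity extends from ${\mathcal D}(B)$ to all $\varphi\in{\mathcal H}$ --- it is cleaner to get it there before specializing. It then suffices to prove $W(t)\varphi$ is Cauchy as $t\to+\infty$ for $\varphi$ in a dense subset of $P_{ac}(B){\mathcal H}$; the uniform bound $\|W(t)\|\le\|J\|$ extends the limit to all of $P_{ac}(B){\mathcal H}$ (and it is $0$ on the orthogonal complement). As ${\mathcal M}(B)$ is dense in $P_{ac}(B){\mathcal H}$ and $M$ is dense in ${\mathcal M}(B)$, it is enough to take $\varphi\in M$.

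Split the integral via $C=C_1+C_2$. For the $C_2$-part I would simply estimate
\begin{equation*}
\left\|\int_\tau^t e^{iAt'}C_2e^{-iBt'}\varphi\,dt'\right\|\le\int_\tau^t\|C_2e^{-iBt'}\varphi\|\,dt',
\end{equation*}
which tends to $0$ as $\tau,t\to+\infty$ by the integrability in (\ref{C2decrease}); this is the only point where $C_2$ and the distinguished set $M$ are used. For the $C_1$-part one reproduces Pearson's argument: writing $C_1=\sum_n\lambda_n(\psi_n,\cdot)\phi_n$ with $\|\psi_n\|=\|\phi_n\|=1$ and $\sum_n|\lambda_n|=\|C_1\|_1<\infty$, and using that for $\varphi\in{\mathcal M}(B)$ with $d(\varphi,E^B_\lambda\varphi)=|f(\lambda)|^2d\lambda$ the measure $d(\psi_n,E^B_\lambda\varphi)$ is absolutely continuous with a density $h_n$ satisfying $\|h_n\|_{L^2}\le\|f\|_\infty$ (Cauchy--Schwarz for spectral measures together with $\|\psi_n\|=1$), Plancherel gives $t'\mapsto(\psi_n,e^{-iBt'}\varphi)\in L^2({\mathbb R})$ with $L^2$-norm controlled by $\|f\|_\infty$. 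A further Cauchy--Schwarz step in $t'$ absorbs the unitary factor $e^{iAt'}\phi_n$, and summation over $n$ against $\sum_n|\lambda_n|<\infty$ (with dominated convergence) yields $\sup_{t>\tau}\|\int_\tau^t e^{iAt'}C_1e^{-iBt'}\varphi\,dt'\|\to0$ as $\tau\to+\infty$, the bound depending on $\varphi$ only through $\|f\|_\infty$. Combining the two estimates shows $W(t)\varphi$ is Cauchy, hence convergent, and $\Omega_{out}(A,B;J)$ exists.

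The main obstacle is the $C_1$-estimate, i.e. carrying out Pearson's oscillatory/$L^2$ bookkeeping so that the bounds are genuinely uniform over $t>\tau$ and summable over the singular-value expansion of $C_1$; the virtue of the present formulation is that this part is quoted unchanged from \cite{RSIV}, while the additional term $C_2$ costs only the one-line Cook bound above --- so the single new hypothesis, (\ref{C2decrease}), is exactly what is needed and nothing more.
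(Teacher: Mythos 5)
Your treatment of the new ingredient $C_2$ is fine: the Duhamel identity, the reduction to $\varphi\in M$, and the Cook-type bound $\bigl\|\int_\tau^t e^{iAt'}C_2e^{-iBt'}\varphi\,dt'\bigr\|\le\int_\tau^t\|C_2e^{-iBt'}\varphi\|\,dt'$ are all correct, and this is indeed where hypothesis (\ref{C2decrease}) enters. (The paper instead carries $C_2$ through the same quadratic-form expansion it uses for $C_1$, using the first condition in (\ref{C2decrease}) to remove a boundary term at $a\to+\infty$; your direct route for $C_2$ is legitimate.)

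The gap is in the $C_1$ part, and it is not a matter of omitted bookkeeping: the argument you describe is not Pearson's argument, and it fails. You claim that $\sup_{t>\tau}\bigl\|\int_\tau^te^{iAt'}C_1e^{-iBt'}\varphi\,dt'\bigr\|\to0$ follows from the Plancherel bound on $t'\mapsto(\psi_n,e^{-iBt'}\varphi)$ plus ``a further Cauchy--Schwarz step in $t'$'' that ``absorbs'' $e^{iAt'}\phi_n$. But the scalar factor $(\psi_n,e^{-iBt'}\varphi)$ is only in $L^2(dt')$, not in $L^1(dt')$ (generically it decays like $t'^{-1/2}$), while $\|e^{iAt'}\phi_n\|\equiv1$; Cauchy--Schwarz therefore produces $\bigl(\int_\tau^t|(\psi_n,e^{-iBt'}\varphi)|^2dt'\bigr)^{1/2}(t-\tau)^{1/2}$, which is useless, and pairing with a test vector $\chi$ instead requires $t'\mapsto(\chi,e^{iAt'}\phi_n)$ to be in $L^2$, which is not available (and would be circular, since nothing is assumed about the spectral measure of $A$). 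Worse, the intermediate statement you assert is false as a standalone claim about an arbitrary trace-class $C_1$ and $\varphi\in{\mathcal M}(B)$: already for $A=0$ and rank-one $C_1=(\psi,\cdot)\phi$ it reduces to convergence of $\int_0^T\hat h(t')\,dt'$ for an arbitrary density $h\in L^1\cap L^2$, which is a partial-Fourier-integral limit that can diverge. This is precisely why trace-class scattering (Kato--Rosenblum, Pearson) cannot be proved by the Cook method. The actual proof in Reed--Simon, which the paper follows, expands $\|(W(t)-W(s))\varphi\|^2$, uses the intertwining relation $AJ-JB=C$ (which your $C_1$ step never invokes) to rewrite it as integrals over an auxiliary time $u$ of products of \emph{two} matrix elements of the form $(\cdot,e^{-iB(u+s)}\varphi)$ and $(\cdot,e^{-iBu}\varphi)$, both $L^2$ in $u$ by Plancherel with one carrying a shift by $s$ or $t$; only then does Cauchy--Schwarz give a tail of an $L^2$ function, which vanishes as $s,t\to+\infty$. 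You would need to restore that quadratic-form mechanism (as in the paper, inserting $C=C_1+C_2$ into Equation (20) of the Reed--Simon proof and treating the $C_2$-terms by the integrability hypothesis) for the proof to go through.
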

\begin{proof}
We describe only amendments needed to the proof in \cite{RSIV}. Let us denote $W(t)=e^{iAt}Je^{-iBt}$. Then
$$
W(t)-W(s) = i \int_s^t e^{i A \mu} C_1 e^{-i B \mu} d\mu + i \int_s^t e^{i A \mu} C_2 e^{-i B \mu} d\mu
$$
in the form sense (\ref{C_form_sense}) and for $\varphi \in M$,
\begin{eqnarray*}
(W(t)-W(s))e^{-iBa}\varphi = q_1(s,t,a) + q_2(s,t,a) \quad ,
\\
q_j(s,t,a) = i\int_s^t e^{i A \mu} C_j e^{-i B (\mu + a)} \varphi \, d\mu  \quad (j=1,2) \quad.
\end{eqnarray*}
Limits $a \to +\infty$ with fixed $s$, $t$ of the both integrals can be taken under the integral sign as there are constant majorants $\|C_j\| \|\varphi\|$ ($j=1,2$). The
both limits are zero (for $j=1$ by the Lemma above Theorem XI.7 of \cite{RSIV}, for $j=2$ by the assumption). Thus Equation (20) in the proof of Theorem XI.7 of \cite{RSIV} holds in our case.
The part containing trace class operator $C_1$ can be treated in the same way as in \cite{RSIV}. The part containing $C_2$ is bounded by a sum of several terms of the form ($X$ is an operator
uniformly bounded in $s$ and $t$)
$$
|(\varphi, \int_0^a e^{iBu} e^{iBt} X C_2 e^{-iBs} e^{-iBu} du \, \varphi)| \leq \|\varphi\| \|X\| \int_s^{s+a} \|C_2 e^{-iBu}\varphi \|\, du
$$
or
$$
|(\varphi, \int_0^a e^{iBu} e^{iBt} C_2^* X  e^{-iBs} e^{-iBu} du \, \varphi)| \leq \|\varphi\| \|X\| \int_t^{t+a} \|C_2 e^{-iBu}\|\, du
$$
which tend to zero as $a \to + \infty$ and then $t,s \to + \infty$ due to the second assumption (\ref{C2decrease}). Finally, we obtain
$$
\lim_{s,t \to +\infty} \| (W(t) - W(s)) \varphi \|^2 =0
$$
for $\varphi \in M \subset P_{ac}(B) {\mathcal H}$. As $M$ is dense in $P_{ac}(B) {\mathcal H}$ the existence of $\Omega_{out} (A,B;J)$ follows.
\end{proof}

Now the generalization of the Kuroda-Birman theorem can be proved.
\begin{theorem}
\label{Kuroda-Birman}
Let $A$ and $B$ be self-adjoint operators in a Hilbert space ${\mathcal H}$, $K$ a bounded operator in ${\mathcal H}$, $z$ a complex number in the resolvent sets of both $A$ and $B$,
\begin{equation}
\label{Cresolvent}
K(B-z)^{-1} - (A-z)^{-1}K = C_1 + C_2
\end{equation}
where $C_1 \in {\mathcal J}_1({\mathcal H})$ is a trace class operator and $C_2$ is a bounded operator in ${\mathcal H}$. Let there further exists  a dense set $M\subset {\mathcal M}(B)$
such that
\begin{equation}
\label{C2decrease2}
\lim_{t\to +\infty} C_2 e^{-iBt} \varphi = 0 \quad,\quad \int_0^{+\infty} \| C_2 e^{-iBt} \varphi \| \, dt < \infty
\end{equation}
for every $\varphi \in M$. Then there exists
$$
\Omega_{out} (A,B;K) = s-\lim_{t \to + \infty} e^{iAt} K e^{-iBt} P_{ac}(B) \quad.
$$
\end{theorem}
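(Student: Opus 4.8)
The plan is to derive the theorem from the generalized Pearson theorem (Theorem~\ref{Pearson}) proved just above. The operator $K$ cannot be used directly as the identification operator in Theorem~\ref{Pearson}, since $AK-KB$ is not a bounded sesquilinear form; the remedy is to conjugate by resolvents. So I first introduce the bounded operator
$$\tilde J:=(A-z)^{-1}K(B-z)^{-1}$$
(bounded because $z$ lies in the resolvent sets of both $A$ and $B$), and the first task is to show that $A\tilde J-\tilde JB=C_1+C_2$ in the form sense~(\ref{C_form_sense}).

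Establishing this identity is the computational core. For $\varphi\in{\mathcal D}(A)$ and $\psi\in{\mathcal D}(B)$ one expands $(A\varphi,\tilde J\psi)-(\varphi,\tilde JB\psi)$ using self-adjointness, i.e. $((A-z)^{-1})^{*}=(A-\bar z)^{-1}$ and $((B-z)^{-1})^{*}=(B-\bar z)^{-1}$, together with the elementary identities $(A-\bar z)^{-1}A\varphi=\varphi+\bar z(A-\bar z)^{-1}\varphi$ and $(B-z)^{-1}B\psi=\psi+z(B-z)^{-1}\psi$. The terms proportional to $(A-z)^{-1}K(B-z)^{-1}$ cancel, and what is left is $(\varphi,[K(B-z)^{-1}-(A-z)^{-1}K]\psi)$, which by the hypothesis~(\ref{Cresolvent}) equals $(\varphi,(C_1+C_2)\psi)$. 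Since the operators $C_1$ (with its trace-class property), $C_2$, the dense set $M\subset{\mathcal M}(B)$, and the decay conditions~(\ref{C2decrease2}) are untouched by this conjugation, all hypotheses of Theorem~\ref{Pearson} hold with $J=\tilde J$ and $C=C_1+C_2$, so $\Omega_{out}(A,B;\tilde J)=s-\lim_{t\to+\infty}e^{iAt}\tilde Je^{-iBt}P_{ac}(B)$ exists.

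It remains to pass from this ``resolvent-dressed'' wave operator to $\Omega_{out}(A,B;K)$, and I expect this to be the main obstacle. Since $\|e^{iAt}Ke^{-iBt}\|=\|K\|$, it suffices to show convergence of $e^{iAt}Ke^{-iBt}$ on a dense subset of $P_{ac}(B){\mathcal H}$, and $(B-z)^{-1}M$ is such a subset. As the resolvents commute with the free groups, for $\chi\in M$ one has $e^{iAt}Ke^{-iBt}(B-z)^{-1}\chi=(A-z)\bigl[e^{iAt}\tilde Je^{-iBt}\chi\bigr]$, where the bracket lies in ${\mathcal D}(A)$ and converges strongly (as $\chi\in P_{ac}(B){\mathcal H}$) to $\Omega_{out}(A,B;\tilde J)\chi$. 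Thus $e^{iAt}Ke^{-iBt}(B-z)^{-1}\chi$ is the image under the closed operator $A-z$ of a norm-convergent, uniformly bounded sequence, hence it converges weakly (and $\Omega_{out}(A,B;\tilde J)\chi\in{\mathcal D}(A)$). The decisive point is to upgrade this weak convergence to the strong convergence demanded by the statement: this can be done by a direct limiting argument, identifying the limit through the intertwining relation $A\,\Omega_{out}(A,B;\tilde J)=\Omega_{out}(A,B;\tilde J)\,B$ on ${\mathcal D}(B)\cap P_{ac}(B){\mathcal H}$, or --- when $z$ can be taken real and below $\inf(\sigma(A)\cup\sigma(B))$, as in the intended application --- through the invariance principle for the monotone function $\lambda\mapsto(\lambda-z)^{-1}$. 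It is this undressing of the resolvents, not the appeal to Pearson's theorem, that carries the real content.
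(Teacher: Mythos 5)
Your first two steps coincide with the paper's proof: the paper also sets $J=(A-z)^{-1}K(B-z)^{-1}$, verifies $AJ-JB=C_1+C_2$ in the form sense~(\ref{C_form_sense}), and invokes Theorem~\ref{Pearson} to obtain $\Omega_{out}(A,B;J)$. The gap is in your final ``undressing'' step, and you have correctly sensed that this is where the content lies --- but the route you sketch does not close. Writing $e^{iAt}Ke^{-iBt}(B-z)^{-1}\chi=(A-z)\bigl[e^{iAt}Je^{-iBt}\chi\bigr]$ and pushing the limit through the closed operator $A-z$ yields only weak convergence, as you admit; to upgrade to strong convergence you would need $\|Ke^{-iBt}(B-z)^{-1}\chi\|$ to converge to the norm of the weak limit, and nothing in the hypotheses provides that. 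The intertwining relation for $\Omega_{out}(A,B;J)$ identifies the candidate limit but does not produce norm convergence, and the invariance principle concerns replacing $A,B$ by $f(A),f(B)$ in the propagators --- it does not remove resolvent factors from the \emph{identification operator}, so it is not applicable here (and would in any case require its own proof in this generalized setting with the non-trace-class piece $C_2$).

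The device you are missing is that hypothesis~(\ref{Cresolvent}) must be used a \emph{second} time, now as an operator identity rather than merely as input to Pearson's theorem. Since $(A-z)^{-1}K=K(B-z)^{-1}-C_1-C_2$, one has for $\varphi\in M$
\begin{equation*}
e^{iAt}Ke^{-iBt}(B-z)^{-1}\varphi \;=\; e^{iAt}K(B-z)^{-1}e^{-iBt}\varphi \;=\; e^{iAt}(A-z)^{-1}Ke^{-iBt}\varphi + e^{iAt}(C_1+C_2)e^{-iBt}\varphi .
\end{equation*}
The first term is $e^{iAt}Je^{-iBt}$ applied to $(B-z)\psi$ for $\psi\in{\mathcal D}(B)$ (equivalently: it is uniformly bounded and converges on the dense set ${\mathcal D}(B)$, hence everywhere), so it converges strongly by Pearson; $\|C_1e^{-iBt}\varphi\|\to0$ because $C_1$ is trace class and $\varphi\in{\mathcal M}(B)$ (the compactness lemma preceding Theorem XI.7 of Reed--Simon); and $\|C_2e^{-iBt}\varphi\|\to0$ by assumption~(\ref{C2decrease2}). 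This gives \emph{strong} convergence of $e^{iAt}Ke^{-iBt}P_{ac}(B)$ on $(B-z)^{-1}M$, which is dense in $P_{ac}(B){\mathcal H}$, and uniform boundedness of $e^{iAt}Ke^{-iBt}$ finishes the proof --- with no need to commute an unbounded operator past a limit.
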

\begin{proof}
It is a very slight amendment of that for Theorem XI.9 of \cite{RSIV}. Denote
$$
J=(A-z)^{-1} K (B-z)^{-1} \quad.
$$
Then $AJ-JB = C_1+C_2$ in the form sense and
\begin{eqnarray*}
s-\lim_{t\to +\infty} e^{iAt}Je^{-iBt} P_{ac}(B) =
\\
s-\lim_{t\to +\infty} a^{iAt} (A-z)^{-1} K (B-z)^{-1} e^{-iBt} P_{ac}(B)
\end{eqnarray*}
exists by Theorem \ref{Pearson}. Applying that to $(B-z)\varphi$, $\varphi \in {\mathcal D}(B)$ the existence of
\begin{eqnarray*}
\lim_{t\to +\infty} e^{iAt} (A-z)^{-1} K e^{-iBt}P_{ab}(B) \varphi =
\\
\lim_{t\to +\infty} e^{iAt} ( K (B-z)^{-1} - C_1 -C_2) e^{-iBt} P_{ac}(B) \varphi
\end{eqnarray*}
follows.
The terms with $C_1$, $C_2$ here tends to zero according to the assumptions and Lemma 2 above Theorem XI.7 in \cite{RSIV} for $\varphi \in M$.
As $M$ is dense in $P_{ac}(B){\mathcal H}$,
$$
s-\lim_{t\to +\infty} e^{iAt} K e^{-iBt} P_{ac} (B-z)^{-1}
$$
exists and then also $\Omega_{out}(A,B;K)$ exists because ${\mathcal D}(B)$ is dense in ${\mathcal H}$.
\end{proof}
%
%
\vspace{1cm}

\noindent
{\bf Acknowledgements} \\
The author is indebted to P. Exner, who also called his attention to the problem, J. Behrndt and V. Lotoreichik for discussions. The work was supported by Czech Science Foundation
project No.\ 17-01706S and NPI CAS institutional support RVO 61389005.


%
\end{document}